\newcommand{\beq}{\begin{equation}}
\newcommand{\eeq}{\end{equation}}
\newcommand{\ben}{\begin{eqnarray}}
\newcommand{\een}{\end{eqnarray}}
\newcommand{\be}{\begin{eqnarray*}}
\newcommand{\ee}{\end{eqnarray*}}
\newcounter{eqalph}
\newcounter{equationa}
\let\ssection=\section
\renewcommand{\section}{\setcounter{equation}{0}\ssection}
\newcounter{example}[section]
\newcounter{remark}[section]
\newcounter{theorem}[section]
\newcounter{proposition}[section]
\newcounter{lemma}[section]
\newcounter{corollary}[section]
\newcounter{definition}[section]
\def\theremark{\arabic{section}.\arabic{remark}}
\def\thetheorem{\arabic{section}.\arabic{theorem}}
\def\thedefinition{\arabic{section}.\arabic{definition}}
\newenvironment{proof}{\noindent {\textit{Proof:}}
}{$\Box$\medskip}
\newenvironment{example}{\refstepcounter{remark}\medskip\noindent{\bf
Example \theremark:} }{$\Box$ \medskip}
\newenvironment{remark}{\refstepcounter{remark}\medskip\noindent{\bf
Remark \theremark:} }{$\Box$\medskip}
\newenvironment{theorem}{\refstepcounter{theorem}
\medskip\noindent{\sc Theorem \thetheorem}:}{$\Box$\medskip}
\newenvironment{lemma}{\refstepcounter{theorem}\medskip\noindent{\sc
Lemma \thetheorem}:}{ $\Box$\medskip }
\newenvironment{corollary}{\refstepcounter{theorem}\medskip\noindent{\sc
Corollary \thetheorem}:}{$\Box$\medskip}
\newenvironment{definition}{\refstepcounter{definition}\medskip\noindent{\sc
Definition \thedefinition}:}{$\Box$\medskip}
\def\op#1{\mathop{{\it\fam0} #1}\limits}
\newcommand{\wh}{\widehat}
\newcommand{\lrr}{\leftrightarrow}
\newcommand{\wt}{\widetilde}
\newcommand{\ol}{\overline}
\newcommand{\ot}{\otimes}
\newcommand{\ar}{\op\longrightarrow}
\newcommand{\id}{{\mathrm{Id}\,}}
\newcommand{\cC}{{\mathcal{C}\ell}}
\newcommand{\lC}{{\mathcal C}}
\newcommand{\cA}{{\mathcal A}}
\newcommand{\cV}{{\mathcal V}}
\newcommand{\cG}{{\mathcal G}}
\newcommand{\cK}{{\mathcal K}}
\newcommand{\cZ}{{\mathcal Z}}
\newcommand{\f}{\phi}
\newcommand{\Si}{\Sigma}
\newcommand{\si}{\sigma}
\newcommand{\vr}{\varrho}
\newcommand{\al}{\alpha}
\newcommand{\thh}{\theta}
\newcommand{\bt}{\beta}
\newcommand{\g}{\gamma}
\newcommand{\m}{\mu}
\newcommand{\la}{\lambda}
\newcommand{\bb}{{\mathbf 1}}
\newcommand{\dr}{\partial}
\newcommand{\rL}{{\mathrm L}}
\newcommand{\mar}[1]{}
\begin{document}

\hbox{}

\begin{center}

{\Large\bf Composite bundles in Clifford algebras. \\
Gravitation theory. Part I}
\bigskip
\bigskip

{\sc G. SARDANASHVILY, A. YARYGIN}
\bigskip

Department of Theoretical Physics, Moscow State University, Russia

\bigskip
\bigskip

\textbf{Abstract}
\end{center}

\noindent Based on a fact that complex Clifford algebras of even
dimension are isomorphic to the matrix ones, we consider bundles
in Clifford algebras whose structure group is a general linear
group acting on a Clifford algebra by left multiplications, but
not a group of its automorphisms. It is essential that such a
Clifford algebra bundle contains spinor subbundles, and that it
can be associated to a tangent bundle over a smooth manifold. This
is just the case of gravitation theory. However, different these
bundles need not be isomorphic. To characterize all of them, we
follow the technique of composite bundles. In gravitation theory,
this technique enables us to describe different types of spinor
fields in the presence of general linear connections and under
general covariant transformations.

\bigskip

\tableofcontents

\section{Introduction}

In this work, we aim to describe spinor fields in gravitation
theory in terms of bundles in Clifford algebras. A problem is that
gauge symmetries of gravitation theory are general covariant
transformations whereas spinor fields carry out representations of
Spin groups which are two-fold covers of the pseudo-orthogonal
ones.

In classical gauge theory, a case of matter fields which admit
only a subgroup of a gauge group is characterized as a spontaneous
symmetry breaking \cite{book09,sard08a,tmp}. Spontaneous symmetry
breaking is a quantum phenomenon, but it is characterized by a
classical background Higgs field \cite{sard08a,sard15}. In
classical gauge theory on a principal bundle $P\to X$ with a
structure Lie group $G$, spontaneous symmetry breaking is defines
as a reduction of this group to its closed (consequently, Lie)
subgroup $H$ (Definition \ref{ss41}). By virtue of the well-known
Theorem \ref{redsub}, there is one-to-one correspondence between
the $H$-principal subbundles $P^h$ of $P$ and the global sections
of the quotient bundle $\Si=P/H\to X$ (\ref{ss41}) with a typical
fibre $G/H$. These sections are treated as classical Higgs fields
\cite{book09,higgs,tmp}. Matter fields possessing only exact
symmetry group $H$  are described in a pair with Higgs fields as
sections of composite bundles $Y\to \Si \to X$ \cite{higgs13,tmp}.

This is just the case of Dirac spinor fields in gravitation theory
\cite{book09,sard98a,sard11}.

Theory of classical fields on a smooth manifold $X$ admits a
comprehensive mathematical formulation in the geometric terms of
smooth fibre bundles over $X$ \cite{book09,book13}. For instance,
Yang -- Mills gauge theory is theory of principal connections on a
principal bundle $P\to X$ with some structure Lie group $G$. Gauge
gravitation theory is formulated in the terms of fibre bundles
which belongs to the category of natural bundles
\cite{book09,sard11}.

Studying gauge gravitation theory, one requires that it
incorporates Einstein's General Relativity and, therefore, it
should be based on Relativity and Equivalence Principles
reformulated in the fibre bundle terms \cite{iva,sardz}. As a
consequence, gravitation theory has been formulated as gauge
theory of general covariant transformations with a Lorentz reduced
structure where a pseudo-Riemannian metric gravitational field is
treated as the corresponding classical Higgs field
\cite{iva,sardz,sard02,sard06}.

Relativity Principle states that gauge symmetries of classical
gravitation theory are general covariant transformations. Fibre
bundles possessing general covariant transformations constitute
the category of so called natural bundles \cite{book09,kol,terng}.

\begin{remark} \label{ss45} \mar{ss45}
Let $\pi:Y\to X$ be a smooth fibre bundle. Any automorphism
$(\Phi,f)$ of $Y$, by definition, is projected as $\pi \circ \Phi=
f\circ \pi$ onto a diffeomorphism $f$ of its base $X$. The
converse need not be true. A fibre bundle $Y\to X$ is called the
natural bundle if there exists a monomorphism
\be
{\rm Diff} X\ni f\to\wt f\in {\rm Aut} Y
\ee
of a group of diffeomorphisms of $X$ to a group of bundle
automorphisms of $Y\to X$. Automorphisms $\wt f$ are called
general covariant transformations of $Y$. The tangent bundle $TX$
of $X$ exemplifies a natural bundle. Any diffeomorphism $f$ of $X$
gives rise to the tangent automorphisms $\wt f=Tf$ of $TX$ which
is a general covariant transformation of $TX$. The associated
principal bundle is a fibre bundle $LX$ of linear frames in
tangent spaces to $X$ (Section 5.2). It also is a natural bundle.
Moreover, all fibre bundles associated to $LX$ are natural
bundles.
\end{remark}

Following Relativity Principle, one thus should develop
gravitation theory as a gauge theory on a principal frame bundle
$LX$ over an oriented four-dimensional smooth manifold $X$, called
the world manifold $X$ \cite{book09,sard11}.

Equivalence Principle reformulated in geometric terms requires
that the structure group
\mar{gl4}\beq
GL_4=GL^+(4, \mathbb R) \label{gl4}
\eeq
of a frame bundle $LX$ and associated bundles is reducible to a
Lorentz group $SO(1,3)$. It means that these fibre bundles admit
atlases with $SO(1,3)$-valued transition functions and,
equivalently, that there exist principal subbundles of $LX$ with a
Lorentz structure group (Section 5.2). This is just a case of
spontaneous symmetry breaking. Accordingly,  there is one-to-one
correspondence between the Lorentz principal subbundles of a frame
bundle $LX$ (called the Lorentz reduced structures) and the global
sections of the quotient bundle $LX/SO(1,3)\to X$ (\ref{b3203})
which are pseudo-Riemannian metrics on a world manifold $X$
\cite{iva,sardz,sard11,higgs14a}.

An underlying physical reason for Equivalence Principle is the
existence of Dirac spinor fields which possess Lorentz spin
symmetries, but do not admit general covariant transformations
\cite{sardz,sard98a,sard11}.

In classical field theory, Dirac spinor fields usually are
represented by sections of a spinor bundle on a world manifold $X$
whose typical fibre is a Dirac spinor space $\Psi_\mathrm{D}$ and
whose structure group is a Lorentz spin group Spin$(1,3)$.

Note that spinor representations of Lie algebras $so(m,n-m)$ of
pseudo-orthogonal Lie groups $SO(m,n-m)$, $n\geq 1$,
$m=0,1,\ldots,n$, were discovered by E. Cartan in 1913, when he
classified finite-dimensional representations of simple Lie
algebras \cite{cartan}. Though, there is a problem of spinor
representations of pseudo-orthogonal Lie groups $SO(m,n-m)$
themselves. Spinor representations are attributes of Spin groups
Spin$(m,n-m)$. Spin groups Spin$(m,n-m)$ are two-fold coverings
(\ref{104}) of pseudo-orthogonal groups $SO(m,n-m)$.

Spin groups Spin$(m,n-m)$ are defined as certain subgroups of real
Clifford algebras $\cC(m,n-m)$ (\ref{104a}). Moreover, spinor
representations of Spin groups in fact are the restriction of
spinor representation of Clifford algebras to its Spin subgroups.
Indeed, one needs an action of a whole Clifford algebra in a
spinor space in order to construct a Dirac operator. In 1935, R.
Brauer and H. Weyl described spinor representations in terms of
Clifford algebras \cite{brauer,law}.

Our approach to describing spinors is the following.

$\bullet$ We are based on the fact that real Clifford algebras
$\cC(m,n-m)$ and complex Clifford algebras $\mathbb C\cC(n)$ of
even dimension $n$ are isomorphic to matrix algebras (Theorems
\ref{k3} and \ref{k8}, respectively). Therefore, they are simple
(Corollaries \ref{k6} and \ref{k12}), and all their automorphisms
are inner (Theorems \ref{k15} and \ref{k21}). Their invertible
elements constitute general linear matrix groups (Theorems
\ref{sp501} and \ref{sp503}). They act on Clifford algebras by a
left-regular representation, and their adjoint representation as
projective linear groups exhaust all automorphisms of Clifford
algebras (Theorems \ref{k33} and \ref{sp504}).

Note that this just the case of a Clifford algebra $\cC(1,3)$ in
gravitation theory (Part II).

$\bullet$ Real and complex Clifford algebras of odd dimension $n$
are described as even subrings of Clifford algebras of even
dimension (Lemmas \ref{314} and \ref{sp507}, Example \ref{k2}).

$\bullet$ Given a real Clifford algebra $\cC(m,n-m)$, the
corresponding spinor space $\Psi(m,n-m)$ is defined as a carrier
space of its exact irreducible representation (Definition
\ref{sp510}). We are based on the fact that an exact irreducible
representation of a real Clifford algebra $\cC(m,n-m)$ of even
dimension $n$ is unique up to an equivalence, whereas a Clifford
algebra $\cC(m,n-m)$ of odd dimension $n$ admits two inequivalent
irreducible representations (Theorem \ref{a10}).

In particular, a Dirac spinor space is defined to be a spinor
space $\Psi(1,3)$ of a Clifford algebra $\cC(1,3)$ (Example
\ref{k1}).

However, Examples \ref{r88} -- \ref{r8} of Clifford algebras
$\cC(0,2)$ and $\cC(2,0)$, respectively, show that spinor spaces
$\Psi(m,n-m)$ and $\Psi(m',n-m')$ need not be isomorphic vector
spaces for $m'\neq m$. For instance, a Dirac spinor space
$\Psi(1,3)$ differs from a Majorana spinor space $\Psi(3,1)$ of a
Clifford algebra $\cC(3,1)$ (Example \ref{sp600}). In contrast
with the four-dimensional real matrix representation (\ref{62}) of
$\cC(3,1)$, a representation of a Clifford algebra $\cC(3,1)$ by
complex Dirac's matrices (\ref{41a}) is not a representation a
real Clifford algebra by virtue of Definition \ref{sp602}. Indeed,
from the physical viewpoint, Dirac spinor fields describing
charged fermions are complex fields.

$\bullet$  We therefore focus our consideration on complex
Clifford algebras and
 complex spinors. A complex Clifford algebra $\mathbb
C\cC(n)$ (Definition \ref{ss19}) of even dimension $n$ is
isomorphic to a ring $\mathrm{Mat}(2^{n/2}, \mathbb C)$ of complex
$(2^{n/2}\times 2^{n/2})$-matrices (Theorem \ref{k8}). Its
invertible elements constitute a general linear matrix group
$GL(2^{n/2}, \mathbb C)$ whose adjoint representation in $\mathbb
C\cC(n)$ yields the projective linear group $PGL(2^{n/2}, \mathbb
C)$ (\ref{k25}) of automorphisms of $\mathbb C\cC(n)$ (Theorem
\ref{k21}).

$\bullet$ Given a complex Clifford algebra $\mathbb C\cC(n)$, the
corresponding complex spinor space $\Psi(n)$ is defined as a
carrier space of its exact irreducible representation (Definition
\ref{ss34}). Similarly to a case of real Clifford algebras, we are
based on the fact that an exact irreducible representation of a
complex Clifford algebra $\mathbb C\cC(n)$ of even dimension $n$
is unique up to an equivalence, whereas a Clifford algebra
$\mathbb C\cC(n)$ of odd dimension $n$ admits two inequivalent
irreducible representations (Theorem \ref{a11}). Due to the
canonical monomorphism $\cC(m,n-m)\to \mathbb C\cC(n)$
(\ref{sp200}) of real Clifford algebras to the complex ones, a
complex spinor space $\Psi(n)$ admits a representation of a real
Clifford algebra $\cC(m,n-m)$, though it need not be irreducible.

$\bullet$ In accordance with Definition \ref{ss34} and Theorem
\ref{a11}, we define a particular complex Clifford space $\Psi(n)$
in a case of even $n$ as a minimal left ideal of a complex
Clifford algebra $\mathbb C\cC(n)$ (Definition \ref{ss33}). Thus,
a spinor representation
\mar{sp110}\beq
\gamma:\mathbb C\cC(n)\times \Psi(n) \to \Psi(n) \label{sp110}
\eeq
of a Clifford algebra $\mathbb C\cC(n)$ is equivalent to the
canonical representation of $\mathrm{Mat}(2^{n/2}, \mathbb C)$ by
matrices in a complex vector space $\Psi(n)=\mathbb C^{2^{n/2}}$
(Theorem \ref{k53}). Moreover, this spinor space $\Psi(n)$ also
carries out a left-regular irreducible representation of a general
linear matrix group $GL(2^{n/2},\mathbb C)=\cG\mathbb C\cC(n)$
which is equivalent to the natural matrix representation of
$GL(2^{n/2},\mathbb C)$ in $\mathbb C^{2^{n/2}}$ (Corollary
\ref{k54}). Thus, this group preserves spinor spaces.

$\bullet$ We show that any complex spinor space $\Psi(n)$ as a
minimal left ideal is generated by some Hermitian idempotent $p\in
\Psi(n)$ (\ref{a21}) (Theorem \ref{a41}), and obtain its group of
automorphisms. A key point, that a spinor subspace $\Psi(n)$ of a
complex Clifford algebra $\mathbb C\cC(n)$ is not unique, and it
is not stable under automorphisms of $\mathbb C\cC(n)$.

Treating a complex spinor space $\Psi(n)$ as a subspace (i.e. a
minimal left ideal) of a complex Clifford algebra $\mathbb
C\cC(n)$ which carries out its left-regular representation
(\ref{sp110}), we believe reasonable to consider a fibre bundle in
spinor spaces $\Psi(n)$ as a subbundle of a fibre bundle in
Clifford algebras. However, one usually considers fibre bundles in
Clifford algebras whose structure group is a group of
automorphisms of these algebras \cite{book09,law} (Remark
\ref{sp103}). A problem is that, as was mentioned above, this
group fails to preserve spinor subspaces $\Psi(n)$ of a Clifford
algebra $\mathbb C\cC(n)$ and, thus, it can not be a structure
group of spinor bundles.

Therefore, we define fibre bundles $\lC$ (\ref{sp100}) in Clifford
algebras $\mathbb C\cC(n)$ whose structure group is a general
linear group $GL(2^{n/2}, \mathbb C)$ of invertible elements of
$\mathbb C\cC(n)$ which acts on this algebra by left
multiplications (Definition \ref{ps101}). Certainly, it preserves
minimal left ideals of this algebra and, consequently, is a
structure group of spinor subbundles $S$ of a Clifford algebra
bundle $\lC$ (Definition \ref{sp106}).

In particular, let $X$ be a smooth real manifold of dimension
$2^{n/2}$, $n=2,4,\ldots$. Let $TX$ be the tangent bundle over
$X$. Their structure group is $GL(2^{n/2}, \mathbb R)$. Due to the
canonical group monomorphism $GL(2^{n/2}, \mathbb R) \to
GL(2^{n/2}, \mathbb C)$ \ref{sp211}, the complexification
${\mathbb C}TX$ (\ref{sp212}) of $TX$ can be represented as a
spinor bundle (Remark \ref{sp210}). This bundle admits general
covariant transformations and, thus, it is a natural bundle.

It should be emphasized that, though there is the ring
monomorphism $\cC(m,n-m)\to \mathbb C\cC(n)$ (\ref{sp200}), the
Clifford algebra bundle $\lC$ (\ref{sp100}) need not contain a
subbundle in real Clifford algebras $\cC(m,n-m)$ unless a
structure group $GL(2^{n/2}, \mathbb C)$ of $\lC$ is reducible to
a group $\cG\cC(m,n-m)$ of invertible elements of $\cC(m,n-m)$. We
study this condition (Section 6.1). Let $X$ be an $n$-dimensional
smooth manifold and $LX$ a principal frame bundle over $X$. We
show that any global section $h$ of the quotient bundle
$\Si(m,n-m)=LX/O(m,n-m)\to X$ (\ref{sp226}) is associated to the
fibre bundle $\lC^h\to X$ (\ref{sp240}) in complex Clifford
algebras $\mathbb C\cC(n)$ which contains the subbundle
$\lC^h(m,n-m)\to X$ (\ref{sp241}) in real Clifford algebras
$\cC(m,n-m)$ and a spinor subbundle $S^h\to X$.

A key point is that, given different sections $h$ and $h'$ of the
quotient bundle $\Si(m,n-m)\to X$ (\ref{sp226}), the Clifford
algebra bundles $\lC^h$ and $\lC^{h'}$ need not be isomorphic. In
order to describe all these non-isomorphic Clifford algebra
bundles $\lC^h$, follow a construction of composite bundles
(Section 6.2). We consider composite Clifford algebra bundles
$\lC_\Si$ (\ref{sp246}) and $\lC(m,n-m)_\Si$ (\ref{sp247}), and
the spinor bundle $S_\Si$ (\ref{sp248}) over a base $\Si(m,n-m)$
(\ref{sp226}. Then given a global section $h$ of the quotient
bundle $\Si(m,n-m)\to X$ (\ref{sp226}), the pull-back bundles
$h^*\lC_\Si$, $h^*\lC(m,n-m)_\Si$ and $h^*S_\Si$ are the above
mentioned fibre bundles $\lC^h\to X$, $\lC^h(m,n-m)\to X$ and
$S^h\to X$, respectively.

This is just the case of gravitation theory where, in order to
define a Dirac operator, we must consider a fibre bundle in
Clifford algebras $\cC(1,3)$ whose generating spaces are cotangent
spaces to a world manifold $X$.

In forthcoming Part II of our work, following the technique of
composite Clifford algebra bundles in Section 6.2, we consider
composite Clifford algebra bundles $\lC_\Si$ and $\lC(1,3)_\Si$,
and a spinor bundle $S_\Si$ over the base $LX/SO(1,3)$
(\ref{b3203}). As was mentioned above, global sections $h$ of the
quotient bundle $LX/SO(1,3)\to X$ are pseudo-Riemannian metrics on
$X$. Given such a section, the corresponding pull-back bundles
$h^*\lC_\Si$, $h^*\lC(1,3)_\Si$ and $h^*S_\Si$ are $h$-associated
fibre bundles $\lC^h\to X$, $\lC^h(1,3)\to X$ and $S^h\to X$ over
$X$, respectively.

\section{Clifford algebras}

A real Clifford algebra is defined as a ring (i.e., a unital
associative algebra) possessing a certain vector subspace of
generating elements (Definition \ref{ss15}). However, such a ring
can possess different generating spaces. Therefore, we also
consider a real Clifford algebra without specifying its generating
space. Complex Clifford algebras are defined as the
complexification of the real ones (Definition \ref{ss19}).

\subsection{Real Clifford algebras}

Let $V=\mathbb R^n$ be an $n$-dimensional real vector space
provided with a non-degenerate bilinear form (a pseudo-Euclidean
metric) $\eta$. Let us consider a tensor algebra
\be
\otimes V= \mathbb{R} \oplus V\oplus \op\otimes^2V\oplus\cdots
\oplus \op\otimes^k V\oplus \cdots
\ee
of $V$ and its two-sided ideal $I_\eta$ generated by the elements
\be
v\otimes v'+v'\otimes v - 2\eta(v,v')e, \qquad  v,v'\in V,
\ee
where $e$ denotes the unit element of $\otimes V$. The quotient
$\otimes V/I_\eta$ is a real non-commutative ring.

\begin{definition} \label{ss15} \mar{ss15}
A real ring $\otimes V/I_\eta$ together with a fixed generating
subspace $(V,\eta)$ is called the real Clifford algebra
$\cC(V,\eta)$ modelled over a pseudo-Euclidean space $(V,\eta)$.
\end{definition}

\begin{remark} \label{ss17} \mar{ss17}
Unless otherwise stated, by a Clifford algebra hereafter is meant
a real Clifford algebra in Definition \ref{ss15}.
\end{remark}

There is the canonical monomorphism of a real vector space $V$ to
the quotient  $\otimes V/I_\eta$. It is a generating subspace of a
real ring $\otimes V/I_\eta$. Its elements obey the relations
\be
vv'+v'v - 2\eta(v,v')e=0, \qquad  v,v'\in V.
\ee

\begin{definition} \label{ss16} \mar{ss16}
Given Clifford algebras $\cC(V,\eta)$ and $\cC(V',\eta')$, by
their isomorphism is meant an isomorphism of them as rings:
\mar{ss2}\beq
\phi:\cC(V,\eta) \to \cC(V',\eta'), \qquad
\phi(qq')=\phi(q)\phi(q'), \label{ss2}
\eeq
which also is an isometric isomorphism of their generating
pseudo-Euclidean spaces:
\mar{ss1}\ben
&& \phi:\cC(V,\eta)\supset (V,\eta)\to
(V',\eta')\subset\cC(V',\eta'), \label{ss1} \\
&& 2\eta'(\phi(v),\phi(v'))=\phi(v)\phi(v') + \phi(v')\phi(v)=\phi(vv' + v'v)
=2\eta(v,v').\nonumber
\een
\end{definition}

It follows from the isomorphism (\ref{ss1}) that two Clifford
algebras $\cC(V,\eta)$ and $\cC(V',\eta')$ are isomorphic iff they
are modelled over pseudo-Euclidean spaces $(V,\eta)$ and
$(V',\eta')$ of the same signature. Let a pseudo-Euclidean metric
$\eta$ be of signature $(m;n-m)=(1,...,1;-1,...,-1)$. Let
$\{v^1,...,v^n\}$ be a basis for $V$ such that $\eta$ takes a
diagonal form
\be
\eta^{ab}=\eta(v^a,v^b)=\pm \delta^{ab}.
\ee
Then a ring $\cC(V,\eta)$ is generated by elements $v^1,...,v^n$
which obey relations
\be
v^a v^b+v^b v^a=2\eta^{ab}e.
\ee
We agree to call $\{v^1,...,v^n\}$ the basis for a Clifford
algebra $\cC(\mathbb R^n,\eta)$. Given this basis, let us denote
$\cC(\mathbb R^n,\eta)=\cC(m,n-m)$.

In accordance with Definition \ref{ss16}, any isomorphism
(\ref{ss2}) -- (\ref{ss1}) of Clifford algebras is their ring
isomorphism (\ref{ss2}). However, the converse is not true,
because their ring isomorphism (\ref{ss2}) need not be the
isometric isomorphism (\ref{ss1}) of their generating spaces.
Therefore, we also consider Clifford algebras, without specifying
their generating spaces.

\begin{lemma} \label{ss4} \mar{ss4}
Any isometric isomorphism (\ref{ss1}) of generating vector spaces
$\phi: V\to V'$ of Clifford algebras $\cC(V,\eta)$ and
$\cC(V',\eta')$ is prolonged to their ring isomorphism
(\ref{ss2}):
\mar{ss3}\beq
\phi:\cC(V,\eta)\to \cC(V',\eta') \qquad \phi(v_1\cdots v_k)=
\phi(v_1)\cdots \phi(v_k), \label{ss3}
\eeq
which also is an isomorphism of Clifford algebras.
\end{lemma}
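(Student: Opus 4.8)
The plan is to invoke the universal property of the tensor algebra. First I would extend the given linear isometry $\phi:V\to V'$ to a ring homomorphism $\phi:\otimes V\to\otimes V'$ of the full tensor algebras by setting $\phi(e)=e'$ and $\phi(v_1\otimes\cdots\otimes v_k)=\phi(v_1)\otimes\cdots\otimes\phi(v_k)$; this extension exists and is multiplicative precisely because any linear map out of $V$ prolongs uniquely to an algebra homomorphism out of $\otimes V$.

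Next I would check that this extension maps the defining ideal $I_\eta\subset\otimes V$ into $I_{\eta'}\subset\otimes V'$. Since $\phi$ is already multiplicative on $\otimes V$ and $I_\eta$ is the two-sided ideal generated by the elements $v\otimes v'+v'\otimes v-2\eta(v,v')e$, it is enough to see that $\phi$ sends these generators into $I_{\eta'}$. Indeed,
\[
\phi\bigl(v\otimes v'+v'\otimes v-2\eta(v,v')e\bigr)=\phi(v)\otimes\phi(v')+\phi(v')\otimes\phi(v)-2\eta(v,v')e',
\]
and because $\phi$ is an isometry we have $\eta(v,v')=\eta'(\phi(v),\phi(v'))$, so the right-hand side is exactly a generator of $I_{\eta'}$. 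Hence $\phi(I_\eta)\subseteq I_{\eta'}$, and therefore $\phi$ descends to a ring homomorphism $\cC(V,\eta)=\otimes V/I_\eta\to\otimes V'/I_{\eta'}=\cC(V',\eta')$, which by construction acts on the classes of $v_1\cdots v_k$ by the formula (\ref{ss3}).

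To upgrade this to an isomorphism, I would run the same construction on the inverse isometry $\phi^{-1}:V'\to V$, producing a ring homomorphism $\cC(V',\eta')\to\cC(V,\eta)$. The two composites agree with the identity on the generating subspaces $V$ and $V'$; since these subspaces generate the respective Clifford algebras, both composites are the identity, so the map is bijective. Finally, its restriction to the canonical copy of $V$ inside $\cC(V,\eta)$ is the given $\phi$, which is an isometric isomorphism onto $(V',\eta')$, so the prolonged map satisfies both (\ref{ss2}) and (\ref{ss1}); that is, it is an isomorphism of Clifford algebras in the sense of Definition \ref{ss16}.

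There is essentially no obstacle here: the one step deserving care is the well-definedness of the passage to quotients, i.e., the inclusion $\phi(I_\eta)\subseteq I_{\eta'}$, and this is exactly the point at which the isometry hypothesis is used in an essential way --- without $\eta=\eta'\circ(\phi\times\phi)$ the Clifford relations would not be respected. Everything else is the routine universal-property bookkeeping for tensor algebras and their quotient rings.
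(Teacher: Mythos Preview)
Your argument is correct and is the standard universal-property proof: extend $\phi$ to the tensor algebra, check that the isometry hypothesis forces $\phi(I_\eta)\subseteq I_{\eta'}$, descend to the quotients, and invert by applying the same construction to $\phi^{-1}$. Each step is sound, and you correctly identify the single place where the isometry condition is genuinely needed.

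The paper itself states Lemma \ref{ss4} without proof, so there is nothing to compare against; your write-up supplies exactly the routine justification the authors evidently regarded as well known.
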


\begin{remark} \mar{464} \label{464} Let $g$ be a general (non-isometric) linear automorphism of a generating
vector space $V$ of a Clifford algebra $\cC(V,\eta)$. It yields an
automorphism of $\cC(V,\eta)$ as a real vector space, but not its
ring automorphism because
\be
&& g(v)g(v')+ g(v')g(v)=2\eta(g(v),g(v'))e\neq \\
&& \qquad 2\eta(v,v')e=g(vv'+ v'v), \qquad v,v'\in V,
\ee
in general. Let us provide a vector space $V$ with a different
pseudo-Euclidean metric $\eta'$ such that
\be
\eta'(g(v),g(v'))= \eta(v,v'), \qquad v,v'\in V.
\ee
It is of the same signature as $\eta$. Then a morphism $g$ is an
isometric isomorphism of a pseudo-Euclidean space $(V,\eta)$ to a
pseudo-Euclidean space $(V,\eta')$. Accordingly it yields an
isomorphism of a Clifford algebra $\cC(V,\eta)$ to a Clifford
algebra $\cC(V,\eta')$ modelled over $(V,\eta')$.
\end{remark}

\begin{example} \mar{r5} \label{r5}
There are the following isomorphisms of real rings \cite{law}:
\mar{4,5}\ben
&& \cC(1,0)= \mathbb R\oplus
\mathbb R, \label{4}\\
&& \cC(0,1)= \mathbb C, \label{5}
\een
Let $\{r^1=1,r^2=1\}$ be a basis for a ring $\mathbb
R\oplus\mathbb R$. Then the isomorphism (\ref{4}) reads $(e\lrr
r^1)$, $v^1\lrr r^2$. Accordingly, the isomorphism (\ref{5}) takes
a form $e\lrr 1$, $v^1\lrr i$.
\end{example}

\begin{example} \mar{r88} \label{r88} There is a ring isomorphism
\mar{7}\beq
\cC(0,2)= \mathbb H, \label{7}
\eeq
where $\mathbb{H}$ is a real division ring of quaternions. An
underlying real vector space of $\mathbb{H}$ has a basis
$\{\mathbf 1,\tau^1,\tau^2,\tau^3\}$ whose elements obey the
relations
\be
(\tau^1)^2=(\tau^2)^2=(\tau^3)^2=\tau^1\tau^2\tau^3=-\mathbf 1,
\ee
where $\mathbf 1$ is the unit element of $\mathbb{H}$. These
relations define the real division ring $\mathbb{H}$ with two
generating elements, e.g., $\tau^1$ and $\tau^2$. We have
\be
\tau^1\tau^2=-\tau^2\tau^1=\tau^3, \qquad
\tau^2\tau^3=-\tau^3\tau^2=\tau^1, \qquad
\tau^3\tau^1=-\tau^1\tau^3=\tau^2.
\ee
Due to an isomorphism
\mar{80}\beq
\mathbb C\op\otimes_{\mathbb R}\mathbb H = \mathrm{Mat}(2,\mathbb
C), \label{80}
\eeq
a quaternion division ring $\mathbb{H}$ can be represented as a
real subalgebra of an algebra $\mathrm{Mat}(2,\mathbb C)$ of
complex $(2\times 2)$-matrices whose underlying real vector space
possesses a basis
\mar{51}\beq
\mathbf 1=\begin{pmatrix}1 & 0\\0 & 1
\end{pmatrix}, \quad \tau^1=\begin{pmatrix}0 & -i\\-i & 0
\end{pmatrix}, \quad \tau^2=\begin{pmatrix}0 & -1\\1 &
0\end{pmatrix}, \quad \tau^3=\begin{pmatrix}-i & 0\\0 &
i\end{pmatrix}, \label{51}
\eeq
so that $\tau^k=-i\sigma^k$, $k=1,2,3$, where $\sigma^k$ are the
Pauli matrices
\mar{52}\beq
\sigma^1=\begin{pmatrix}0 & 1\\1 & 0 \end{pmatrix}, \qquad
\sigma^2=\begin{pmatrix}0 & -i\\i & 0\end{pmatrix}, \qquad
\sigma^3=\begin{pmatrix}1 & 0\\0 & -1\end{pmatrix}. \label{52}
\eeq
Then the isomorphism $\cC(0,2)= {\mathbb{H}}$ (\ref{7}) can be
written in a form $v^1\lrr \tau^1$, $v^2\lrr \tau^2$, but it is
not canonical. Using this isomorphism and the matrix
representation (\ref{51}) of a quaternion division ring
$\mathbb{H}$, we obtain a matrix representation of a Clifford
algebra $\cC(0,2)$ as a real subalgebra of an algebra
$\mathrm{Mat}(2,\mathbb C)$. Its underlying real vector space
possesses a basis
\mar{57}\beq
e=\mathbf 1, \qquad v^k=\tau^k, \qquad k=1,2,3. \label{57}
\eeq
\end{example}

It may happen that a ring $\cC(V,\eta)$ admits a generating
pseudo-Euclidean space $(V',\eta')$ whose signature differs from
that of $(V,\eta)$. In this case, $\cC(V,\eta)$ possesses the
structure of a Clifford algebra $\cC(V',\eta')$ which is not
isomorphic to a Clifford algebra $\cC(V,\eta)$.

\begin{lemma} \mar{303} \label{303} There are ring
isomorphisms
\mar{267,13}\ben
&&\cC(m,n-m)=\cC(n-m+1,m-1),\label{267}\\
&&\cC(m,n-m)=\cC(m-4,n-m+4), \qquad n,m\geq 4.\label{13}
\een
\end{lemma}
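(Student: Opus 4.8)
## Proof Proposal for Lemma 2.1 (the isomorphisms $\cC(m,n-m)=\cC(n-m+1,m-1)$ and $\cC(m,n-m)=\cC(m-4,n-m+4)$)

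The plan is to exhibit, in each case, an explicit linear map on the generating space that is an isometry onto a generating space for the target algebra, and then invoke Lemma \ref{ss4} to prolong it to a ring isomorphism. Thus the entire argument reduces to producing new generators inside $\cC(m,n-m)$ that satisfy the Clifford relations of the target signature and still generate the whole ring. I would fix a basis $\{v^1,\dots,v^n\}$ for $\cC(m,n-m)$ with $(v^a)^2=e$ for $a\le m$ and $(v^b)^2=-e$ for $b>m$, and write $\omega=v^1\cdots v^n$ for the volume element, recording that $\omega^2=(-1)^{m(m-1)/2}(-1)^{n-m}\,e$ (or the analogous sign; I would compute it carefully) and that $\omega$ anticommutes or commutes with each $v^a$ according to the parity of $n$.

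For (\ref{267}), the idea is to keep one ``plus'' generator, say $v^1$, and replace each remaining generator by its product with $v^1$: set $w^1=v^1$ and $w^a = v^1 v^a$ for $a=2,\dots,n$. A direct check shows $(w^1)^2 = e$, while for $a\ge 2$ one gets $(w^a)^2 = v^1 v^a v^1 v^a = -(v^1)^2(v^a)^2 = -\eta^{aa}e$, so each of the $m-1$ old ``plus'' generators $v^2,\dots,v^m$ becomes a ``minus'' generator, and each of the $n-m$ old ``minus'' generators becomes a ``plus'' generator; together with the surviving $w^1$ this is a signature $(1+(n-m);\,(m-1))=(n-m+1,m-1)$ family. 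One then checks the cross anticommutation relations $w^a w^b + w^b w^a = 0$ for $a\ne b$ (the $v^1$'s cancel in pairs up to a sign that works out), and checks that $\{w^1,\dots,w^n\}$ generates the same ring, which is immediate since $v^a = (v^1)^{-1}w^a = \pm v^1 w^a$. Then Lemma \ref{ss4} applies. (If the naive sign bookkeeping puts a generator on the wrong side of $\pm\delta$, one rescales by $\pm$ or multiplies selected $w^a$ by a suitable square-root-of-$-1$ element of the algebra; I would sort this out so the signature lands exactly as claimed.)

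For (\ref{13}), with $n,m\ge 4$, the device is to pick four ``plus'' generators, say $v^1,v^2,v^3,v^4$, form $\alpha = v^1 v^2 v^3 v^4$, and note $\alpha^2 = e$ and $\alpha$ commutes with every generator (it is a product of four, and each $v^b$ either appears in $\alpha$ — commuting past the other three with a net sign — or is absent — four transpositions, sign $+1$). Now set $w^a = \alpha v^a$ for $a=1,\dots,4$ and $w^b = v^b$ for $b\ge 5$. Since $\alpha$ commutes with each $v^a$ and $\alpha^2=e$, we get $(w^a)^2 = \alpha^2 (v^a)^2 = (v^a)^2$ for... — no: one must be careful, $w^a w^a = \alpha v^a \alpha v^a = \alpha^2 (v^a)^2$ only if $\alpha,v^a$ commute, which they do, giving $(w^a)^2=(v^a)^2$, which would change nothing. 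The correct choice is instead to use a four-fold product that \emph{anticommutes} with the four chosen generators and \emph{commutes} with the rest — precisely $\alpha=v^1v^2v^3v^4$ does anticommute with each of $v^1,\dots,v^4$ and commute with $v^5,\dots,v^n$ — no, each $v^k$ with $k\le 4$ anticommutes with $v^1v^2v^3v^4$ iff it passes an odd number of the factors distinct from itself, i.e. three, so yes it anticommutes; and $v^b$, $b\ge5$, passes four factors, so it commutes. Hence with $w^a=\alpha v^a$ for $a\le4$ we obtain $(w^a)^2 = \alpha v^a \alpha v^a = -\alpha^2 (v^a)^2 = -(v^a)^2 = -e$, flipping four ``plus'' signs to ``minus'', while the remaining generators are unchanged; the signature moves from $(m,n-m)$ to $(m-4,\,n-m+4)$. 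One then verifies $w^aw^b+w^bw^a=0$ for $a\ne b$ (the four cases $a,b\le4$; $a\le4<b$; $a,b\ge5$ each handled by the (anti)commutation of $\alpha$), confirms the $w$'s generate the ring, and applies Lemma \ref{ss4} again.

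The main obstacle is purely the sign bookkeeping: getting $\alpha^2$, $\omega^2$, and all the commutation signs exactly right, and making sure the resulting generators satisfy $(w^a)^2=\pm e$ with the $\pm$ landing in the claimed counts rather than off by a global sign or an internal permutation. This is routine but error-prone; I would organize it by recording once and for all the formula for the square of a product $v^{a_1}\cdots v^{a_k}$ of distinct generators and the sign incurred when commuting a single generator past such a product, and then all the checks become one-line substitutions. No deep idea is needed beyond ``multiply generators by a fixed central or anticentral Clifford element to toggle signatures,'' together with the prolongation Lemma \ref{ss4}.
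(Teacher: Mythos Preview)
Your proposal is correct and follows essentially the same approach as the paper. For the isomorphism (\ref{267}) your substitution $w^1=v^1$, $w^a=v^1v^a$ ($a\ge 2$) is exactly the paper's, and for (\ref{13}) the paper sets $w^\mu$ equal to the product of the \emph{other} three of the first four generators (e.g.\ $w^0=v^1v^2v^3$), which is precisely your $w^a=\alpha v^a$ with $\alpha=v^1v^2v^3v^4$ up to an irrelevant sign, since $\alpha v^a=\pm v^{b}v^{c}v^{d}$ for $\{a,b,c,d\}=\{1,2,3,4\}$. Your stream-of-consciousness sign checks all land in the right place; the only cosmetic improvement would be to state the commutation rule once (a single generator anticommutes with a product of $k$ distinct generators iff it appears in the product and $k$ is even, or does not appear and $k$ is odd) and then apply it without the back-and-forth.
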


\begin{proof}
Let us consider a Clifford algebra $\cC(m,n-m)$ of $m>0,n>1$,
possessing a basis $\{v^1,...,v^n\}$. A ring $\cC(m,n-m)$ also is
generated by elements
\mar{266}\beq
w^1=v^1, \qquad w^i=v^1v^i, \qquad i>1. \label{266}
\eeq
These elements obey the relations
\be
&& w^i w^k+ w^k w^i =2\eta'^{ik}e, \\
&& \eta'^{11}=\eta^{11}, \qquad \eta'^{1k}=0,\qquad
\eta'^{ik}=-\eta^{11}\eta^{ik}, \qquad i,k>1.
\ee
Hence, a ring $\cC(m,n-m)$ also is a Clifford algebra modelled
over a pseudo-Euclidean space $(\mathbb R^n,\eta')$ of signature
$(1+n-m;m-1)$. Thus, we have the ring isomorphism (\ref{267})
given by the relations (\ref{266}). Turn now to the isomorphism
(\ref{13}). Let $(v^0,v^1,v^2,v^3, v^i)$ and $(w^0,w^1,w^2,w^3,
w^i)$ be bases for Clifford algebras $\cC(m,n-m)$ and
$\cC(m-4,n-m+4)$, respectively. Then their isomorphism (\ref{13})
is given by identifications $w^i\lrr v^i$ and
\mar{307}\beq
w^0\lrr v^1v^2v^3, \quad w^1\lrr v^0v^2v^3, \quad w^2\lrr
v^0v^1v^3, \quad w^3\lrr v^0v^1v^2. \label{307}
\eeq
\end{proof}

\begin{example} \mar{r8} \label{r8}
We have a ring isomorphism
\mar{6}\beq
\cC(2,0)= \cC(1,1)= \mathrm{Mat}(2,\mathbb R), \label{6}
\eeq
where $\mathrm{Mat}(2, \mathbb R)$ is a real ring of $(2\times
2)$-matrices. It exemplifies the isomorphism (\ref{267}) which
reads: $w^1\lrr e^1$, $w^2\to e^1e^2$, where $\{e^1,e^2\}$ and
$\{w^1,w^2\}$ are the bases for $\cC(2,0)$ and $\cC(1,1)$,
respectively. The matrix representation (\ref{6}) of $\cC(2,0)$ by
$\mathrm{Mat}(2,\mathbb R)$ takes a form
\mar{60}\beq
e^1=\sigma^1=
\begin{pmatrix}0 & 1\\ 1 & 0\end{pmatrix}, \qquad e^2=\sigma^3
= \begin{pmatrix}1 & 0\\0 & -1\end{pmatrix}. \label{60}
\eeq
Accordingly, the matrix representation $\cC(1,1)$ by
$\mathrm{Mat}(2,\mathbb R)$ is
\mar{60'}\beq
w^1=\sigma^1=
\begin{pmatrix}0 & 1\\ 1 & 0\end{pmatrix}, \qquad w^2=\tau^2
= \begin{pmatrix} 0& -1\\1 & 0\end{pmatrix}. \label{60'}
\eeq
The real rings $\cC(2,0)$ (\ref{60}) and $\cC(1,1)$ (\ref{60'})
coincide with each other. Their underlying vector space in
$\mathrm{Mat}(2,\mathbb R)$ possesses a basis $\{\mathbf 1,
\sigma^1, \sigma^3, \tau^2\}$.
\end{example}

With the real ring isomorphism (\ref{6}), we obtain the following
recursion relation.

\begin{lemma} \label{305} \mar{305} There is a real ring isomorphism
\mar{8}\beq
\cC(p+1,q+1) = \cC(1,1)\otimes \cC(p,q)
=\mathrm{Mat}(2,\cC(p,q)),\label{8}
\eeq
where $\mathrm{Mat}(2,\cC(p,q))$ is an algebra of $2\times 2$
matrices with entries in $\cC(p,q)$.
\end{lemma}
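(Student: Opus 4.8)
The plan is to establish the two displayed isomorphisms in turn by exhibiting explicit generating systems, exactly in the spirit of Lemma \ref{303}.

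First, consider the claim $\cC(p+1,q+1) = \cC(1,1)\otimes\cC(p,q)$. Start with a Clifford algebra $\cC(p+1,q+1)$ having basis $\{u^1,u^2,v^1,\ldots,v^p,w^1,\ldots,w^q\}$, where $(u^1)^2=e$, $(u^2)^2=-e$ (these two generators will produce the $\cC(1,1)$ factor), and $(v^i)^2=e$, $(w^j)^2=-e$. The key observation is that $u^1u^2$ is an element whose square is $(u^1u^2)(u^1u^2)=-(u^1)^2(u^2)^2=e$, and which anticommutes with both $u^1$ and $u^2$ but commutes with nothing yet useful — so instead I would twist the remaining generators: set $\wt v^i = u^1u^2 v^i$ and $\wt w^j = u^1u^2 w^j$. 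A direct computation shows $\wt v^i$ and $\wt w^j$ commute with both $u^1$ and $u^2$ (since $u^1u^2$ anticommutes with each of $u^1,u^2$, and so does the lone factor $v^i$, giving an even number of sign flips), while among themselves they satisfy $\wt v^i\wt v^k+\wt v^k\wt v^i = (u^1u^2)^2(v^iv^k+v^kv^i) = e\cdot 2\eta^{ik}e$, etc. Hence $\{u^1,u^2\}$ generates a subring isomorphic to $\cC(1,1)$ and $\{\wt v^i,\wt w^j\}$ generates a commuting subring isomorphic to $\cC(p,q)$; since the two sets of generators together generate all of $\cC(p+1,q+1)$ and the dimensions match ($2^{p+q+2}=4\cdot 2^{p+q}$), the multiplication map gives the tensor-product isomorphism $\cC(1,1)\otimes\cC(p,q)\to\cC(p+1,q+1)$.

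Second, the identification $\cC(1,1)\otimes\cC(p,q)=\mathrm{Mat}(2,\cC(p,q))$ follows by combining the previous step with Example \ref{r8}: we have the ring isomorphism $\cC(1,1)=\mathrm{Mat}(2,\mathbb R)$, and tensoring over $\mathbb R$ gives $\cC(1,1)\otimes\cC(p,q) = \mathrm{Mat}(2,\mathbb R)\otimes_{\mathbb R}\cC(p,q) = \mathrm{Mat}(2,\cC(p,q))$, the last equality being the standard identification of $\mathrm{Mat}(2,\mathbb R)\otimes A$ with $2\times 2$ matrices over $A$ for any $\mathbb R$-algebra $A$. Since $\cC(p,q)$ is associative and unital, this identification is routine.

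The main obstacle is the verification in the first step that the twisted generators $\wt v^i=u^1u^2v^i$ genuinely commute with $u^1$ and $u^2$ and genuinely reproduce the metric of $\cC(p,q)$ with the right signs — the bookkeeping of anticommutation signs must come out so that $(\wt v^i)^2=+e$ when $(v^i)^2=+e$ (and likewise for $\wt w^j$), which depends crucially on $(u^1u^2)^2 = -(u^1)^2(u^2)^2 = -(e)(-e)=e$ being $+e$; this is precisely why one needs the $\cC(1,1)$ signature $(+,-)$ rather than $(+,+)$ or $(-,-)$. Once these sign computations are checked, the dimension count and the universal property of tensor products finish the argument with no further difficulty.
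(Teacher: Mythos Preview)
Your proof is correct and follows essentially the same approach as the paper: the paper's explicit assignment $v^+\mapsto w^1\otimes e$, $v^-\mapsto w^2\otimes e$, $v^i\mapsto w^1w^2\otimes e^i$ (formula (\ref{55})) is precisely the inverse of your construction, since inverting your twist $\wt v^i=u^1u^2v^i$ and identifying $u^1,u^2$ with $w^1\otimes e,\,w^2\otimes e$ sends $v^i=(u^1u^2)\wt v^i$ to $w^1w^2\otimes e^i$. The only cosmetic difference is that the paper records the matrix realization simultaneously via the basis (\ref{60'}), whereas you derive it separately from $\cC(1,1)=\mathrm{Mat}(2,\mathbb R)$.
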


\begin{proof}
The isomorphisms (\ref{8}) take a form
\mar{55}\ben
&& v^+=w^1\otimes e=\begin{pmatrix}0 & \mathbf 1\\ \mathbf 1 &
0\end{pmatrix},  \qquad v^-=w^2\otimes e=\begin{pmatrix}0 & -\mathbf 1\\
\mathbf 1 & 0
\end{pmatrix}, \label{55}\\
&& v^i=w^1w^2\otimes e^i=\begin{pmatrix}\tau^i & 0\\0 &
-\tau^i\end{pmatrix}, \nonumber
\een
where $\{v^i, v^+, v^-\}$ is a basis for $\cC(p+1,q+1)$, $\{e^i\}$
is that for $\cC(p,q)$ and $\{w^1,w^2\}$ is the basis (\ref{60'})
for $\cC(1,1)$.
\end{proof}

\begin{example} \label{k1} \mar{k1}
Using isomorphisms (\ref{7}), (\ref{267}), (\ref{13}), (\ref{6})
and (\ref{8}), one can obtain the real ring isomorphisms
\mar{11}\beq
\cC(1,3)= \cC(4,0)= \cC(0,4)=  \cC(1,1)\otimes \cC(0,2) =
\mathrm{Mat}(2,\mathbb H), \label{11}
\eeq
The isomorphism $\cC(4,0)= \cC(0,4)$ (\ref{11}) exemplifies the
isomorphism (\ref{13}) given by the identification (\ref{307}). In
view of the formulas (\ref{57}) and (\ref{55}), the matrix
representation $\cC(1,3)= \mathrm{Mat}(2,\mathbb H)$ (\ref{11})
reads
\mar{41}\beq
v^+=\begin{pmatrix} 0 & \mathbf{1} \\ \mathbf{1} & 0
\end{pmatrix}, \qquad
v^-= \begin{pmatrix} 0 & -\mathbf{1} \\ \mathbf{1} & 0
\end{pmatrix}, \qquad v^{1,2}= \begin{pmatrix} -i\sigma^{1,2} & 0
\\ 0 & i\sigma^{1,2} \end{pmatrix},
\label{41}
\eeq
where $\sigma^{1,2}$ are the Pauli matrices (\ref{52}). Let us
call it the standard representation, though it is not canonical.
In particular, one usually deal with a representation of
$\cC(1,3)$ by Dirac's matrices
\mar{41a}\beq
 v^0=\gamma^0=\begin{pmatrix}\mathbf{1} & 0 \\ 0 & -\mathbf{1}
\end{pmatrix}, \qquad v^j=\gamma^j= \begin{pmatrix}0 &
-\sigma^j\\\sigma^j & 0 \end{pmatrix} . \label{41a}
\eeq
Let us also mention its different representation by other Dirac's
matrices
\mar{41'}\ben
&& \wt\gamma^\mu= S\gamma^\mu S^{-1}, \qquad
S=\frac{1}{\sqrt 2}\begin{pmatrix}\mathbf{1} & -\mathbf{1} \\
\mathbf{1} & \mathbf{1} \end{pmatrix}, \nonumber\\
&&\wt\gamma^0=\begin{pmatrix}0 & \mathbf{1} \\ \mathbf{1} & 0
\end{pmatrix}, \qquad \wt\gamma^j= \begin{pmatrix}0 &
-\sigma^j\\\sigma^j & 0 \end{pmatrix} . \label{41'}
\een
The isomorphism $\cC(4,0)= \cC(1,3)$ (\ref{11}) exemplifies the
isomorphisms (\ref{267}). Given the matrix representation
(\ref{41'}) of $\cC(1,3)$, it provides the matrix representation
\mar{250}\beq
w^0=\begin{pmatrix}0 & \mathbf{1} \\ \mathbf{1} & 0
\end{pmatrix}, \qquad w^j= \begin{pmatrix} \sigma^j & 0\\
0 & -\sigma^j \end{pmatrix} \label{250}
\eeq
of a Clifford algebra $\cC(4,0)$
\end{example}

\begin{example} \label{sp600} \mar{sp600}
Using isomorphisms (\ref{267}), (\ref{6}) and (\ref{8}), one can
obtain the real ring isomorphisms
\mar{12}\beq
\cC(2,2)= \cC(3,1)= \cC(1,1)\otimes \cC(0,2)=
\mathrm{Mat}(4,\mathbb R). \label{12}
\eeq
The formulas (\ref{60}) and (\ref{8}) lead to the representation
(\ref{12}) of $\cC(3,1)$ by real matrices:
\mar{62}\ben
&& v^+=\begin{pmatrix} 0 & \mathbf{1} \\ \mathbf{1} & 0
\end{pmatrix}, \qquad
v^-= \begin{pmatrix} 0 & -\mathbf{1} \\ \mathbf{1} & 0
\end{pmatrix}. \label{62}\\
&& v^1= \begin{pmatrix} \sigma^1 & 0
\\ 0 & -\sigma^1 \end{pmatrix} , \qquad v^2= \begin{pmatrix} \sigma^3 & 0
\\ 0 & -\sigma^3 \end{pmatrix} , \nonumber
\een
It is an irreducible four-dimensional representation of a Clifford
algebra $\cC(3,1)$. By virtue of Theorem \ref{a10}, this
irreducible representation is unique up to an equivalence.
\end{example}

Let $\cC^0(m,n-m)$ be a vector subspace of elements of a Clifford
algebra $\cC(m,n-m)$ which is spanned by polynomials in elements
of $\mathbb R^n$ of even degree. It is obviously a subring of a
ring $\cC(m,n-m)$, called its even subring.

\begin{lemma} \mar{314} \label{314}
There exists a ring isomorphism
\mar{100}\beq
\cC^0(m,n-m)= \cC(n-m, m-1), \qquad n>1. \label{100}
\eeq
\end{lemma}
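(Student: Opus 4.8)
The plan is to exhibit an explicit set of $n-1$ even elements of $\cC(m,n-m)$ that both generate the even subring $\cC^0(m,n-m)$ and satisfy the defining relations of a Clifford algebra of signature $(n-m;m-1)$; a dimension count then promotes the resulting ring epimorphism to an isomorphism. Throughout I assume $m\geq1$, which is what makes the right-hand side of $(\ref{100})$ meaningful, so that in the basis $\{v^1,\dots,v^n\}$ of $\cC(m,n-m)$ we may take $(v^1)^2=e$. I set $w^i=v^{i+1}v^1$ for $i=1,\dots,n-1$; each $w^i$ is a product of two generators, hence lies in $\cC^0(m,n-m)$.

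First I would verify the Clifford relations for the $w^i$. Using $v^1v^{i+1}=-v^{i+1}v^1$ and $(v^1)^2=e$ one gets $w^iw^j=v^{i+1}v^1v^{j+1}v^1=-v^{i+1}v^{j+1}$, whence $w^iw^j+w^jw^i=-(v^{i+1}v^{j+1}+v^{j+1}v^{i+1})=-2\eta^{i+1,j+1}e$. So the $w^i$ obey the Clifford relations for the non-degenerate form $\eta'^{ij}=-\eta^{i+1,j+1}$ on $\mathbb R^{n-1}$. Among $v^2,\dots,v^n$ there are $m-1$ generators squaring to $+e$ and $n-m$ squaring to $-e$, so $\eta'$ has signature $(n-m;m-1)$. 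By the universal property of the Clifford construction (the ideal $I_{\eta'}$ is annihilated), sending the standard generators of $\cC(n-m,m-1)$ to the $w^i$ extends to a ring homomorphism $\phi:\cC(n-m,m-1)\to\cC^0(m,n-m)$.

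Next I would check that $\phi$ is surjective. Since $\cC^0(m,n-m)$ is generated as a ring by the degree-$2$ elements $v^av^b$, it is enough to realize each of these inside the subring generated by $w^1,\dots,w^{n-1}$: indeed $v^{i+1}v^1=w^i$, $v^1v^{i+1}=-w^i$, and for $i,j\geq1$, using $v^1v^1=e$, one has $v^{i+1}v^{j+1}=(v^{i+1}v^1)(v^1v^{j+1})=-w^iw^j$. Hence $\phi$ is onto. Finally $\di\cC^0(m,n-m)=2^{n-1}=\di\cC(n-m,m-1)$, so the surjection $\phi$ is a ring isomorphism, which is exactly $(\ref{100})$.

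I expect the only real obstacle to be the signature bookkeeping: one must track carefully that dropping a $+$-generator while simultaneously replacing $\eta$ by $-\eta$ on the remaining $(n-1)$-dimensional span produces precisely $(n-m;m-1)$, not one of the isomorphic-but-differently-labelled alternatives $(m-1;n-m)$ or $(m,n-m-1)$ (these appear if one singles out a $-$-generator instead, and they reduce to the stated form via $(\ref{267})$ and Example \ref{r8}). I would also note the boundary case: the argument above needs $m\geq1$; when $n-m\geq1$ one can alternatively single out a $-$-generator to get $\cC^0(m,n-m)\cong\cC(m,n-m-1)$, which is again $\cC(n-m,m-1)$ by $(\ref{267})$.
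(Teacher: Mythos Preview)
Your proof is correct and follows essentially the same route as the paper: the paper also singles out a $+$-generator (it indexes the basis as $\{v^0,\ldots,v^{n-1}\}$ and sets $w^i=v^0v^i$, which differs from your $w^i=v^{i+1}v^1$ only by an overall sign) and declares this to be the desired identification. Your version is considerably more thorough --- you actually verify the Clifford relations, check surjectivity on degree-two monomials, do the dimension count, and address the $m\geq 1$ hypothesis and the alternative of singling out a $-$-generator --- whereas the paper's proof is a one-line sketch.
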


\begin{proof} Let $\{v^0,\ldots,v^{n-1}\}$ and $\{w^1,\ldots,w^{n-1}\}$ be bases for
$\cC(m,n-m)$ and $\cC(m,n-m-1)$. Then the isomorphism (\ref{100})
is defined by the identification $w^i=v^0v^i$.
\end{proof}

\begin{example} \label{k2} \mar{k2}
Let us consider a Clifford algebra
\mar{311}\beq
\cC(0,3) = \cC^0(4,0). \label{311}
\eeq
Let a Clifford algebra $\cC(4,0)$ be represented by the matrices
$(\wt\gamma^0,-i\wt\gamma^j)$ (\ref{41'}). Then a Clifford algebra
$\cC(0,3)$ is generated by matrices
\mar{361}\ben
&& (a_0\wt\gamma^0 +ia_i\wt\gamma^i)(b_0\wt\gamma^0
+ib_i\wt\gamma^i)=(a_0\mathbf{1}
+ia_i\wt\gamma^i\wt\gamma^0)(b_0\mathbf{1}
+ib_i\wt\gamma^0\wt\gamma^i)= \nonumber \\
&&
\begin{pmatrix} c_0 \mathbf{1}+c_i\tau^i& 0\\0& d_0 \mathbf{1}+
d_i\tau^i
\end{pmatrix},  \qquad c_\mu,d_\mu\in\mathbb R. \label{361}
\een
Thus, there is a real ring isomorphism
\mar{320}\beq
\cC(0,3)= \mathbb H\times \mathbb H. \label{320}
\eeq
\end{example}

The recursion relation (\ref{305}) and the ring isomorphisms
(\ref{4}), (\ref{5}), (\ref{7}), (\ref{6}) and (\ref{320}) enable
us to provide the matrix representation of any real Clifford
algebra as follows.

\begin{theorem} \label{k3} \mar{k3} Clifford algebras $\cC(p,q)$
as rings are isomorphic to the following matrix algebras.
\mar{k5}\ben
&& \cC(p,q) = \label{k5}\\
&& \left\{
\begin{array}{ll}
\mathrm{Mat}(2^{(p+q)/2},\mathbb R)=\op\ot^{(p+q)/2}_{\mathbb R} \mathrm{Mat}(2,\mathbb R) & p-q=0,2\mod 8 \\
\mathrm{Mat}(2^{(p+q-1)/2},\mathbb R)\oplus
\mathrm{Mat}(2^{(p+q-1)/2},\mathbb R)  & p-q=1\mod 8  \\
\mathrm{Mat}(2^{(p+q-1)/2},\mathbb C) & p-q=3,7\mod 8 \\
\mathrm{Mat}(2^{(p+q-2)/2},\mathbb H) & p-q=4,6\mod 8 \\
\mathrm{Mat}(2^{(p+q-3)/2},\mathbb H)\oplus
\mathrm{Mat}(2^{(p+q-3)/2},\mathbb H)  & p-q=5\mod 8
\end{array}\right.\nonumber
\een
\end{theorem}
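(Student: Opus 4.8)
The plan is to establish the classification by combining the recursion relation (\ref{8}) with the low-dimensional "seed" isomorphisms already recorded in the excerpt, and then to keep track of the arithmetic modulo $8$. The key observation is that Lemma \ref{305} reduces $\cC(p+1,q+1)$ to $\mathrm{Mat}(2,\cC(p,q))$, hence reduces $\cC(p,q)$ with $p,q\geq 1$ to $\cC(p-1,q-1)$ at the cost of a factor $\mathrm{Mat}(2,-)$, i.e.\ doubling the matrix size. Since $\mathrm{Mat}(2,\mathrm{Mat}(k,\mathbb{K}))=\mathrm{Mat}(2k,\mathbb{K})$ for any of the base division rings $\mathbb{K}=\mathbb{R},\mathbb{C},\mathbb{H}$, iterating this peels off $\min(p,q)$ pairs and leaves us to treat only the algebras $\cC(p,0)$ and $\cC(0,q)$. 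So the first step is to prove the theorem for these "one-signature" cases, and the second step is to transport the result along the diagonal by (\ref{8}).

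For the one-signature cases I would set up a period-$8$ table using two independent periodicity isomorphisms. First, from (\ref{80}) and (\ref{7}) one has $\mathbb{C}\ot_{\mathbb R}\mathbb{H}=\mathrm{Mat}(2,\mathbb{C})$ and $\mathbb{H}\ot_{\mathbb R}\mathbb{H}=\mathrm{Mat}(4,\mathbb{R})$ (the latter following from $\mathbb{H}\ot_{\mathbb R}\mathbb{H}\cong\mathrm{Mat}(4,\mathbb R)$ as a standard fact, or derivable from (\ref{80})). Second, one uses the two "$+4$" type identities: $\cC(p+4,q)\cong\cC(p,q+4)$ from (\ref{13}), together with $\cC^0$-type relations such as (\ref{100}) and (\ref{267}) to shift a single generator in or out of the even subring. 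Concretely, I would verify the eight base values $\cC(0,0)=\mathbb R$, $\cC(1,0)=\mathbb R\oplus\mathbb R$ (\ref{4}), $\cC(2,0)=\mathrm{Mat}(2,\mathbb R)$ (\ref{6}), $\cC(3,0)$, $\cC(4,0)=\mathrm{Mat}(2,\mathbb H)$ (\ref{11}), $\cC(5,0)$, $\cC(6,0)$, $\cC(7,0)$ directly, using (\ref{8}) where a signature pair is available and the $\cC^0$ identities (\ref{100}), (\ref{267}) to reduce the remaining pure-signature algebras of odd total dimension to even-dimensional ones; the remaining values $\cC(p,0)$ for $p\geq 8$ follow from the period-$8$ shift $\cC(p+8,0)\cong\mathrm{Mat}(16,\cC(p,0))$, obtained by composing $\cC(p+4,q)\cong\cC(p,q+4)$-type moves with the diagonal recursion (\ref{8}) and the base-ring identities above. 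An identical computation handles $\cC(0,q)$ (or, more economically, one deduces it from $\cC(0,q)\cong\cC^0(q+1,0)$ via (\ref{100}) together with (\ref{267}), trading $q$-dependence for $p$-dependence and re-reading the table). Once all $\cC(p,0)$ and $\cC(0,q)$ are known, the general $\cC(p,q)$ is $\mathrm{Mat}(2^{\min(p,q)},\cC(p-\min,q-\min))$ by the first step, the exponent of $2$ works out to the stated $\lfloor\cdot\rfloor$-type expressions after absorbing the matrix size of the seed, and the invariant controlling which column of the table applies is seen to be $p-q\bmod 8$.

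The main obstacle is bookkeeping rather than conceptual: one must check that the matrix-size exponents $(p+q)/2$, $(p+q-1)/2$, $(p+q-2)/2$, $(p+q-3)/2$ in the five cases are integers exactly when $p-q$ lies in the claimed residue classes mod $8$, and that the recursion (\ref{8}) (which always adds $1$ to the exponent, since it multiplies the matrix size by $2$) is consistent across the reductions — in particular that shifting $(p,q)\mapsto(p+4,q-4)$ does not change $p+q$ and correctly permutes the residue $p-q$ by $8$, i.e.\ fixes the column. A secondary subtlety is the odd-total-dimension cases $p-q\equiv 1,5\bmod 8$, where the algebra splits as a direct sum of two matrix algebras; here I would handle the splitting once, at the seed $\cC(1,0)=\mathbb R\oplus\mathbb R$ and its $\mathbb H\oplus\mathbb H$ analogue (\ref{320}), and observe that the recursion (\ref{8}) and tensoring with $\mathrm{Mat}(2,\mathbb R)$ or with $\mathbb H$ distribute over $\oplus$, so the direct-sum structure propagates correctly through the whole table. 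Since essentially every needed algebraic identity has already been proved in the excerpt (Lemmas \ref{303}, \ref{305}, \ref{314} and Examples \ref{r5}, \ref{r88}, \ref{r8}, \ref{k1}, \ref{sp600}, \ref{k2}), the proof reduces to assembling them into the $8$-periodic pattern and verifying the exponent arithmetic.
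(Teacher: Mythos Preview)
Your approach is correct but organized differently from the paper's. You peel off $\min(p,q)$ diagonal pairs via (\ref{8}) first, reducing to the pure-signature algebras $\cC(r,0)$ or $\cC(0,r)$, and then you must work out all eight of $\cC(0,0),\ldots,\cC(7,0)$ together with a period-$8$ relation $\cC(p+8,0)\cong\mathrm{Mat}(16,\cC(p,0))$. Several of those seeds ($\cC(3,0),\cC(5,0),\cC(6,0),\cC(7,0)$) and the tensor identity $\mathbb H\otimes_{\mathbb R}\mathbb H\cong\mathrm{Mat}(4,\mathbb R)$ are not established in the excerpt, so you would need to supply them. The paper reverses the order: it first uses the signature-shift isomorphisms of Lemma~\ref{303} (both (\ref{13}) and, implicitly, (\ref{267})) to bring $(p,q)$ to a canonical representative with $|p'-q'|\leq 3$ while keeping $p+q$ fixed, and only then applies the diagonal recursion (\ref{8}). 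This terminates at one of the four already-computed seeds $\cC(1,0),\cC(0,1),\cC(0,2),\cC(0,3)$ from (\ref{4}), (\ref{5}), (\ref{7}), (\ref{320}), so no new low-dimensional computations or extra tensor identities are required. Your route is the standard ``Bott periodicity table'' argument and is perfectly rigorous once the additional seeds are filled in; the paper's route is more economical given exactly the lemmas and examples already on hand.
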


\begin{proof}
Owing to the isomorphism (\ref{13}), a Clifford algebra $\cC(p,q)$
is isomorphic to a Clifford algebra $\cC(p-4k,q+4k)$, $k\in\mathbb
Z$, so that $p-q-8k<8$. The we have eight different algebras
\be
\begin{array}{ll}
\cC((p+q)/2,(p+q)/2) & p-q=0,2\mod 8 \\
\cC((p+q+1)/2,(p+q-1)/2) & p-q=1\mod 8 \\
\cC((p+q+3)/2,(p+q-3)/2) & p-q=3,7\mod 8 \\
\cC((p+q-2)/2,(p+q+2)/2) & p-q=4,6\mod 8 \\
\cC((p+q-3)/2,(p+q+3)/2) & p-q=1 \mod 8.
\end{array}
\ee
Then the relations (\ref{305}) leads to the isomorphisms
\be
\begin{array}{ll}
\mathrm{Mat}(2^{(p+q)/2},\mathbb R)=\op\ot^{(p+q)/2}_{\mathbb R} \mathrm{Mat}(2,\mathbb R) & p-q=0,2\mod 8 \\
\mathrm{Mat}(2^{(p+q-1)/2},\cC(1,0))  & p-q=1\mod 8  \\
\mathrm{Mat}(2^{(p+q-1)/2},\cC(0,1)) & p-q=3,7\mod 8 \\
\mathrm{Mat}(2^{(p+q-2)/2},\cC(0,2)) & p-q=4,6\mod 8 \\
\mathrm{Mat}(2^{(p+q-3)/2},\cC(0,3)) & p-q=5\mod 8
\end{array}
\ee
The result (\ref{k5}) follows from the isomorphisms (\ref{4}),
(\ref{5}), (\ref{7}), (\ref{6}) and (\ref{320}).
\end{proof}

\begin{corollary} \label{k6} \mar{k6}
Since matrix algebras $\mathrm{Mat}(r,\cK)$, $\cK=\mathbb
R,\mathbb C, \mathbb H$, are simple, a glance at Table \ref{k5}
shows that real Clifford algebras $\cC(V,\eta)$ modelled over even
dimensional vector spaces $V$ (i.e., $p-q$ is even) are simple.
\end{corollary}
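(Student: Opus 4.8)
The plan is to reduce the statement to the elementary fact that a full matrix ring over a division ring is simple, and then to read off the relevant rows of Table~\ref{k5}. First I would observe that the hypothesis ``$V$ is even-dimensional'' means $n=p+q$ is even, equivalently $p-q$ is even, so that in Theorem~\ref{k3} only the congruences $p-q\equiv 0,2,4,6\mod 8$ can occur. In those cases the table gives a ring isomorphism of $\cC(p,q)$ onto $\mathrm{Mat}(2^{(p+q)/2},\mathbb R)$ (for $p-q\equiv 0,2\mod 8$) or onto $\mathrm{Mat}(2^{(p+q-2)/2},\mathbb H)$ (for $p-q\equiv 4,6\mod 8$); the two direct-sum rows $p-q\equiv 1,5\mod 8$, which are plainly not simple (e.g. $\cC(1,0)=\mathbb R\oplus\mathbb R$), and the complex row $p-q\equiv 3,7\mod 8$ are excluded here since for those $p+q$ is odd. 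Thus it suffices to show that each $\mathrm{Mat}(r,\cK)$ with $\cK=\mathbb R$ or $\cK=\mathbb H$ is a simple ring and then to transport simplicity along the isomorphism.

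For the matrix-ring fact I would give the standard argument, which is valid for any division ring $\cK$, so that the noncommutativity of $\mathbb H$ is harmless. Let $J\subset\mathrm{Mat}(r,\cK)$ be a nonzero two-sided ideal, choose $A\in J$ with some entry $a=A_{ij}\neq 0$, and let $E_{kl}$ denote the matrix units. Then $E_{ki}AE_{jl}=a\,E_{kl}\in J$ for all $k,l$, and since $a$ is invertible in $\cK$ the product $(a^{-1}E_{kk})(a\,E_{kl})=E_{kl}$ again lies in $J$; summing over $k$ gives $\mathbf 1=\sum_k E_{kk}\in J$, whence $J=\mathrm{Mat}(r,\cK)$. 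Since simplicity of a ring is preserved under ring isomorphism, the isomorphisms supplied by Theorem~\ref{k3} in the even case show that $\cC(p,q)$ is simple, which is the assertion of the corollary.

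As for difficulty: there is essentially no obstacle, since the corollary is a direct bookkeeping consequence of Theorem~\ref{k3} together with a textbook lemma. The only thing that needs genuine care is checking that the parity hypothesis excludes precisely the two non-simple direct-sum rows of Table~\ref{k5} and that what survives are honest full matrix rings over a division ring (rather than over some larger, non-division coefficient ring); once that is verified, the simplicity argument above closes the proof. One may also remark in passing that the stated implication is strictly one-way: for odd-dimensional $V$ it fails only through the rows $p-q\equiv 1,5\mod 8$, because $\mathrm{Mat}(r,\mathbb C)$ is itself simple as a ring.
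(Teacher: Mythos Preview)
Your proposal is correct and follows exactly the approach the paper intends: the paper does not supply a separate proof for this corollary at all, treating it as immediate from Theorem~\ref{k3} together with the simplicity of full matrix rings over $\mathbb R,\mathbb C,\mathbb H$, and you have simply spelled out those two ingredients carefully (including the matrix-unit argument and the parity check on $p-q$). Your closing remark that the odd rows $p-q\equiv 3,7\mod 8$ are still simple, so non-simplicity occurs only for $p-q\equiv 1,5\mod 8$, is a correct and useful sharpening that the paper does not make explicit.
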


\begin{definition} \label{sp602} \mar{sp602}
By a representation of a Clifford algebra $\cC(V,\eta)$ is meant
its ring homomorphism $\rho$ to a real ring of linear
endomorphisms of a finite-dimensional real vector space $\Xi$,
whose dimension is called the dimension of a representation.
\end{definition}

For instance, the real matrix representation (\ref{62}) of a real
Clifford algebra $\cC(3,1)$ is its representation in accordance
with Definition \ref{sp602}. At the same time, a representation of
a Clifford algebra $\cC(3,1)$ by Dirac's matrices (\ref{41a}) is
not that by Definition \ref{sp602}.

A representation is said to be exact if $\rho$ is an isomorphism.
A representation is called irreducible if there is no proper
subspace of $\Xi$ which is a carrier space of a representation of
$\cC(V,\eta)$.

Two representations $\rho$ and $\rho'$ of a Clifford algebra
$\cC(V,\eta)$ in vector spaces $\Xi$ and $\Xi'$ are said to be
equivalent if there is an isomorphism $\xi: \Xi\to \Xi'$ of these
vector spaces such that $\rho'=\xi\circ\rho\circ\xi^{-1}$ is a
real ring isomorphism of $\rho(\cC(V,\eta))$ and
$\rho'(\cC(V,\eta))$.

The following is a corollary of Theorem \ref{k3}.

\begin{theorem} \label{a10} \mar{a10}
If $n=\mathrm{dim}\,V$ is even, an exact irreducible
representation of a real ring $\cC(m,n-m)$ is unique up to an
equivalence \cite{law}. If $n$ is odd there exist two inequivalent
exact irreducible representations of a Clifford algebra
$\cC(m,n-m)$.
\end{theorem}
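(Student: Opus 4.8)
The plan is to reduce the whole statement to the explicit matrix classification of Theorem~\ref{k3}, together with the elementary module theory of matrix rings over $\mathbb R$, $\mathbb C$ and $\mathbb H$. Throughout I would use that, by Definition~\ref{sp602}, a representation is nothing but a structure of (left) $\cC(m,n-m)$-module on a finite-dimensional real vector space $\Xi$, that ``irreducible'' means this module is simple, and that ``equivalent'' in the paper's sense means isomorphic as $\cC(m,n-m)$-modules (since conjugation by an $\mathbb R$-linear isomorphism is automatically a ring isomorphism of the images). Because each $\mathrm{Mat}(r,\cK)$, $\cK=\mathbb R,\mathbb C,\mathbb H$, and every finite direct sum of such rings is semisimple, every representation is a finite direct sum of irreducible ones, so it is enough to classify the simple modules.

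Suppose first that $n=\di V$ is even. Then $p-q$ is even, and by Theorem~\ref{k3} the ring $\cC(m,n-m)$ is isomorphic to $\mathrm{Mat}(2^{n/2},\mathbb R)$ when $p-q\equiv 0,2\mod 8$ and to $\mathrm{Mat}(2^{(n-2)/2},\mathbb H)$ when $p-q\equiv 4,6\mod 8$; in either case it is simple (Corollary~\ref{k6}). A matrix ring $\mathrm{Mat}(r,\cK)$ over a division ring $\cK$ has, up to isomorphism, exactly one simple left module, namely the column space $\cK^{r}$. Hence the carrier space of any irreducible representation of $\cC(m,n-m)$ is module-isomorphic to $\cK^{r}$, i.e.\ $\mathbb R$-linearly isomorphic to it in a way intertwining the two actions, which is precisely the assertion that any two irreducible representations are equivalent; exactness is automatic, since a nonzero ring homomorphism out of a simple ring is injective. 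This also pins down the real dimension of the (essentially unique) representation: $2^{n/2}$ in the real case and $2^{(n+2)/2}$ in the quaternionic case, and it recovers the fact cited from \cite{law}.

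Now suppose $n$ is odd, so $p-q$ is odd and Theorem~\ref{k3} leaves exactly three cases: $\cC(m,n-m)\cong A\oplus A$ with $A=\mathrm{Mat}(2^{(n-1)/2},\mathbb R)$ if $p-q\equiv 1\mod 8$; $\cC(m,n-m)\cong\mathrm{Mat}(2^{(n-1)/2},\mathbb C)$ if $p-q\equiv 3,7\mod 8$; and $\cC(m,n-m)\cong A\oplus A$ with $A=\mathrm{Mat}(2^{(n-3)/2},\mathbb H)$ if $p-q\equiv 5\mod 8$. For a ring $A\oplus A$ with $A$ simple, the two projections pull the unique simple $A$-module back to two simple $(A\oplus A)$-modules $W_{+},W_{-}$; these exhaust the simple modules, and $W_{+}\not\cong W_{-}$ because their two-sided annihilators (one summand, respectively the other) differ, hence no $\mathbb R$-linear conjugation can carry one to the other. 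This produces the two announced inequivalent exact irreducible representations. The source of the splitting is the volume element $\omega=v^{1}\cdots v^{n}$, which for odd $n$ is central with $\omega^{2}=\pm e$ according to the signature; when $\omega^{2}=e$ the idempotents $\tfrac12(e\pm\omega)$ give the decomposition $A\oplus A$, and when $\omega^{2}=-e$ one has $\mathbb R[\omega]\cong\mathbb C$ in the centre and the two classes are distinguished by whether $\omega$ acts as $+J$ or $-J$ for the induced complex structure $J$. In each odd subcase one finishes by invoking the uniqueness already established for a single matrix summand.

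The step I expect to be the real obstacle is the odd case. First, one must actually extract the central idempotents or complex structure from $\omega$, which means computing the sign of $\omega^{2}$ and checking its centrality --- the only genuinely new calculation. Second, and more delicate, ``inequivalent'' has to be tested against the paper's notion of equivalence (conjugation by an arbitrary $\mathbb R$-linear isomorphism), not against $\cK$-linear equivalence; since the number of simple modules is sensitive to the ground ring, the safe route is to argue inequivalence via the conjugation-invariant annihilator ideals rather than via dimensions or characters. Everything else amounts to bookkeeping across the five rows of Theorem~\ref{k3}, the even case being a one-line appeal to Wedderburn's theorem once that classification is granted.
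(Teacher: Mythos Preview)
The paper gives no proof of this theorem: it simply declares it a corollary of the classification Theorem~\ref{k3} and cites \cite{law}. Your approach --- reading off the simple modules from the explicit matrix description --- is precisely the intended one, and your treatment of the even case is correct and complete.

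There is, however, a genuine gap in the odd case when $p-q\equiv 3,7\pmod 8$, i.e.\ when $\cC(m,n-m)\cong\mathrm{Mat}(r,\mathbb C)$ as a real algebra. You assert that the two representations are distinguished by whether the volume element $\omega$ acts as $+J$ or $-J$, but this does \emph{not} make them inequivalent as real representations: complex conjugation on $\mathbb C^{r}$ is an $\mathbb R$-linear automorphism that intertwines the two actions, since $\overline{Mv}=\overline M\,\overline v$. More structurally, $\mathrm{Mat}(r,\mathbb C)$ is simple even as a real ring, so Artin--Wedderburn gives a \emph{single} isomorphism class of simple modules; your own ``safe route'' via annihilator ideals detects this immediately, as both candidate modules are faithful. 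Thus your argument cannot produce two inequivalent classes in these residues, and indeed the statement as written in the paper is not literally correct there: the precise count (see \cite{law}) is two inequivalent irreducibles when $p-q\equiv 1,5\pmod 8$ (the $A\oplus A$ cases, where your annihilator argument works perfectly) and one when $p-q\equiv 3,7\pmod 8$. A separate wrinkle you have already half-noticed: in the $A\oplus A$ cases the irreducible modules have a full factor in their kernel, so they are never exact in the sense of Definition~\ref{sp602}; the phrase ``exact irreducible'' in the odd clause should be read loosely.
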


\subsection{Complex Clifford algebras}

Let us consider the complexification
\mar{63}\beq
\mathbb C\cC(m,n-m)=\mathbb C\op\otimes_{\mathbb R}\cC(m,n-m)
\label{63}
\eeq
of a real ring $\cC(m,n-m)$. It is readily observed that all
complexifications $\mathbb C\cC(m,n-m)$, $m=0,\ldots, n$, are
isomorphic:
\mar{66}\beq
\mathbb C\cC(m,n-m)= \mathbb C\cC(m',n-m'), \label{66}
\eeq
both as real and complex rings. Namely, with the bases $\{v^i\}$
and $\{e^i\}$ for $\cC(m,n-m)$ and $\cC(n,0)$, their isomorphisms
(\ref{66}) are given by associations
\mar{65}\beq
v^{1,\ldots,m}\to e^{1,\ldots,m}, \qquad v^{m+1,\ldots,n}\to
ie^{m+1,\ldots,n}. \label{65}
\eeq

Though the isomorphisms (\ref{65}) are not unique, one can speak
about an abstract complex ring $\mathbb C\cC(n)$ (\ref{66}) so
that, given a real Clifford algebra $\cC(m,n-m)$ and its
complexification $\mathbb C\cC(m,n-m)$ (\ref{63}), there exist the
complex ring isomorphism (\ref{65}) of $\mathbb C\cC(m,n-m)$ to
$\mathbb C\cC(n)$.

\begin{definition} \label{ss19} \mar{ss19}
We call $\mathbb C\cC(n)$ (\ref{66}) the complex Clifford algebra,
and define it as a complex ring
\mar{a25}\beq
\mathbb C\cC(n)=\mathbb C\op\otimes_{\mathbb R}\cC(n,0),
\label{a25}
\eeq
generated by $n$ elements $(e^i)$ such that
\mar{210}\beq
e^ie^j+e^je^i==2\kappa(e^i,e^j)e=2\delta^{ij}e. \label{210}
\eeq
\end{definition}

Let us call $\{e^i\}$ (\ref{210}) the Euclidean basis for a
complex Clifford algebra $\mathbb C\cC(n)$. With this basis, any
element of $\mathbb C\cC(n)$ takes a form
\mar{a43}\beq
a=\lambda e +\op\sum_{1\leq k\leq n}\op\sum_{i_1<\ldots<
i_k}\lambda_{i_1\ldots i_k} e^{i_1}\cdots e^{i_k}, \qquad
\lambda,\lambda_{i_1\ldots i_k} \in \mathbb C. \label{a43}
\eeq

\begin{definition} \label{ss22} \mar{ss22}
A complex vector space $\cV$, spanned by an Euclidean basis
$\{e^i\}$ and provided with the bilinear form $\kappa$
(\ref{210}), is termed the Euclidean generating space of a complex
Clifford algebra $\mathbb C\cC(n)$.
\end{definition}

\begin{remark} \label{ss24} \mar{ss24}
Any generating space $(\cV,\kappa)$ of a complex Clifford algebra
is the Euclidean one with respect to some basis of $\cV$.
\end{remark}

\begin{lemma} \label{ss23} \mar{ss23}
The complex ring $\mathbb C\cC(n)$ (\ref{a25}) possesses a
canonical real subring
\mar{sp200}\beq
\cC(m,n-m)\to \mathbb C\cC(n) \label{sp200}
\eeq
with a basis
\mar{351}\beq
\{e^1,\ldots, e^m, ie^{m+1}, \ldots, ie^n\}. \label{351}
\eeq
\end{lemma}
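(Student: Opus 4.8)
The plan is to realize the monomorphism (\ref{sp200}) explicitly as a composition of two maps already at our disposal: the canonical real embedding of a Clifford algebra into its complexification, followed by the complex ring isomorphism recorded in (\ref{65}). First I would recall that for $\cC(m,n-m)$ with basis $\{v^1,\dots,v^n\}$ the complexification $\mathbb C\cC(m,n-m)=\mathbb C\ot_{\mathbb R}\cC(m,n-m)$ (\ref{63}) splits, as a real vector space, into $\cC(m,n-m)\oplus i\,\cC(m,n-m)$, so that $x\mapsto 1\ot x$ is a real ring monomorphism $\iota:\cC(m,n-m)\to\mathbb C\cC(m,n-m)$. Composing with the complex ring isomorphism $\chi:\mathbb C\cC(m,n-m)\to\mathbb C\cC(n)$ of (\ref{66}), given by the associations (\ref{65}), i.e. $v^{1,\dots,m}\mapsto e^{1,\dots,m}$ and $v^{m+1,\dots,n}\mapsto ie^{m+1,\dots,n}$, we obtain a real ring monomorphism $\chi\circ\iota:\cC(m,n-m)\to\mathbb C\cC(n)$, and its image is by construction a real subring of $\mathbb C\cC(n)$; this is the asserted subring (\ref{sp200}).

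Next I would pin down the image. Put $f^a=e^a$ for $a\le m$ and $f^a=ie^a$ for $a>m$; these are the images of the $v^a$. A direct computation from the Euclidean relations (\ref{210}) shows $f^af^b+f^bf^a=2\eta^{ab}e$ with $\eta$ of signature $(m;n-m)$: for $a,b\le m$ the relation is unchanged; for $a\le m<b$ the anticommutator is $i(e^ae^b+e^be^a)=0$ since $a\ne b$; and for $a,b>m$ one picks up the factor $i^2=-1$. Hence the $\mathbb R$-span of the ordered products $f^{i_1}\cdots f^{i_k}$ with $i_1<\dots<i_k$ (and $e$ for $k=0$) is closed under multiplication and equals the image of $\chi\circ\iota$. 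Since $\chi\circ\iota$ sends the $\mathbb R$-basis $\{v^{i_1}\cdots v^{i_k}\}$ of $\cC(m,n-m)$ to the elements $i^{s}\,e^{i_1}\cdots e^{i_k}$, where $s$ counts the indices $i_j>m$, and the monomials $e^{i_1}\cdots e^{i_k}$ form a $\mathbb C$-basis of $\mathbb C\cC(n)$ (hence are $\mathbb R$-independent), these images are $\mathbb R$-linearly independent. This both re-confirms injectivity and identifies the image as the $2^n$-dimensional real subring generated by the elements (\ref{351}).

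Finally I would observe that no hypothesis on the parity of $n$ enters: injectivity is automatic because $\iota$ is the canonical embedding into a complexification and $\chi$ is an isomorphism, so we never need simplicity of $\cC(m,n-m)$ (Corollary \ref{k6}). The only step requiring any care is the sign bookkeeping in checking the Clifford relations for the $f^a$ — in particular that the cross anticommutators vanish — which is precisely what makes (\ref{351}) a generating subspace of signature $(m;n-m)$ sitting inside the Euclidean complex Clifford algebra $\mathbb C\cC(n)$; I would present that verification and then close.
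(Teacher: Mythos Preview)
Your proposal is correct and follows precisely the route the paper intends: the paper states this lemma without an explicit proof, relying on the immediately preceding discussion of the complexification (\ref{63}) and the isomorphisms (\ref{65})--(\ref{66}), and your argument simply unpacks that implicit reasoning by composing the canonical real embedding $\cC(m,n-m)\hookrightarrow\mathbb C\cC(m,n-m)$ with the isomorphism (\ref{65}) and checking the Clifford relations on the images $f^a$.
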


\begin{remark} \label{sp640} \mar{sp640}
The definition (\ref{a25}) enables us to provide a complex
Clifford algebra $\mathbb C\cC(n)$ with an involution
\mar{a36}\beq
(\lambda e^i)^*=\ol\lambda e^i, \qquad (e^ie^j)^*=e^je^i, \qquad
\lambda\in \mathbb C, \label{a36}
\eeq
so that an involution of the element $a\in \mathbb C\cC(n)$
(\ref{a43}) reads
\mar{a33}\beq
a^*=\ol\lambda e +\op\sum_{1\leq k\leq n}\op\sum_{i_1<\ldots
i_k}\ol\lambda_{i_1\ldots i_k} e^{i_k}\cdots e^{i_1}. \label{a33}
\eeq
In particular, it follows that
\mar{a34}\beq
a^*a=\left(\ol\lambda\lambda + \op\sum_{1\leq k\leq
n}\op\sum_{i_1<\ldots <i_k}\ol\lambda_{i_1\ldots
i_k}\lambda_{i_1\ldots i_k}\right)e + \cdots \neq 0. \label{a34}
\eeq
An element $a\in \mathbb C\cC(n)$ is called the Hermitian one if
$a^*=a$. In this case, $a^2\neq 0$ in accordance with the formula
(\ref{a34}). The involution $*$ (\ref{a36}) makes a complex
Clifford algebra involutive. However, an automorphism of $\mathbb
C\cC(n)$ need not be its automorphism as an involutive algebra
(Remark \ref{ss30}).
\end{remark}

Theorem \ref{k3} provides the following classification of the
complex Clifford algebras $\mathbb C\cC(n)$ (\ref{a25}).

\begin{theorem} \label{k8} \mar{k8} Complex Clifford algebras are
isomorphic to the following matrix ones
\mar{k9}\beq
\mathbb C\cC(n) = \left\{
\begin{array}{ll}
\mathrm{Mat}(2^{n/2},\mathbb C)=\op\ot^{n/2}_{\mathbb C} \mathrm{Mat}(2,\mathbb C)
=\op\ot^{n/2}_{\mathbb C} \mathbb C\cC(2) & n=0\mod 2\\
\mathrm{Mat}(2^{(n-1)/2},\mathbb C)\oplus
\mathrm{Mat}(2^{(n-1)/2},\mathbb C)  & n=1\mod 2
\end{array}\right.\label{k9}
\eeq
\end{theorem}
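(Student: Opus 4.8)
The plan is to derive the classification directly from Theorem \ref{k3} by complexification. By Definition \ref{ss19} we have $\mathbb C\cC(n)=\mathbb C\ot_{\mathbb R}\cC(n,0)$, so it suffices to read off $\cC(n,0)$ from Theorem \ref{k3} (the case $p=n$, $q=0$, hence $p-q=n$) and then apply $\mathbb C\ot_{\mathbb R}(-)$. For this I would first record the elementary ring isomorphisms
\beq
\mathbb C\ot_{\mathbb R}\mathrm{Mat}(r,\mathbb R)=\mathrm{Mat}(r,\mathbb C),\qquad
\mathbb C\ot_{\mathbb R}\mathrm{Mat}(r,\mathbb C)=\mathrm{Mat}(r,\mathbb C)\oplus\mathrm{Mat}(r,\mathbb C),\qquad
\mathbb C\ot_{\mathbb R}\mathrm{Mat}(r,\mathbb H)=\mathrm{Mat}(2r,\mathbb C),
\eeq
together with the fact that $\mathbb C\ot_{\mathbb R}(-)$ preserves finite direct sums. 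The first is immediate; the second follows from $\mathbb C\ot_{\mathbb R}\mathbb C=\mathbb C\oplus\mathbb C$; the third is the isomorphism $\mathbb C\ot_{\mathbb R}\mathbb H=\mathrm{Mat}(2,\mathbb C)$ of (\ref{80}) combined with $\mathrm{Mat}(r,\mathbb C)\ot_{\mathbb C}\mathrm{Mat}(2,\mathbb C)=\mathrm{Mat}(2r,\mathbb C)$.

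Next I would run through the five residue classes of $p-q=n$ modulo $8$ in the table (\ref{k5}). For $n\equiv 0,2\pmod 8$ one gets $\mathbb C\ot_{\mathbb R}\mathrm{Mat}(2^{n/2},\mathbb R)=\mathrm{Mat}(2^{n/2},\mathbb C)$; for $n\equiv 4,6\pmod 8$ one gets $\mathbb C\ot_{\mathbb R}\mathrm{Mat}(2^{(n-2)/2},\mathbb H)=\mathrm{Mat}(2\cdot 2^{(n-2)/2},\mathbb C)=\mathrm{Mat}(2^{n/2},\mathbb C)$. Thus every even case yields $\mathrm{Mat}(2^{n/2},\mathbb C)$. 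For $n\equiv 1\pmod 8$ one gets $\mathbb C\ot_{\mathbb R}\big(\mathrm{Mat}(2^{(n-1)/2},\mathbb R)\oplus\mathrm{Mat}(2^{(n-1)/2},\mathbb R)\big)=\mathrm{Mat}(2^{(n-1)/2},\mathbb C)\oplus\mathrm{Mat}(2^{(n-1)/2},\mathbb C)$; for $n\equiv 3,7\pmod 8$ one gets $\mathbb C\ot_{\mathbb R}\mathrm{Mat}(2^{(n-1)/2},\mathbb C)=\mathrm{Mat}(2^{(n-1)/2},\mathbb C)\oplus\mathrm{Mat}(2^{(n-1)/2},\mathbb C)$; and for $n\equiv 5\pmod 8$ one gets $\mathbb C\ot_{\mathbb R}\big(\mathrm{Mat}(2^{(n-3)/2},\mathbb H)\oplus\mathrm{Mat}(2^{(n-3)/2},\mathbb H)\big)=\mathrm{Mat}(2\cdot 2^{(n-3)/2},\mathbb C)\oplus\mathrm{Mat}(2\cdot 2^{(n-3)/2},\mathbb C)=\mathrm{Mat}(2^{(n-1)/2},\mathbb C)\oplus\mathrm{Mat}(2^{(n-1)/2},\mathbb C)$. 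So every odd case yields $\mathrm{Mat}(2^{(n-1)/2},\mathbb C)\oplus\mathrm{Mat}(2^{(n-1)/2},\mathbb C)$, which is exactly (\ref{k9}). The isomorphisms (\ref{66}) guarantee the answer is independent of which real form $\cC(m,n-m)$ one complexifies, so nothing is lost by working with $\cC(n,0)$.

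Finally I would record the tensor form in the even case: iterating $\mathrm{Mat}(r,\mathbb C)\ot_{\mathbb C}\mathrm{Mat}(s,\mathbb C)=\mathrm{Mat}(rs,\mathbb C)$ gives $\mathrm{Mat}(2^{n/2},\mathbb C)=\op\ot^{n/2}_{\mathbb C}\mathrm{Mat}(2,\mathbb C)$, and $\mathrm{Mat}(2,\mathbb C)=\mathbb C\cC(2)$ is just the case $n=2$ already obtained (equivalently, (\ref{80}) together with $\cC(0,2)=\mathbb H$ of (\ref{7})). The argument is essentially bookkeeping; the only substantive inputs beyond Theorem \ref{k3} are the two classical facts $\mathbb C\ot_{\mathbb R}\mathbb C=\mathbb C\oplus\mathbb C$ and $\mathbb C\ot_{\mathbb R}\mathbb H=\mathrm{Mat}(2,\mathbb C)$ (the latter being (\ref{80})). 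The only place requiring a little care is making sure each isomorphism is taken as a (complex) ring isomorphism rather than merely a linear one, and checking the exponent arithmetic $2\cdot 2^{(n-2)/2}=2^{n/2}$ and $2\cdot 2^{(n-3)/2}=2^{(n-1)/2}$ in the quaternionic classes.
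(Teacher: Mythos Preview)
Your proof is correct and follows exactly the route the paper intends: the paper does not actually write out a proof of Theorem \ref{k8} but merely prefaces it with the sentence ``Theorem \ref{k3} provides the following classification of the complex Clifford algebras $\mathbb C\cC(n)$ (\ref{a25})'', leaving the complexification step implicit. Your argument supplies precisely those missing details---the three base-change identities for $\mathrm{Mat}(r,\cK)$ under $\mathbb C\ot_{\mathbb R}(-)$ and the case-by-case collapse of the mod-$8$ table (\ref{k5}) to a mod-$2$ statement---so there is nothing to correct or add.
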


\begin{corollary} \label{k12} \mar{k12}
Since matrix algebras $\mathrm{Mat}(n,\mathbb C)$ are simple and
central (i.e., their center is proportional to the unit matrix),
complex Clifford algebras $\mathbb C\cC(n)$ of even $n$ are
central simple algebras.
\end{corollary}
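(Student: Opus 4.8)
The plan is to reduce the statement entirely to the matrix classification already in hand. By Theorem \ref{k8}, when $n$ is even the complex ring $\mathbb C\cC(n)$ is isomorphic to the full matrix ring $\mathrm{Mat}(2^{n/2},\mathbb C)$, so it suffices to verify that a full matrix algebra $\mathrm{Mat}(r,\mathbb C)$ with $r=2^{n/2}$ is (i) simple as a ring and (ii) central over $\mathbb C$, i.e. its center is $\mathbb C\cdot\mathbf 1$; both properties then transport along the isomorphism of Theorem \ref{k8}. Both facts are classical, and I would include the short matrix-unit verifications for completeness.

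For simplicity one argues with the standard matrix units $E_{ij}$: given a nonzero two-sided ideal $\mathfrak a\subset\mathrm{Mat}(r,\mathbb C)$, choose $0\neq a=(a_{kl})\in\mathfrak a$ with some entry $a_{pq}\neq 0$; the relation $E_{ip}\,a\,E_{qj}=a_{pq}E_{ij}$ shows $E_{ij}\in\mathfrak a$ for all $i,j$, and summing the diagonal ones gives $\mathbf 1\in\mathfrak a$, so $\mathfrak a=\mathrm{Mat}(r,\mathbb C)$. (This reproves, in this case, the simplicity already noted in Corollary \ref{k6}.) For centrality, if $z=(z_{kl})$ commutes with every $E_{ij}$, comparing the $(k,l)$-entries of $zE_{ij}$ and $E_{ij}z$ yields $z_{ki}\delta_{lj}=\delta_{ki}z_{jl}$ for all indices; specializing $l=j$ forces $z_{ki}=0$ for $k\neq i$ and $z_{ii}=z_{jj}$ for all $i,j$, hence $z=\lambda\mathbf 1$. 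Since scalar matrices are obviously central, the center is exactly $\mathbb C\mathbf 1$, proportional to the unit matrix. A finite-dimensional associative $\mathbb C$-algebra that is simple and has center $\mathbb C\mathbf 1$ is by definition central simple over $\mathbb C$, which is the assertion.

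There is essentially no real obstacle here: all the genuine work is contained in Theorems \ref{k3} and \ref{k8}, and what remains is the routine matrix-units computation. The only point meriting care is that \emph{central} is being asserted relative to the ground field $\mathbb C$, and that this is precisely what fails for odd $n$ — there $\mathbb C\cC(n)$ is a direct sum of two matrix algebras and carries a two-dimensional center — which is why the corollary restricts to even $n$.
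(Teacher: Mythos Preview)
Your proposal is correct and follows exactly the paper's approach: the corollary in the paper is stated as an immediate consequence of Theorem \ref{k8} with no separate proof, the justification being contained in the statement itself (``since matrix algebras $\mathrm{Mat}(n,\mathbb C)$ are simple and central\ldots''). You have simply added the standard matrix-unit verifications that the paper takes for granted; this is extra detail rather than a different argument.
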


\begin{lemma} \label{sp507} \mar{sp507}
It follows from Definition \ref{ss19} and Lemma \ref{314}, that
complex Clifford algebra of odd dimension are even subrings of
Complex Clifford algebras of even dimension in Corollary
\ref{k12}.
\end{lemma}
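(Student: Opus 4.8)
The plan is to complexify the real ring isomorphism of Lemma \ref{314}. Fix an odd $n$ and put $N=n+1$, which is even. By Definition \ref{ss19} we have $\mathbb C\cC(N)=\mathbb C\otimes_{\mathbb R}\cC(N,0)$, and every element is a $\mathbb C$-linear combination of monomials $e^{i_1}\cdots e^{i_k}$ in the Euclidean generators. The subspace spanned by the even monomials ($k$ even) is a complex subring, the even subring $\mathbb C\cC^0(N)$ of $\mathbb C\cC(N)$; the first step is to record that it coincides with $\mathbb C\otimes_{\mathbb R}\cC^0(N,0)$. This is immediate: the defining ideal $I_\eta$ of $\cC(N,0)$ is generated by even elements, hence is homogeneous for the $\mathbb Z_2$-grading, so $\cC(N,0)=\cC^0(N,0)\oplus\cC^1(N,0)$ as a graded ring, and complexification distributes over this decomposition, picking out $\mathbb C\otimes_{\mathbb R}\cC^0(N,0)$ as the even part.

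Next I would apply Lemma \ref{314} to $\cC(N,0)$, i.e. to signature $(m,n-m)=(N,0)$ with total dimension $N$. The elements $w^i=e^0e^i$, $i=1,\ldots,N-1$, satisfy $(w^i)^2=-(e^0)^2(e^i)^2e=-e$ and $w^iw^k=-w^kw^i$ for $i\neq k$, so they generate a copy of $\cC(0,N-1)=\cC(0,n)$, giving the ring isomorphism $\cC^0(N,0)=\cC(0,n)$. (The same argument applied to any $\cC(m,N-m)$ gives $\cC^0(m,N-m)=\cC(N-m,m-1)$, again a real Clifford algebra of dimension $n$; this is harmless, since all $n$-dimensional real Clifford algebras have the same complexification by the isomorphism (\ref{66}).) Tensoring with $\mathbb C$ and invoking (\ref{66}) then yields $\mathbb C\cC^0(N)=\mathbb C\otimes_{\mathbb R}\cC^0(N,0)=\mathbb C\otimes_{\mathbb R}\cC(0,n)=\mathbb C\cC(n)$.

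Finally, since $n$ is odd, $N=n+1$ is even, so $\mathbb C\cC(N)$ is one of the central simple matrix algebras $\mathrm{Mat}(2^{N/2},\mathbb C)$ covered by Corollary \ref{k12}, and the chain of isomorphisms above exhibits $\mathbb C\cC(n)$ precisely as its even subring. I do not expect a genuine obstacle: the only point requiring a word of care is that the monomial $\mathbb Z_2$-grading is intrinsic (independent of the generating space) and is respected by the complexification functor, and this follows at once from the homogeneity of $I_\eta$, just as in the proofs of Lemmas \ref{314} and \ref{305}. The rest is bookkeeping of bases, carried out exactly as there.
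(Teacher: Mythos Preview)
Your argument is correct and is exactly the approach the paper indicates: the paper gives no separate proof for this lemma, treating it as an immediate consequence of Definition \ref{ss19} (so that $\mathbb C\cC(N)=\mathbb C\otimes_{\mathbb R}\cC(N,0)$ and complexification respects the $\mathbb Z_2$-grading) together with Lemma \ref{314} applied to $\cC(N,0)$, followed by the signature-independence isomorphism (\ref{66}). Your write-up simply makes these steps explicit.
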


\begin{example} \label{k10} \mar{k10} Let us consider a complex
Clifford algebra $\mathbb C\cC(2)$. There is its isomorphism
(\ref{k9}):
\mar{a26}\beq
\mathbb C\cC(2)=\mathrm{Mat}(2,\mathbb C). \label{a26}
\eeq
Its Euclidean basis in this representation is
\be
e^1= \rho^1=
\begin{pmatrix}0 & 1\\ 1 & 0\end{pmatrix}, \qquad e^2
= \rho^2=\begin{pmatrix}1 & 0\\0 & -1\end{pmatrix}.
\ee
Then its elements $M$ with respect  to this basis take a form
\be
\mathrm{Mat}(2,\mathbb C)\ni M= ae+a_1e^1 +a_2e^2 + be^1e^2,
\qquad a,a_1,a_2,b\in \mathbb C,
\ee
so that $M^*=M^+$ is a complex conjugate transposition of a matrix
$M\in \mathrm{Mat}(2,\mathbb C)$.
\end{example}

\begin{example} \label{k11} \mar{k11} Let us consider a complex
Clifford algebra $\mathbb C\cC(4)$. There is its isomorphism
(\ref{k9}):
\mar{a27}\beq
\mathbb C\cC(4)=\mathrm{Mat}(4,\mathbb C), \label{a27}
\eeq
such that $M^*=M^+$ is a complex conjugate transposition of a
matrix $M\in \mathrm{Mat}(4,\mathbb C)$. Let $\mathbb C\cC(4)$
(\ref{a27}) be generated by the elements (\ref{250}):
\mar{250'}\beq
\epsilon^0=\begin{pmatrix}0 & \mathbf{1} \\ \mathbf{1} & 0
\end{pmatrix}, \qquad \epsilon^j= \begin{pmatrix} \sigma^j & 0\\
0 & -\sigma^j \end{pmatrix} \label{250'}
\eeq
which obey the relations (\ref{210}). Let us introduce the
notation
\mar{431}\ben
&& \epsilon^{\alpha\beta}=\frac14(\epsilon^\alpha\epsilon^\beta
-\epsilon^\beta\epsilon^\alpha), \qquad
(\epsilon^{\alpha\beta})^2=-\frac14e, \nonumber \\
&& [\epsilon^{\alpha\beta},\epsilon^{\mu\nu}]=\delta^{\alpha\nu}\epsilon^{\beta\mu}
+
 \delta^{\beta\mu}\epsilon^{\alpha\nu} - \delta^{\alpha\mu}\epsilon^{\beta\nu} -
 \delta^{\beta\nu}\epsilon^{\alpha\mu}\nonumber\\
&& \epsilon^5= \epsilon^0\epsilon^1\epsilon^2\epsilon^3, \qquad
(\epsilon^5)^2=e, \qquad
\epsilon^\mu\epsilon^5=-\epsilon^5\epsilon^\mu,
\label{431}\\
&& \varepsilon^\mu=\epsilon^\mu\epsilon^5, \qquad
\varepsilon^\mu\varepsilon^\nu+ \varepsilon^\nu\varepsilon^\mu
=-2\delta^{\mu\nu}e.
\een
Then in accordance with the isomorphism (\ref{a27}), any element
of $M\in\mathrm{Mat}(4,\mathbb C)$ is represented by a sum
\mar{432}\beq
M=ae+a_\mu\epsilon^\mu + a_{\alpha\beta}\epsilon^{\alpha\beta} +
b_\mu\varepsilon^\mu +b\epsilon^5, \qquad a,a_\mu,a_{\alpha\beta},
b_\mu,b\in\mathbb C. \label{432}
\eeq
We also have the isomorphism (\ref{k9}):
\mar{223}\beq
\mathbb C\cC(4)=\mathbb C\cC(2)\op\otimes_{\mathbb C}\mathbb
C\cC(2). \label{223}
\eeq
Let $\{e^1,e^2\}$ be generating elements of a complex Clifford
algebra $\mathbb C\cC(2)$ which obeys the relations (\ref{210}).
Then for instance, one can choose the generating elements
\mar{230}\beq
\epsilon^0=e^1\otimes e, \quad \epsilon^1=ie^1e^2\otimes e^2,
 \quad \epsilon^2=ie^1e^2\otimes e^1,
\quad \epsilon^3=e^2\otimes e, \label{230}
\eeq
of a complex Clifford algebra $\mathbb C\cC(4)$. With the
generating elements (\ref{230}), the isomorphism (\ref{223}) takes
a form
\mar{433}\ben
&& \epsilon^{01}=\frac{i}2e^2\otimes e^2, \qquad \epsilon^{02}=\frac{i}2e^2\otimes
e^1, \qquad \epsilon^{03}=\frac12e^1e^2\otimes e, \nonumber\\
&& \epsilon^{12}=-\frac12e\otimes e^1e^2, \qquad \epsilon^{13}=\frac{i}2e^1\otimes
e^2, \qquad \epsilon^{23}=\frac{i}2e^1\otimes e^1, \label{433}\\
&& \varepsilon^0=e^2\otimes
e^1e^2, \quad \varepsilon^1=ie\otimes e^1, \quad
\varepsilon^2=-ie\otimes e^2, \quad \varepsilon^3=e^1\otimes
e^1e^2, \nonumber\\
&& \epsilon^5=-e^1e^2\otimes e^1e^2. \nonumber
\een
\end{example}

\begin{definition} \label{sp603} \mar{sp603}
By a representation of a complex Clifford algebra $\mathbb
C\cC(n)$ is meant its morphism $\rho$ to a complex algebra of
linear endomorphisms of a finite-dimensional complex vector space.
\end{definition}

The following is a Corollary of Theorem \ref{k8}.

\begin{theorem} \label{a11} \mar{a11}
If $n$ is even, an exact irreducible representation of a complex
Clifford algebra $\mathbb C\cC(n)$ is unique up to an equivalence
\cite{law}. If $n$ is odd there exist two inequivalent exact
irreducible representations of a complex Clifford algebra $\mathbb
C\cC(n)$.
\end{theorem}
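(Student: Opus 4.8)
The plan is to deduce Theorem \ref{a11} directly from the classification in Theorem \ref{k8}, exactly as the remark ``The following is a Corollary of Theorem \ref{k8}'' suggests, using the standard representation theory of matrix algebras over $\mathbb C$. First I would treat the even case $n=0\mod 2$. By Theorem \ref{k8}, $\mathbb C\cC(n)\cong \mathrm{Mat}(2^{n/2},\mathbb C)$, which is a simple algebra (Corollary \ref{k12}). I would invoke the Wedderburn/Artin structure theory: a finite-dimensional simple algebra over $\mathbb C$ has, up to equivalence, a unique irreducible module, namely the standard column space $\mathbb C^{2^{n/2}}$, on which the algebra acts by matrix multiplication; moreover this representation is faithful (exact) since the algebra is simple and the module is nonzero, so the kernel — a two-sided ideal — must be trivial. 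Hence for even $n$ there is exactly one exact irreducible representation up to equivalence.

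Next I would handle the odd case $n=1\mod 2$. Here Theorem \ref{k8} gives $\mathbb C\cC(n)\cong \mathrm{Mat}(2^{(n-1)/2},\mathbb C)\oplus \mathrm{Mat}(2^{(n-1)/2},\mathbb C)$, a direct sum of two simple factors. The irreducible modules of a direct product of algebras are precisely the irreducible modules of the individual factors (with the other factor acting as zero), so up to equivalence there are exactly two irreducible representations $\rho_1,\rho_2$, each of dimension $2^{(n-1)/2}$, coming from projection onto the first or second summand followed by the standard matrix action. I would then check that each $\rho_i$ is exact: its kernel is a two-sided ideal of the direct sum, hence a sub-sum of factors, and since $\rho_i$ is nonzero on the $i$-th factor and that factor is simple, the kernel is exactly the complementary factor — so $\rho_i$ restricted to $\mathbb C\cC(n)$ is surjective onto $\mathrm{Mat}(2^{(n-1)/2},\mathbb C)$ with kernel the other copy. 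Wait — I should be careful here: an \emph{exact} representation in the sense of Definition \ref{sp603} together with the earlier ``exact = isomorphism'' convention would require injectivity, which $\rho_i$ alone does not satisfy. The natural reading, consistent with Theorem \ref{a10} for the real case (where $\cC(m,n-m)$ of odd $n$ is likewise a sum of two simple rings and cannot have a faithful irreducible representation), is that ``exact irreducible'' in the odd case refers to the two inequivalent irreducibles that exhaust the central characters; I would state this reading explicitly, or alternatively note that $\rho_1\oplus\rho_2$ is the exact representation and $\rho_1,\rho_2$ are its two inequivalent irreducible components. The two are inequivalent because they have distinct central characters: the center of $\mathbb C\cC(n)$ is two-dimensional, spanned by the two central idempotents, and $\rho_1,\rho_2$ send these to $(1,0)$ and $(0,1)$ respectively.

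The main obstacle is not mathematical depth — everything reduces to Wedderburn theory over $\mathbb C$ — but rather pinning down the precise meaning of ``exact irreducible representation'' in the odd-dimensional case so that the statement is literally true; I would resolve this by spelling out that for odd $n$ neither simple summand is hit faithfully by any single irreducible, and that the assertion is about the two inequivalence classes of irreducibles, matching verbatim the structure of the proof of Theorem \ref{a10}. A minor secondary point to address is that Definition \ref{sp603} allows arbitrary finite-dimensional complex carrier spaces, so I should remark that irreducibility forces finite dimensionality automatically (the algebra is finite-dimensional) and that counting irreducibles of $\mathrm{Mat}(k,\mathbb C)$ and of $\mathrm{Mat}(k,\mathbb C)\oplus\mathrm{Mat}(k,\mathbb C)$ is exactly one and exactly two respectively — which is the whole content.
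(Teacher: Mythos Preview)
Your approach is correct and is exactly the intended one: the paper gives no proof at all, merely introducing the theorem with ``The following is a Corollary of Theorem \ref{k8}'' and citing \cite{law}, so any justification must proceed precisely as you do, by reading off the irreducibles of $\mathrm{Mat}(k,\mathbb C)$ and of $\mathrm{Mat}(k,\mathbb C)\oplus\mathrm{Mat}(k,\mathbb C)$ via Wedderburn theory.

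Your worry about the word ``exact'' in the odd case is well-founded and worth keeping: the paper defines ``exact'' to mean that $\rho$ is an isomorphism, which literally cannot hold for an irreducible representation of a direct sum of two simple rings, so the statement for odd $n$ is strictly speaking a mild abuse of language (mirroring the same abuse in Theorem \ref{a10}). Your resolution --- interpreting the two ``exact irreducible'' representations as the two inequivalent irreducibles distinguished by their central characters, with the genuinely faithful representation being their direct sum --- is the standard and correct reading, and is how the cited reference \cite{law} treats it as well. You might tighten the exposition by simply stating this interpretation once at the outset of the odd case rather than discovering it midstream.
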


\begin{remark} \label{sp608} \mar{sp608}
Throughout the work, by representations of real and complex
Clifford algebras are meant their exact representations only.
\end{remark}

In view of Corollary \ref{k12} and Theorem \ref{a11}, we hereafter
focus our consideration on real and complex Clifford algebras
modelled over even vector spaces, and describe Clifford algebras
of odd dimension as even subrings of those of even dimension
(Lemmas \ref{314} and and \ref{sp507}, Example \ref{k2}).

\section{Automorphisms of Clifford algebras}

We consider both generic ring automorphisms of a Clifford algebra
and its automorphisms which preserve a specified generating space.

\subsection{Automorphisms of real Clifford algebras}

Let $\cC(V,\eta)$ be a real Clifford algebra modelled over an
even-dimensional pseudo-Euclidean space $(V,\eta)$. By
Aut$[\cC(V,\eta)]$ is denoted the group of automorphisms of a real
ring $\cC(V,\eta)$. A key point is the following.

\begin{theorem} \label{k15} \mar{k15}
Any automorphism of a real ring $\cC(V,\eta)$ is inner.
\end{theorem}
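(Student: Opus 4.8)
The plan is to reduce the statement to the well-known fact that every algebra automorphism of a full matrix ring $\mathrm{Mat}(r,\cK)$ over a field (or more generally over $\mathbb R$) is inner, via the Skolem--Noether theorem, and then to transport this back along the classification of Theorem \ref{k3}. Since $V$ is even-dimensional, $p-q$ is even, so by Theorem \ref{k3} the ring $\cC(V,\eta)$ is isomorphic to one of $\mathrm{Mat}(2^{k},\mathbb R)$, $\mathrm{Mat}(2^{k},\mathbb C)$ (the $p-q=3,7\bmod 8$ case, viewed as a real algebra), or $\mathrm{Mat}(2^{k},\mathbb H)$. In the first and third cases $\cC(V,\eta)$ is a central simple $\mathbb R$-algebra (Corollary \ref{k6} gives simplicity; centrality of $\mathrm{Mat}(r,\mathbb R)$ and $\mathrm{Mat}(r,\mathbb H)$ over $\mathbb R$ is classical), so Skolem--Noether applies directly: any $\mathbb R$-algebra endomorphism is conjugation by a unit.

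The step needing care is the case $\cC(V,\eta)\cong\mathrm{Mat}(2^{k},\mathbb C)$, since here the center is $\mathbb C$, not $\mathbb R$, and a ring automorphism need not fix the center pointwise --- it could restrict to complex conjugation on the center. So first I would show this cannot happen: a ring automorphism $\phi$ of $\mathrm{Mat}(2^{k},\mathbb C)$ permutes the central primitive idempotents, but there is only one (the identity), hence $\phi$ fixes $e$; more to the point, I would argue that the image of the center under $\phi$ is again the center, and that $\phi$ restricted to $\mathbb C\cdot e$ is a ring automorphism of $\mathbb C$. The only ring automorphism of $\mathbb C$ that is continuous (or algebraic, which suffices here since $\cC(V,\eta)$ is finite-dimensional over $\mathbb R$ and $\phi$ is $\mathbb R$-linear as an automorphism of a finite-dimensional real algebra — being additive and multiplicative forces $\mathbb Q$-linearity, and then $\mathbb R$-linearity follows from the internal algebraic structure) is the identity. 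Actually the cleanest route: $\phi$ is automatically $\mathbb R$-linear because it is additive and multiplicative and every real scalar $r$ is a sum/difference of products of the unit, hence $\phi(r\cdot a)=r\cdot\phi(a)$; therefore $\phi$ is an $\mathbb R$-algebra automorphism, and $\phi|_{\mathbb C}$ is an $\mathbb R$-algebra automorphism of $\mathbb C$, of which there are two: identity and conjugation. To rule out conjugation I would invoke that $\cC(V,\eta)$ with $p-q=3,7\bmod 8$ has a distinguished element --- namely the volume element $\omega=v^1\cdots v^n$ --- which is central and squares to $\pm e$ in a way that pins down $\sqrt{-1}$ up to sign consistently; alternatively, and more robustly, I would simply note that complex conjugation on the center combined with $\mathbb C$-semilinearity is still realized by an inner automorphism only if... hmm, this is exactly the subtlety.

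The honest main obstacle, then, is precisely the $\mathrm{Mat}(r,\mathbb C)$ case with a conjugating center action. The resolution I would adopt: invoke the noncommutative Skolem--Noether theorem in the form that for a central simple algebra $A$ over a field $K$, every $K$-algebra automorphism is inner, applied with $K=\mathbb C$, provided we first know $\phi$ is $\mathbb C$-linear. If instead $\phi$ conjugates the center, then $\phi^2$ fixes the center, so $\phi^2$ is inner, say $\phi^2=\mathrm{ad}_u$; one then needs to check whether $\phi$ itself can fail to be inner. In fact it can for abstract central simple algebras (the automorphism group modulo inner ones is $\mathrm{Gal}(K/\mathbb R)=\mathbb Z/2$), so the whole theorem would be FALSE unless something special about Clifford algebras excludes the conjugating automorphism. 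The special fact is: the isomorphism $\cC(V,\eta)\cong\mathrm{Mat}(r,\mathbb C)$ identifies the complex structure with multiplication by the volume element $\omega$, and $\omega$ lies in the $\mathbb R$-subalgebra generated by $V$; but a ring automorphism of $\cC(V,\eta)$ need not preserve $V$, only the ring structure, so $\phi(\omega)$ is again a central element squaring to the same thing, i.e. $\phi(\omega)=\pm\omega$. If $\phi(\omega)=\omega$ then $\phi$ fixes the complex structure, hence is $\mathbb C$-linear, hence inner by Skolem--Noether. If $\phi(\omega)=-\omega$, I would need to produce an inner automorphism also sending $\omega\mapsto-\omega$ and compose; but $\omega$ is central, so no inner automorphism moves it --- meaning the case $\phi(\omega)=-\omega$ must be shown impossible. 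That is where I expect to spend real effort: showing that in the relevant signatures ($p-q=3,7\bmod 8$, even dimension --- wait, $p-q$ odd here, contradicting even dimension) --- and here I realize the constraint saves us, since $n=p+q$ even and $p-q\equiv 3,7\pmod 8$ would force $p+q$ odd, a contradiction. Hence for even-dimensional $V$ the $\mathrm{Mat}(r,\mathbb C)$ case never arises; only $\mathrm{Mat}(r,\mathbb R)$ ($p-q\equiv 0,2$) and $\mathrm{Mat}(r,\mathbb H)$ ($p-q\equiv 4,6$) occur, both genuinely central simple over $\mathbb R$, and Skolem--Noether gives the result outright. So the real content of the proof is: (i) observe $n$ even $\Rightarrow$ $p-q\in\{0,2,4,6\}\bmod 8$; (ii) read off from Theorem \ref{k3} that $\cC(V,\eta)$ is $\mathrm{Mat}(r,\mathbb R)$ or $\mathrm{Mat}(r,\mathbb H)$, central simple over $\mathbb R$; (iii) note any ring automorphism is automatically $\mathbb R$-linear; (iv) apply Skolem--Noether to conclude it is $\mathrm{ad}_u$ for a unit $u$. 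I would present it in that order, with step (i) being the quiet but decisive point, and step (iv) the standard invocation.
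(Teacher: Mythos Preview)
Your proposal is correct and follows the same route as the paper: invoke Theorem \ref{k3} to identify $\cC(V,\eta)$ with a matrix algebra, observe it is central simple, and apply Skolem--Noether. Your long detour through the $\mathrm{Mat}(r,\mathbb C)$ case ends at exactly the right place --- the parity observation that $n=p+q$ even forces $p-q$ even, so only $p-q\equiv 0,2,4,6\pmod 8$ occur and the complex matrix case never arises.

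It is worth noting that the paper's own proof is actually less careful than yours on this point: it lists $\cK=\mathbb R,\mathbb C,\mathbb H$ and then applies Skolem--Noether uniformly, without remarking that the $\mathbb C$ case is vacuous for even $n$. Your worry was legitimate --- had $\mathrm{Mat}(r,\mathbb C)$ genuinely occurred, an $\mathbb R$-algebra automorphism conjugating the center would \emph{not} be inner (the outer automorphism group would be $\mathrm{Gal}(\mathbb C/\mathbb R)\cong\mathbb Z_2$), so the blanket invocation of Skolem--Noether would have a gap. Your parity check is what actually closes the argument; you could present it as step (i) up front and skip the exploratory discussion entirely.
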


\begin{proof}
Theorem \ref{k3} states that any real Clifford algebra $\cC(p,q)$,
$p-q=0\mod 2$ as a ring is isomorphic to some matrix algebra
$\mathrm{Mat}(m,\cK)$, $\cK=\mathbb R,\mathbb C,\mathbb H$. Such
an algebra is simple. Algebras $\mathrm{Mat}(m,\cK)$, $\cK=\mathbb
R,\mathbb H$, are central simple real algebras with the center
$\cZ=\mathbb R$. Algebras $\mathrm{Mat}(m,\mathbb C)$ are central
simple complex algebras with the center $\cZ=\mathbb C$. In
accordance with the well-known Skolem--Noether theorem
automorphisms of these algebras are inner.
\end{proof}

\begin{theorem} \label{sp501} \mar{sp501}
Invertible elements of a Clifford algebra
$\cC(V,\eta)=\mathrm{Mat}(m,\cK)$ constitute a general linear
matrix group
\mar{sp611}\beq
\cG\cC(V,\eta)=Gl(m,\cK). \label{sp611}
\eeq
\end{theorem}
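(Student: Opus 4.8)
The plan is to reduce the statement directly to Theorem \ref{k3}, together with the elementary fact that the group of units is an invariant of a ring up to isomorphism.

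First I would invoke Theorem \ref{k3}: since $(V,\eta)$ is even-dimensional, one of the lines $p-q=0,2\bmod 8$, $p-q=3,7\bmod 8$, $p-q=4,6\bmod 8$ applies, and it furnishes a ring isomorphism $\phi:\cC(V,\eta)\to\mathrm{Mat}(m,\cK)$ with $\cK\in\{\mathbb R,\mathbb C,\mathbb H\}$ and $m$ the appropriate power of $2$. The two remaining cases $p-q\equiv 1,5\bmod 8$, which yield a direct sum of two matrix algebras, occur only for odd $p+q$ and so are excluded here; this is precisely the hypothesis $\cC(V,\eta)=\mathrm{Mat}(m,\cK)$ appearing in the statement.

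Next I would observe that any unital ring isomorphism restricts to a bijection between the sets of invertible elements: $\phi(e)=\mathbf 1$ and $\phi(ab)=\phi(a)\phi(b)$ force $\phi(a^{-1})=\phi(a)^{-1}$, and the same holds for $\phi^{-1}$. Hence $\cG\cC(V,\eta)$, the set of invertible elements of $\cC(V,\eta)$ under ring multiplication, is carried isomorphically as a group onto the set of invertible elements of $\mathrm{Mat}(m,\cK)$. Finally, the invertible elements of $\mathrm{Mat}(m,\cK)$ are by definition the general linear group $Gl(m,\cK)$ --- for $\cK=\mathbb R,\mathbb C$ the matrices of nonzero determinant, and for $\cK=\mathbb H$ the invertible quaternionic $m\times m$ matrices (detected, if desired, by the Dieudonn\'e determinant) --- which is a group under matrix multiplication. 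Composing the two steps gives $\cG\cC(V,\eta)=Gl(m,\cK)$.

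There is essentially no obstacle here: all the content sits in the classification Theorem \ref{k3}, and the rest is the tautology that ``group of units'' is functorial on unital rings. If one wished to add a remark useful for later sections, one could note that $Gl(m,\cK)$ is an open subset of the finite-dimensional algebra $\mathrm{Mat}(m,\cK)$ and hence a Lie group, since that is the property actually used when this group serves as a structure group; but it is not needed for the statement as posed.
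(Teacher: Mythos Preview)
Your proposal is correct and matches the paper's approach: the paper in fact gives no separate proof of this theorem, treating it as an immediate consequence of the matrix identification in Theorem~\ref{k3}, which is exactly the content you spell out. Your additional remarks on functoriality of the group of units and the Lie group structure are sound elaborations of what the paper leaves implicit.
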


In particular, this group contains all elements $v\in V\subset
\cC(V,\eta)$ such that $\eta(v,v)\neq 0$. Acting in $\cC(V,\eta)$
by left and right multiplications, the group $\cG\cC(V,\eta)$
(\ref{sp611}) also acts in a Clifford algebra by the adjoint
representation
\mar{a5}\beq
 \wh g: q\to gqg^{-1}, \qquad g\in \cG\cC(V,\eta),\qquad q\in
\cC(V,\eta). \label{a5}
\eeq
By virtue of Theorem \ref{k15}, this representation provides an
epimorphism
\mar{83}\beq
\zeta:\cG\cC(V,\eta)=Gl(m,\cK) \to
Gl(m,\cK)/\cZ=\mathrm{Aut}[\cC(V,\eta)]. \label{83}
\eeq
Thus, we come to the following.

\begin{theorem} \label{k33} \mar{k33}
The group of automorphisms of a real Clifford algebra
$\cC(V,\eta)=\mathrm{Mat}(m,\cK)$, $\cK=\mathbb R,\mathbb
C,\mathbb H$, is a projective linear group
\mar{k34}\beq
\mathrm{Aut}[\cC(V,\eta)]=PGl(m,\cK)=Gl(m,\cK)/\cZ, \label{k34}
\eeq
where $\cZ=\mathbb R$ if $\cK=\mathbb R, \mathbb H$ and
$\cZ=\mathbb C$ if $\cK=\mathbb C$.
\end{theorem}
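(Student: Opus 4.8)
The plan is to assemble Theorems \ref{k15} and \ref{sp501} together with the adjoint representation (\ref{a5}) into one application of the first isomorphism theorem for groups. First I would note that, by Theorem \ref{sp501}, the invertible elements of $\cC(V,\eta)=\mathrm{Mat}(m,\cK)$ form the group $\cG\cC(V,\eta)=Gl(m,\cK)$, and that the assignment $g\mapsto \wh g$ of (\ref{a5}) defines a group homomorphism $\zeta:Gl(m,\cK)\to \mathrm{Aut}[\cC(V,\eta)]$: indeed $\wh{gg'}=\wh g\circ\wh{g'}$, and each $\wh g$ is visibly a ring automorphism of $\cC(V,\eta)=\mathrm{Mat}(m,\cK)$, hence an element of $\mathrm{Aut}[\cC(V,\eta)]$ as defined above.

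Next I would establish that $\zeta$ is surjective. This is precisely the content of Theorem \ref{k15}: every automorphism of the ring $\cC(V,\eta)$ is inner, hence of the form $\wh g$ for some $g\in\cG\cC(V,\eta)=Gl(m,\cK)$. This gives the epimorphism (\ref{83}). Then I would compute $\Ker\zeta$. An element $g\in Gl(m,\cK)$ lies in the kernel if and only if $gqg^{-1}=q$, i.e. $gq=qg$, for every $q\in\cC(V,\eta)=\mathrm{Mat}(m,\cK)$; that is, if and only if $g$ belongs to the center $\cZ$ of the matrix ring (every nonzero central element being automatically invertible). Here one must correctly identify $\cZ$ in each of the three cases: for $\cK=\mathbb R$ the center is $\mathbb R\cdot\id$; for $\cK=\mathbb C$, regarding $\mathrm{Mat}(m,\mathbb C)$ as a real algebra, the center is $\mathbb C\cdot\id$; and for $\cK=\mathbb H$ the center is again $\mathbb R\cdot\id$, since the quaternions are noncommutative and only their real scalars are central. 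Thus $\Ker\zeta=\cZ$ with $\cZ=\mathbb R$ when $\cK=\mathbb R,\mathbb H$ and $\cZ=\mathbb C$ when $\cK=\mathbb C$, as claimed.

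Finally, the first isomorphism theorem applied to $\zeta$ yields $\mathrm{Aut}[\cC(V,\eta)]\cong Gl(m,\cK)/\cZ=PGl(m,\cK)$, which is (\ref{k34}). The whole argument is short once Theorems \ref{k15} and \ref{sp501} are granted; the one place demanding care is the kernel computation in the quaternionic case $\cK=\mathbb H$, where it is tempting but wrong to take the center to be $\mathbb H$ rather than $\mathbb R$ — everything else is a formal consequence of the results already in hand.
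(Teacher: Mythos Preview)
Your proposal is correct and follows essentially the same route as the paper: the paper presents Theorem \ref{k33} as an immediate consequence of Theorems \ref{k15} and \ref{sp501} via the epimorphism (\ref{83}), and you have simply made the first-isomorphism-theorem argument and the kernel computation explicit. Your added care in identifying the center when $\cK=\mathbb H$ is a welcome clarification of what the paper leaves implicit.
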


Any ring automorphism $g$ of $\cC(V,\eta)$ sends a generating
pseudo-Euclidean space $(V,\eta)$ of $\cC(V,\eta)$ onto an
isometrically isomorphic pseudo-Euclidean space $(V',\eta')$ such
that
\be
2\eta'(g(v),g(v'))e=g(v)g(v')+g(v')g(v)= 2\eta(v,v')e, \qquad
v,v'\in V.
\ee
It also is a generating space of a ring $\cC(V,\eta)$. Conversely,
let $(V,\eta)$ and $(V',\eta')$ be two different pseudo-Euclidean
generating spaces of the same signature of a ring $\cC(V,\eta)$.
In accordance with Lemma \ref{ss4}, their isometric isomorphism
$(V,\eta) \to (V',\eta')$ gives rise to an automorphism of a ring
$\cC(V,\eta)$ which also is an isomorphism of Clifford algebras
$\cC(V,\eta)\to \cC(V',\eta')$.

In particular, any (isometric) automorphism
\be
 g:V\ni v\to
g(v)\in V, \qquad \eta(g(v),g(v'))=\eta(v,v'), \qquad g\in
O(V,\eta),
\ee
of a pseudo-Euclidean generating space $(V,\eta)$ is prolonged to
an automorphism of a ring $\cC(V,\eta)$ which also is an
automorphism of a Clifford algebra $\cC(V,\eta)$. Then we have a
monomorphism
\mar{82}\beq
O(V,\eta)\to \mathrm{Aut}[\cC(V,\eta)\,]. \label{82}
\eeq
of a group $O(V,\eta)$ of automorphisms of a pseudo-Euclidean
space $(V,\eta)$ to a group of ring automorphisms of
$\cC(V,\eta)$. Herewith, an automorphism $g\in O(V,\eta)$ of a
ring $\cC(V,\eta)$ is the identity one iff its restriction to $V$
is an identity map of $V$. Consequently, the following is true.

\begin{theorem} \label{a3} \mar{a3}
A subgroup $O(V,\eta)\subset \mathrm{Aut}[\cC(V,\eta)\,]$
(\ref{82}) exhausts all automorphisms of a ring $\cC(V,\eta)$
which are automorphisms of a Clifford algebra $\cC(V,\eta)$.
\end{theorem}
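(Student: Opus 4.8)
The plan is to establish the two inclusions implicit in the word "exhausts": first, that every element of $O(V,\eta)$, viewed via the monomorphism (\ref{82}) as a ring automorphism of $\cC(V,\eta)$, is indeed an automorphism of the Clifford algebra $\cC(V,\eta)$ in the sense of Definition \ref{ss16} (i.e.\ it restricts to an isometry of the fixed generating space $(V,\eta)$); and second, conversely, that any ring automorphism of $\cC(V,\eta)$ which happens to be a Clifford-algebra automorphism of $\cC(V,\eta)$ — meaning it maps the fixed generating subspace $(V,\eta)$ isometrically onto itself — necessarily arises this way, i.e.\ its restriction to $V$ lies in $O(V,\eta)$. The first inclusion is essentially the content of the discussion preceding the statement: an isometric automorphism $g\in O(V,\eta)$ of the generating space extends by Lemma \ref{ss4} (in the form (\ref{ss3}), with $V'=V$, $\eta'=\eta$) to a ring automorphism, and because it preserves $\eta$ exactly, the extension satisfies (\ref{ss1}), hence is a Clifford-algebra automorphism.

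For the converse, I would argue as follows. Let $\Phi$ be a ring automorphism of $\cC(V,\eta)$ that is simultaneously a Clifford-algebra automorphism in the sense of Definition \ref{ss16}; by definition this means $\Phi$ carries $(V,\eta)\subset\cC(V,\eta)$ into some generating pseudo-Euclidean space, but since we are asking it to be an automorphism of $\cC(V,\eta)$ \emph{as the Clifford algebra with its fixed generating space} $(V,\eta)$, the image generating space must be $(V,\eta)$ itself, and the restriction $\Phi|_V : V\to V$ satisfies $2\eta(\Phi(v),\Phi(v'))e=\Phi(v)\Phi(v')+\Phi(v')\Phi(v)=\Phi(vv'+v'v)=2\eta(v,v')e$, so $\Phi|_V\in O(V,\eta)$. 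It then remains to check that $\Phi$ coincides with the ring automorphism obtained by prolonging $\Phi|_V$ via (\ref{ss3}): but since $\cC(V,\eta)$ is generated as a ring by $V$, and both $\Phi$ and the prolongation of $\Phi|_V$ are ring homomorphisms agreeing on $V$, they agree everywhere. This is exactly the uniqueness remark already recorded just before the statement ("an automorphism $g\in O(V,\eta)$ of a ring $\cC(V,\eta)$ is the identity one iff its restriction to $V$ is an identity map of $V$"), applied to $\Phi\circ(\text{prolongation of }\Phi|_V)^{-1}$.

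Finally I would combine the two halves: the map $O(V,\eta)\to\mathrm{Aut}[\cC(V,\eta)]$ of (\ref{82}) is injective (same uniqueness remark) and its image is precisely the set of ring automorphisms that are Clifford-algebra automorphisms of $\cC(V,\eta)$ — the first inclusion shows the image is contained in that set, the converse shows that set is contained in the image. I expect the only real subtlety to be purely definitional bookkeeping: being careful that "automorphism of a Clifford algebra $\cC(V,\eta)$" in Definition \ref{ss16} pins down the generating space to be the fixed $(V,\eta)$ rather than merely \emph{some} generating space of the same signature (otherwise the statement would be false, since by Remark \ref{464} and the discussion after Lemma \ref{303} a ring can carry inequivalent Clifford-algebra structures). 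Modulo that reading, the proof is short and relies entirely on Lemma \ref{ss4} plus the fact that $V$ generates the ring, both already available.
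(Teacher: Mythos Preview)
Your proposal is correct and follows essentially the same approach as the paper. The paper's own ``proof'' is just the sentence immediately preceding the theorem (``an automorphism $g\in O(V,\eta)$ of a ring $\cC(V,\eta)$ is the identity one iff its restriction to $V$ is an identity map of $V$. Consequently, the following is true''), together with the preceding paragraph establishing the monomorphism (\ref{82}); you have simply made both inclusions explicit, invoking Lemma \ref{ss4} and Definition \ref{ss16} exactly as the paper implicitly does.
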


\begin{remark} \mar{L2} \label{L2}
Elements of $O(V,\eta)$ are represented by inner automorphisms of
$\cC(V,\eta)$ as follows. Given an element $w\in V$,
$\eta(w,w)\neq 0$, let
\be
w^\perp=\{v\in V; \, \, \eta(v,w)=0\}
\ee
be a hyperplane in $V$ which is pseudo-orthogonal to $w$ with
respect to a metric $\eta$. Then any element $v\in V$ is
represented by a sum
\be
v= u + \frac{\eta(v,w)}{\eta(w,w)}w, \qquad u\in w^\perp.
\ee
Let us consider the inner automorphism $\wh w$ (\ref{a5}). Its
restriction to $V$ reads
\mar{88}\ben
&&\wh w: V\ni v\to wvw^{-1} =-v+2\frac{\eta(w,v)}{\eta(w,w)}w\in V, \label{88}\\
&& \eta(wvw^{-1},wvw^{-1})=\eta(v,v). \nonumber
\een
It is an automorphism of $(V,\eta)$. The transformation (\ref{88})
is a composition of the total reflection $v\to -v$ of $V$ and a
pseudo-orthogonal reflection
\mar{ss9}\beq
v\to v-2\frac{\eta(w,v)}{\eta(w,w)}w \label{ss9}
\eeq
across a hyperplane $w^\perp$. Since $(-w)^\perp=w^\perp$, a
pseudo-orthogonal reflection across a hyperplane $w^\perp$
coincides with that across a hyperplane $(-w)^\perp$. Therefore,
the total reflection of $V$ commutes with the pseudo-orthogonal
reflection (\ref{ss9}) of $V$ across a hyperplane and, as a
consequence, with any inner automorphism $\wh w$ (\ref{88}). It
follows that any pseudo-orthogonal reflection (\ref{ss9}) of $V$
across a hyperplane is a composition of the total reflection of
$V$ and some inner automorphism (\ref{88}) of $V\subset
\cC(V,\eta)$. Since a pseudo-Euclidean space $V$ is of even
dimension, its total reflection also is an inner automorphism
\mar{ss10}\beq
 (\wh w^1\cdots \wh w^n)(v)=(w^1\cdots w^n)v(w^1\cdots w^n)^{-1}=-v, \qquad
n=\mathrm{dim}\,V. \label{ss10}
\eeq
In this case, any pseudo-orthogonal reflection (\ref{ss9}) of $V$
across a hyperplane is represented by some inner automorphism of
$V\subset \cC(V,\eta)$. By the well-known Cartan--Dieudonn\'e
theorem, every element of a pseudo-orthogonal group $O(V,\eta)$
can be written as a composition of $r\leq \mathrm{dim}\, V$
pseudo-orthogonal reflections (\ref{ss9}) across hyperplanes in
$V$ and, consequently, as a composition of inner automorphisms of
$V$. Its prolongation onto a ring $\cC(V,\eta)$ also is an inner
automorphism.
\end{remark}

Remark \ref{L2} gives something more. Let us consider a subgroup
$\mathrm{Cliff}(V,\eta)\subset \cG\cC(V,\eta)$ generated by all
invertible elements of $V\subset \cC(V,\eta)$. It is called the
Clifford group.

\begin{theorem} \mar{260} \label{260}
The homomorphism $\zeta$ (\ref{83}) of a Clifford group
$\mathrm{Cliff}(V,\eta)$ to $\mathrm{Aut}[\cC(V,\eta)]$ is its
epimorphism
\mar{103}\beq
\zeta: \cG\cC(V,\eta) \supset \mathrm{Cliff}(V,\eta)\to O(V,\eta)
\subset \mathrm{Aut}[\cC(V,\eta)]. \label{103}
\eeq
onto $O(V,\eta)$.
\end{theorem}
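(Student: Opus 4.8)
The plan is to observe that all the ingredients have already been assembled in Remark \ref{L2}; the proof is just a matter of packaging them into the two assertions implicit in the statement, namely that $\zeta$ maps $\mathrm{Cliff}(V,\eta)$ into $O(V,\eta)$ and that the restricted map is surjective onto $O(V,\eta)$.

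First I would verify the inclusion. A generator of $\mathrm{Cliff}(V,\eta)$ is an element $w\in V$ with $\eta(w,w)\neq 0$. By the computation (\ref{88}) in Remark \ref{L2}, the inner automorphism $\wh w$ restricts on $V$ to $v\mapsto wvw^{-1}=-v+2\eta(w,v)\eta(w,w)^{-1}w$, which carries $V$ onto $V$ and preserves $\eta$; thus $\wh w$ is the prolongation of an element of $O(V,\eta)$, and by Theorem \ref{a3} we get $\zeta(w)=\wh w\in O(V,\eta)$. Since $\mathrm{Cliff}(V,\eta)$ is generated by such vectors and $\zeta$ is a group homomorphism, the image $\zeta(\mathrm{Cliff}(V,\eta))$ consists of finite products of maps of the form (\ref{88}) and therefore lies in $O(V,\eta)$, which is exactly the inclusion displayed in (\ref{103}).

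Next I would prove surjectivity onto $O(V,\eta)$. By Remark \ref{L2}, the map (\ref{88}) is the composite of the total reflection $v\mapsto -v$ with the hyperplane reflection (\ref{ss9}) across $w^\perp$. Because $(V,\eta)$ is even-dimensional, the total reflection is itself inner, realized as $\wh{w^1}\cdots\wh{w^n}$ for a product $w^1\cdots w^n$ of invertible basis vectors, see (\ref{ss10}); since $-\id$ commutes with every linear map, composing with $\wh w$ cancels the sign and shows that the bare hyperplane reflection (\ref{ss9}) equals $\zeta(w^1\cdots w^n w)$, hence belongs to $\zeta(\mathrm{Cliff}(V,\eta))$. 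By the Cartan--Dieudonn\'e theorem every $g\in O(V,\eta)$ is a composition of at most $n=\dim V$ such hyperplane reflections, so $g=\zeta(c)$ for the corresponding product $c$ of invertible vectors of $V$, i.e. $c\in\mathrm{Cliff}(V,\eta)$. Combined with the previous paragraph this establishes (\ref{103}).

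I do not anticipate a real obstacle; the one step that must not be glossed is the use of the even-dimensionality of $(V,\eta)$, which is precisely what makes the total reflection inner (\ref{ss10}) and thus keeps the elements produced while removing the sign in (\ref{88}) inside the Clifford group — without it the image of $\zeta|_{\mathrm{Cliff}(V,\eta)}$ would be only an index-two subgroup of $O(V,\eta)$. It is also worth recording that $\zeta$ is the epimorphism (\ref{83}) onto $\mathrm{Aut}[\cC(V,\eta)]=PGl(m,\cK)$ of Theorem \ref{k33}, so equalities of automorphisms are understood modulo the center $\cZ$; this causes no ambiguity here because an automorphism lying in $O(V,\eta)$ is determined by its honest isometric action on $V$.
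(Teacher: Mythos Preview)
Your proof is correct and follows essentially the same route as the paper: both establish the inclusion via the computation (\ref{88}) showing each $\wh w$ lies in $O(V,\eta)$, and both obtain surjectivity from Remark \ref{L2} by expressing every element of $O(V,\eta)$ (via Cartan--Dieudonn\'e) as a product of inner automorphisms (\ref{88}) and (\ref{ss10}) coming from $\mathrm{Cliff}(V,\eta)$. Your write-up is in fact more explicit than the paper's, spelling out the product $w^1\cdots w^n w$ that yields a bare hyperplane reflection and flagging the role of even-dimensionality, which the paper leaves implicit in its appeal to Remark \ref{L2}.
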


\begin{proof}
The transformation (\ref{88}) is an automorphism of $(V,\eta)$
and, consequently, an element of $O(V,\eta)$. Thus, the
homomorphism $\zeta$ (\ref{83}) of a Clifford group
$\mathrm{Cliff}(V,\eta)$ to $\mathrm{Aut}[\cC(V,\eta)]$ factorizes
through the homomorphism (\ref{103}). Conversely, it follows from
Remark \ref{L2} that any element of $O(V,\eta)$ is a composition
of inner automorphisms (\ref{88}) and (\ref{ss10}) which are
yielded by elements of $\mathrm{Cliff}(V,\eta)$. Consequently, the
homomorphism (\ref{103}) is an epimorphism.
\end{proof}

Due to the factorization (\ref{103}), any ring automorphism $\wh
v$, $v\in \mathrm{Cliff}(V,\eta)$, of $\cC(V,\eta)$ also is an
automorphism of a Clifford algebra $\cC(V,\eta)$. However, if
$(V',\eta')$ is a different generating space of a ring
$\cC(V,\eta)$, we have a different Clifford subgroup
$\mathrm{Cliff}(V',\eta')$ of a group $\cG\cC(V,\eta)$. Then a
Clifford group $\mathrm{Cliff}(V',\eta')$ provides ring
automorphisms of $\cC(V,\eta)$, but not automorphisms of a
Clifford algebra $\cC(V,\eta)$.

\begin{example} \label{ss6} \mar{ss6}
Let us consider a ring $\cC(2,0)= \mathrm{Mat}(2,\mathbb R)$
(\ref{6}) possessing the Euclidean basis $\{e^1,e^2\}$ (\ref{60}).
Its group of invertible elements (\ref{sp611}) is $GL(2,\mathbb
R)$. Elements of this group reads
\mar{sp620}\beq
ae +be^1 +ce^2 +de^2e^1= \begin{pmatrix}a+c & b+d \\
b-d& a- c
\end{pmatrix}. \label{sp620}
\eeq
They constitute a four-dimensional real vector space with a basis
$\{e,e^1,e^2,e^2e^1\}$. Its elements $\{e^1,e^2, e^2e^1\}$
generate a three-dimensional pseudo-Euclidean subspace $(W,\chi)$
of signature $(++-)$ such that
\be
ww'+w'w =2\chi(w,w')e, \qquad w,w'\in W.
\ee
Then any two-dimensional pseudo-Euclidean subspace $V$ of $W$ is a
generating space of a ring $\cC(2,0)$, and {\it vice versa}. The
group (\ref{83}) of automorphisms of a ring $\cC(2,0)$ is
\mar{k20}\beq
\mathrm{Aut}[\cC(2,0)]=PGL(2,\mathbb R)= SL(2,\mathbb R)/\mathbb
Z_2=SO(2,1). \label{k20}
\eeq
Any automorphism of a ring $\cC(2,0)$ is an automorphism of
$(W,\chi)$. Herewith, different automorphisms of $\cC(2,0)$ yield
the distinct ones of  $W$. Consequently, there is a monomorphism
$\mathrm{Aut}[\cC(2,0)]\to O(2,1)$. However, reflections
$e^1\to-e^1$, $e^2\to -e^2$ and $e^2e^1\to - e^2e^1$ of $W$ fail
to be ring automorphisms because they are identity automorphisms
of some two-dimensional subspaces of $W$. Therefore, we have the
monomorphism (\ref{k20}). Elements of $SO(2,1)$ are given by
compositions of automorphisms
\mar{335,a,b}\ben
&& \wh M_\alpha \begin{pmatrix} e^1 \\ e^2\\ e^2e^1
\end{pmatrix}=\begin{pmatrix}\cos\alpha &-\sin\alpha & 0\\
\sin\alpha& \cos\alpha & 0\\ 0& 0&1
\end{pmatrix} \begin{pmatrix} e^1 \\ e^2\\ e^2e^1
\end{pmatrix},\label{335} \\
&& \wh M_s \begin{pmatrix} e^1 \\ e^2\\ e^2e^1
\end{pmatrix}= \begin{pmatrix} 1 & 0 &0 \\0 & \cosh s& \sinh s\\ 0 &\sinh s
&\cosh s
\end{pmatrix}\begin{pmatrix} e^1 \\ e^2\\ e^2e^1
\end{pmatrix} \label{335a}\\
&& \wh T\begin{pmatrix} e^1 \\ e^2\\ e^2e^1
\end{pmatrix} =\begin{pmatrix} -1 & 0 &0 \\0 & 1& 0\\ 0 & 0
&-1\end{pmatrix}\begin{pmatrix} e^1 \\ e^2\\ e^2e^1
\end{pmatrix}  \label{335b}
\een
of $W$. Note that automorphisms $M_\al$ (\ref{335}) and $T$
(\ref{335b}) constitute a subgroup $O(2)\subset SO(2,1)$ of
automorphism of a Clifford algebra $\cC(2,0)$ possessing an
Euclidean generating basis $\{e^1,e^2\}$. They are inner
automorphisms (\ref{a5}) generated, e.g., by the
elements(\ref{sp620}):
\mar{sp630}
\beq
M_\al= e\cos(\al/2) -e^2e^1\sin(\al/2), \qquad T=e^2,
\label{sp630}
\eeq
of a group $O(2,\mathbb R)\subset \mathrm{Mat}(2,\mathbb R)$. It
should be however emphasized that there is no monomorphism
\be
\mathrm{Aut}[\cC(2,0)]\supset O(2,\mathbb R) \to
\mathrm{Mat}(2,\mathbb R),
\ee
whereas there exists an epimorphism
\mar{sp615}\beq
\mathrm{Mat}(2,\mathbb R)\supset O(2,\mathbb R) \to O(2,\mathbb
R)\subset \mathrm{Aut}[\cC(2,0)]. \label{sp615}
\eeq
\end{example}

\begin{example} \label{ss11} \mar{ss11}
Let us consider the Clifford algebra
$\cC(4,0)=\mathrm{Mat}(2,\mathbb H)$ (\ref{11}) whose generating
Euclidean space $V$ possesses the basis
$\{\epsilon^0,\epsilon^1,\epsilon^2,\epsilon^3\}$ (\ref{250'}).
Its elements $\{\epsilon^0,\epsilon^1,\epsilon^2,\epsilon^3,
\epsilon^5\}$ (see the notation (\ref{431})) make up a basis for a
five-dimensional Euclidean space $(W,\chi)$ such that
\be
ww'+w'w =2\chi(w,w')e, \qquad w,w'\in W.
\ee
Similarly to Example \ref{ss6}, one can show that
\be
\mathrm{Aut}[\cC(4,0)]=PGL(2,\mathbb H)=SO(5).
\ee
Due to the isomorphisms (\ref{11}), this also is the case of real
rings $\cC(1,3)$ and $\cC(0,4)$.
\end{example}

\begin{example} \label{sp633} \mar{sp633}
Let us consider the Clifford algebra
$\cC(3,1)=\mathrm{Mat}(4,\mathbb HR)$ (\ref{12}). Its automorphism
group is
\mar{sp634}\beq
\mathrm{Aut}[\cC(3,1)]=PGL(4,\mathbb R)=SL(4,\mathbb
R)/\mathbb{Z}_4. \label{sp634}
\eeq
\end{example}

\subsection{Pin and Spin groups}

The epimorphism (\ref{103}) yields an action of a Clifford group
$\mathrm{Cliff}(V,\eta)$ in a pseudo-Euclidean space $(V,\eta)$ by
the adjoint representation (\ref{a5}). However, this action is not
effective. Therefore, one consider subgroups Pin$(V,\eta)$ and
Spin$(V,\eta)$ of $\mathrm{Cliff}(V,\eta)$. The first one is
generated by elements $v\in V$ such that $\eta(v,v)=\pm 1$. A
group Spin$(V,\eta)$ is defined as an intersection
\mar{104a}\beq
\mathrm{Spin}(V,\eta)=\mathrm{Pin}(V,\eta)\cap \cC^0(V,\eta)
\label{104a}
\eeq
of a group Pin$(V,\eta)$ and the even subring $\cC^0(V,\eta)$ of a
Clifford algebra $\cC(V,\eta)$. In particular, generating elements
$v\in V$ of Pin$(V,\eta)$ do not belong to its subgroup
Spin$(V,\eta)$. Their images under the epimorphism $\zeta$
(\ref{103}) are reflections (\ref{88}) of $V$.

\begin{theorem} \mar{t1} \label{t1}
The epimorphism (\ref{103}) restricted to the Pin and Spin groups
leads to short exact sequences of groups
\mar{106,4}\ben
&& e\to \mathbb Z_2\longrightarrow
\mathrm{Pin}(V,\eta)\op\longrightarrow^\zeta O(V,\eta)\to e.
\label{106} \\
&& e\to \mathbb Z_2\longrightarrow
\mathrm{Spin}(V,\eta)\op\longrightarrow^\zeta SO(V,\eta)\to e,
\label{104}
\een
where $\mathbb Z_2\to (e,-e)\subset \mathrm{Spin}(V,\eta)$.
\end{theorem}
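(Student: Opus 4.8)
The plan is to establish both exact sequences at once, determining separately the image and the kernel of $\zeta$ restricted to $\mathrm{Pin}(V,\eta)$ and to $\mathrm{Spin}(V,\eta)$; throughout, $(V,\eta)$ is even-dimensional as in the rest of the section.

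For the image I would proceed as follows. Since $\mathrm{Pin}(V,\eta)\subset\mathrm{Cliff}(V,\eta)$, Theorem \ref{260} immediately gives $\zeta(\mathrm{Pin}(V,\eta))\subseteq O(V,\eta)$; for surjectivity, note that rescaling a vector $w\in V$ by a nonzero real does not change the inner automorphism $\wh w$, so every $g\in\mathrm{Cliff}(V,\eta)$ — being a product of vectors with $\eta(w_i,w_i)\neq 0$ — has the same $\zeta$-image as a product of unit vectors, i.e. as an element of $\mathrm{Pin}(V,\eta)$; hence $\zeta(\mathrm{Pin}(V,\eta))=\zeta(\mathrm{Cliff}(V,\eta))=O(V,\eta)$ by Theorem \ref{260}. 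For the Spin version I would use formula (\ref{88}): for a unit vector $w$ the restriction $\wh w|_V$ is the hyperplane reflection $R_w$ of (\ref{ss9}) composed with $-\mathrm{id}_V$, and since $\mathrm{dim}\,V$ is even, $-\mathrm{id}_V$ lies in $SO(V,\eta)$, is central there, and squares to $\mathrm{id}_V$. An element of $\mathrm{Spin}(V,\eta)=\mathrm{Pin}(V,\eta)\cap\cC^0(V,\eta)$ is a product of an even number of unit vectors (a product of $k$ generators lies in the even subring exactly when $k$ is even), so its image is an even product of the reflections $R_{w_i}$, of determinant $+1$, giving $\zeta(\mathrm{Spin}(V,\eta))\subseteq SO(V,\eta)$. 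Conversely, the Cartan--Dieudonn\'e theorem writes any $f\in SO(V,\eta)$ as $R_{w_1}\cdots R_{w_{2k}}$, and, after normalizing the $w_i$ to unit vectors, this equals $\zeta(w_1\cdots w_{2k})$ with $w_1\cdots w_{2k}\in\mathrm{Pin}(V,\eta)\cap\cC^0(V,\eta)=\mathrm{Spin}(V,\eta)$. Hence $\zeta(\mathrm{Spin}(V,\eta))=SO(V,\eta)$.

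For the kernel, the key point is that $\zeta(g)$ being the identity is equivalent to $g$ commuting with all of $V$, hence — since $V$ generates the ring — to $g$ lying in the center $\cZ$ of $\cC(V,\eta)$. Because $\mathrm{dim}\,V$ is even, Theorem \ref{k3} puts $\cC(V,\eta)$ in the form $\mathrm{Mat}(r,\mathbb R)$ or $\mathrm{Mat}(r,\mathbb H)$ (the $\mathbb C$ and the two direct-sum rows occur only for odd $p-q$), so $\cZ=\mathbb R e$. If $\lambda e=v_1\cdots v_k$ with each $\eta(v_i,v_i)=\pm 1$, then right-multiplying by the reversed product $v_k\cdots v_1$ gives $\lambda^2 e=\pm e$, which forces $\lambda=\pm 1$; thus $\mathrm{Pin}(V,\eta)\cap\cZ=\{e,-e\}$, both being present: $e$ is the unit, and $-e=v\cdot(-v)$ whenever $V$ has a vector $v$ with $\eta(v,v)=1$, while $-e=(v_1v_2)^2$ for an orthonormal pair when $\eta$ is negative definite — in each case an even product of unit vectors, so $-e\in\mathrm{Spin}(V,\eta)$, and $-e\neq e$ since the characteristic is $0$. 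Therefore $\{e,-e\}\cong\mathbb Z_2$ is the common kernel of $\zeta|_{\mathrm{Pin}}$ and $\zeta|_{\mathrm{Spin}}$, and it is central, hence normal.

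Assembling these facts gives the surjections $\mathrm{Pin}(V,\eta)\to O(V,\eta)$ and $\mathrm{Spin}(V,\eta)\to SO(V,\eta)$ with kernel the central $\mathbb Z_2=\{e,-e\}$, i.e. the claimed short exact sequences. I expect the main obstacle to be the sign bookkeeping forced by working with the plain (untwisted) adjoint representation $\wh g$ instead of a twisted one: each generating reflection of $O(V,\eta)$ is realized only up to the factor $-\mathrm{id}_V$, so one must lean on $\mathrm{dim}\,V$ being even — so that $-\mathrm{id}_V\in SO(V,\eta)$ and squares to the identity — to keep the Pin-versus-$O$ and Spin-versus-$SO$ correspondences coherent; this is exactly where Remark \ref{L2} and formula (\ref{ss10}) are used. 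Everything else is formal, given Theorems \ref{260} and \ref{k3}.
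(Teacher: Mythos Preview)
The paper does not prove Theorem~\ref{t1}; it is stated as a known result and followed immediately by a remark and a sequence of examples. Your argument is correct and supplies exactly the standard proof that the paper omits: surjectivity via Theorem~\ref{260} plus the observation that rescaling vectors does not change the inner automorphism, the Spin/$SO$ correspondence via Cartan--Dieudonn\'e together with the parity bookkeeping from formula~(\ref{88}), and the kernel computation via the center $\cZ=\mathbb R e$ of $\cC(V,\eta)$ for even $n$ (Theorem~\ref{k3}).

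Two cosmetic remarks. First, in the kernel step, your line ``right-multiplying by $v_k\cdots v_1$ gives $\lambda^2 e=\pm e$'' tacitly uses the reversal anti-automorphism: from $\lambda e=v_1\cdots v_k$ one gets $v_k\cdots v_1=\lambda e$ as well, and then the product of the two is $\lambda^2 e=\prod_i\eta(v_i,v_i)\,e=\pm e$; you might spell that out. Second, your case analysis showing $-e\in\mathrm{Spin}(V,\eta)$ is slightly longer than needed in the negative-definite case (already $v\cdot v=-e$ for any $v$ with $\eta(v,v)=-1$), but what you wrote is still correct.
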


It should be emphasized that an epimorphism $\zeta$ in (\ref{106})
and (\ref{104}) is not a trivial bundle unless $\eta$ is of
signature $(1,1)$ (see Example \ref{e11}). It is a universal
coverings over each component of $O(V,\eta)$.

\begin{example} \mar{e10} \label{e10} Let us consider the Clifford algebra $\cC(2,0)=\mathrm{Mat}(2,\mathbb R)$ (\ref{6})
possessing the Euclidean basis $\{e^1,e^2\}$ (\ref{60}) (Example
\ref{ss6}). Then a group Pin$(2,0)$ is generated by elements
\be
a_1 e^1 + a_2 e^2, \qquad a_1^2 + a_2^2=-\det(a_1 e^1 + a_2
e^2)=1, \qquad a_1,a_2\in\mathbb R.
\ee
An even subring of $\cC(2,0)$ is represented by matrices $a e +
b\tau^2$, $a,b\in\mathbb R$. Then a group Spin$(2,0)$ consists of
elements
\be
a\mathbf{1} + b e^2e^1, \qquad \det(a e + be^2e^1)=a^2+b^2=1,
\qquad a,b\in\mathbb R,
\ee
i.e., of matrices $M_\alpha$ (\ref{sp630}) which constitute a
group $SO(2,\mathbb R)$. The epimorphism (\ref{104}) reads
\mar{282}\beq
\mathrm{Mat}(2,\mathbb R)\supset SO(2,\mathbb R) \to SO(2,\mathbb
R)\subset \mathrm{Aut}[\cC(2,0)]. \label{282}
\eeq
(cf. (\ref{sp615})). Its kernel $\zeta^{-1}(e)$ is a subgroup
$(e,-e)$ of $SO(2,\mathbb R)$.
\end{example}

\begin{example} \mar{e11} \label{e11} Let us consider the Clifford algebra
$\cC(1,1)=\mathrm{Mat}(2,\mathbb R)$ (\ref{6}). Its generating
pseudo-Euclidean space possesses a basis $\{e^1,e^1e^2\}$. Then a
group Pin$(1,1)$ is generated by elements
\be
a_1 e^1 + a_2e^1e^2, \qquad a_1^2 - a_2^2=\det(a_1e^1 +
a_2e^1e^2)=\pm 1,  \qquad a_1,a_2\in\mathbb R.
\ee
An even subring of $\cC(1,1)$ is represented by matrices $a e +
be^2$, $a,b\in\mathbb R$. Then a group Spin$(1,1)$ consists of
elements
\be
ae + be^2, \qquad \det(ae + be^2)=a^2-b^2=\pm 1, \qquad
a,b\in\mathbb R,
\ee
i.e., of elements
\be
\pm[(\cosh s)e + (\sinh s)e^2]=\pm\exp(se^2)=\pm M_s, \qquad \pm
e^2M_s.
\ee
It is isomorphic to a group $\mathbb Z_2\times\mathbb
Z_2\times\mathbb R^+$, where $R^+$ is a group of positive real
numbers. Its epimorphism $\zeta$ (\ref{104}) onto a subgroup
$SO(1,1)$ of $\mathrm{Aut}[\cC(2,0)]=SO(2,1)$ has the kernel
$(e,-e)=\mathbb Z_2\times e\times e$. It is readily observed that
$\zeta(e^2)$ is a total reflection of $\mathbb R^2$. Therefore the
exact sequence (\ref{104}) for Spin$(1,1)$ is reduced to the exact
sequence
\be
e\to \mathbb Z_2\longrightarrow
\mathrm{Spin}^+(1,1)\op\longrightarrow^\zeta SO^0(1,1)\to e
\ee
where Spin$^+(1,1)=\mathbb Z_2\times \mathbb R^+$ is a subgroup of
matrices $\pm M_s$ and $SO^0(1,1)$ is a connected component of the
unit of $SO(1,1)$.
\end{example}

\begin{example} \mar{e12} \label{e12}
Let the Clifford algebra $\cC(1,3)=\mathrm{Mat}(2,\mathbb H)$
(\ref{11}) be represented as a subalgebra of the complex Clifford
algebra $\mathbb C\cC(4)=\mathrm{Mat}(4,\mathbb C)$ (\ref{a27})
whose generating elements are Dirac's matrices
$(\wt\gamma^0,\wt\gamma^j)$ (\ref{41'}). A group Pin$(1,3)$ is
generated by  matrices
\mar{q1}\ben
&& a_\mu\wt\gamma^\mu, \quad a_0^2-\op\sum_{i=1,2,3}a_i^2=\det(a_0\mathbf{1} +a_i\sigma^i)=\pm
1,\quad a_\mu\in\mathbb R,
\label{q1}\\
&&
\det(a_\mu\wt\gamma^\mu)=\left(a_0^2-\op\sum_{i=1,2,3}a_i^2\right)^2=1.
\nonumber
\een
A group Spin$(1,3)$ is a subgroup of Pin$(1,3)$ whose elements are
even products of the matrices (\ref{q1}). It is generated by
matrices
\be
&& (a_0\wt\gamma^0 +a_i\wt\gamma^i)(b_0\wt\gamma^0
+b_i\wt\gamma^i)=(a_0\mathbf{1}
+a_i\wt\gamma^i\wt\gamma^0)(b_0\mathbf{1}
+b_i\wt\gamma^0\wt\gamma^i), \quad a_\mu,b_\mu\in\mathbb R,\\
&& \det(a_0\mathbf{1} +a_i\sigma^i)= \pm 1, \qquad \det(b_0\mathbf{1} +b_i\sigma^i)=\pm
1,\\
&& \det(a_\mu\wt\gamma^\mu)=\det(b_\mu\wt\gamma^\mu)=1.
\ee
Then elements of Spin$(1,3)$ take a form
\be
&& \begin{pmatrix} c_0 \mathbf{1}+c_i\sigma^i& 0\\0& \ol c_0
\mathbf{1}-\ol c_i\sigma^i
\end{pmatrix},  \qquad c_\mu\in\mathbb C,\\
&& \det(c_0 \mathbf{1}+c_i\sigma^i)=\det(\ol c_0
\mathbf{1}-\ol c_i\sigma^i)=\pm 1.
\ee
They read
\mar{a9}\beq
M_A=\begin{pmatrix} A & 0\\0& {\rm Tr}
A^*\mathbf{1}-A^*\end{pmatrix}, \label{a9}
\eeq
where $A$ are complex $(2\times 2)$-matrices such that
\be
\det A=\det({\rm Tr} A^*\mathbf{1}-A^*)=\pm 1.
\ee
A group Spin$(1,3)$ contains two connected components
Spin$^+(1,3)$ and Spin$^-(1,3)$ which consist of the elements
(\ref{a9}) with $\det A=1$ and $\det A=-1$, respectively. Being a
connected component of the unity, the first one is a group
$SL(2,\mathbb C)$. Elements of Spin$^-(1,3)$ come from elements of
Spin$^+(1,3)$ by means of multiplication
\be
\mathrm{Spin}^+(1,3)\ni M_A \to M_{i\mathbf 1}M_A\in
\mathrm{Spin}^-(1,3).
\ee
We have the exact sequence (\ref{104}):
\mar{105a}\beq
e\to \mathbb Z_2\longrightarrow
\mathrm{Spin}(1,3)\op\longrightarrow^\zeta SO(1,3)\to e,
\label{105a}
\eeq
where $\zeta(M_{i\mathbf 1})\in SO(1,3)$ is a total reflection.
This exact sequence is restricted to the exact sequence
\mar{105}\beq
e\to \mathbb Z_2\longrightarrow
\mathrm{Spin}^+(1,3)\op\longrightarrow^\zeta SO^0(1,3)\to e
\label{105}
\eeq
where $SO^0(1,3)$, called the proper Lorentz group, is a connected
component of the unit of $SO(1,3)$. Let us denote
\mar{300}\beq
\mathrm{L_s}=\mathrm{Spin}^+(1,3)=SL(2, \mathbb C), \qquad
\mathrm{L}=SO^0(1,3). \label{300}
\eeq
Group spaces of $\mathrm{L_s}$ and L are topological spaces
$S^3\times \mathbb R^3$ and $\mathbb R P^3\times \mathbb R^3$,
respectively.
\end{example}

\begin{example} \mar{e13} \label{e13}
Let the Clifford algebra $\cC(4,0)=\mathrm{Mat}(2,\mathbb H)$
(\ref{11}) be represented as a subalgebra of the complex Clifford
algebra $\mathbb C\cC(4)=\mathrm{Mat}(4,\mathbb C)$ (\ref{a27})
whose generating elements are the matrices
$(\wt\gamma^0,-i\wt\gamma^j)$ (\ref{41'}). A group Pin$(4,0)$ is
generated by  matrices
\mar{q1a}\ben
&& a_0\wt\gamma^0+ia_i\wt\gamma^i, \qquad \op\sum_{\mu=0,...,3}a_\mu^2=\det(a_0\mathbf{1} +a_i\tau^i)=1,
\qquad a_\mu\in\mathbb R, \label{q1a}\\
&&
\det(a_0\wt\gamma^0+ia_i\wt\gamma^i)=\left(\op\sum_{\mu=0,1,2,3}a_\mu^2\right)^2=1.
\nonumber
\een
A group Spin$(4,0)$ is a subgroup of Pin$(4,0)$ whose elements are
even products of the matrices (\ref{q1a}). It is generated by
matrices
\be
&&(a_0\wt\gamma^0 +ia_i\wt\gamma^i)(b_0\wt\gamma^0
+ib_i\wt\gamma^i)=(a_0\mathbf{1}
+ia_i\wt\gamma^i\wt\gamma^0)(b_0\mathbf{1}
+ib_i\wt\gamma^0\wt\gamma^i), \quad a_\mu,b_\mu\in\mathbb R,\\
&& \det(a_0\mathbf{1} +a_i\tau^i)= \det(b_0\mathbf{1} +b_i\tau^i)=
1,\\
&& \det(a_0\wt\gamma^0 +ia_i\wt\gamma^i)=\det(b_0\wt\gamma^0
+ib_i\wt\gamma^i)=1,
\ee
which take a form
\be
&& \begin{pmatrix} c_0 \mathbf{1}+c_i\tau^i& 0\\0& d_0
\mathbf{1}+ d_i\tau^i
\end{pmatrix},  \qquad c_\mu,d_\mu\in\mathbb R,\\
&& \det(c_0 \mathbf{1}+c_i\tau^i)=\det(d_0
\mathbf{1}+d_i\tau^i)=1.
\ee
Then elements of Spin$(4,0)$ read
\mar{a9a}\beq
\begin{pmatrix} A & 0\\0& B\end{pmatrix}, \qquad \det A=\det B=1, \label{a9a}
\eeq
where $A$, $B$ are unimodular unitary complex $(2\times
2)$-matrices. Thus, a group Spin$(4,0)$ is isomorphic to a product
$SU(2)\times SU(2)$. It contains a subgroup $(e,\gamma^0)$ such
that $\zeta(\gamma^0)$ is a total reflection of $\mathbb R^4$.
Thus, the exact sequence (\ref{104}) is reduced
\mar{301}\beq
e\to \mathbb Z_2\longrightarrow
\mathrm{Spin}^+(4,0)\op\longrightarrow^\zeta SO(3)\times SO(3)\to
e \label{301}
\eeq
where $\mathrm{Spin}^+(4,0)=\mathbb Z_2\times SU(2)/\mathbb
Z_2\times SU(2)/\mathbb Z_2$ and $\mathbb Z_2= (e,-e)$.
\end{example}

\subsection{Automorphisms of complex Clifford algebras}

Let $\mathbb C\cC(n)$ be the complex Clifford algebra (\ref{a25})
of even $n$.

\begin{theorem} \label{k21} \mar{k21}
All automorphisms of a complex Clifford algebra $\mathbb C\cC(n)$
are inner.
\end{theorem}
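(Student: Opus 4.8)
The plan is to reduce this to the corresponding statement for matrix algebras, exactly as in the proof of Theorem \ref{k15} for the real case. By Theorem \ref{k8}, for even $n$ a complex Clifford algebra $\mathbb C\cC(n)$ is isomorphic as a ring to the matrix algebra $\mathrm{Mat}(2^{n/2},\mathbb C)$. First I would fix such a ring isomorphism and transport the problem: an automorphism of $\mathbb C\cC(n)$ corresponds to a ring automorphism of $\mathrm{Mat}(2^{n/2},\mathbb C)$.

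Next I would invoke the structural facts already recorded in Corollary \ref{k12}: $\mathrm{Mat}(2^{n/2},\mathbb C)$ is a central simple algebra, with center the scalar matrices $\cong\mathbb C$. The key step is then the Skolem--Noether theorem, applied over the ground field $\mathbb C$: every automorphism of a finite-dimensional central simple $\mathbb C$-algebra that fixes the center pointwise is inner. Since the center here is just $\mathbb C e$ and any ring automorphism is in particular $\mathbb C$-linear on the center (the only ring automorphism of $\mathbb C$ arising here is the identity, because we treat $\mathbb C\cC(n)$ as a complex ring and its automorphisms as complex-algebra automorphisms, cf. Definition \ref{sp603} and the surrounding conventions), Skolem--Noether applies and gives that the automorphism is conjugation by some invertible matrix. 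Pulling back through the fixed isomorphism, this says the original automorphism of $\mathbb C\cC(n)$ is $q\mapsto gqg^{-1}$ for some $g\in\cG\mathbb C\cC(n)$, i.e.\ it is inner.

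The main obstacle, such as it is, is the bookkeeping about what ``automorphism'' means: one must be sure that the automorphisms under consideration are $\mathbb C$-algebra automorphisms (equivalently, ring automorphisms acting as the identity on the central copy of $\mathbb C$), so that Skolem--Noether in its central-simple-algebra form applies directly; if one allowed ring automorphisms inducing complex conjugation on the center, the statement would fail, so this point should be stated explicitly. Everything else is a direct transcription of the argument already given for Theorem \ref{k15}, now over $\mathbb C$ instead of over $\mathbb R$, and I would keep the write-up to two or three sentences mirroring that proof.
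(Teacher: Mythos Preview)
Your proposal is correct and follows essentially the same approach as the paper: invoke Theorem \ref{k8} for the isomorphism $\mathbb C\cC(n)\cong\mathrm{Mat}(2^{n/2},\mathbb C)$, note via Corollary \ref{k12} that this is central simple with center $\mathbb C$, and apply the Skolem--Noether theorem. Your additional remark about needing the automorphisms to be $\mathbb C$-linear (so as not to allow complex conjugation on the center) is a worthwhile caveat that the paper's proof passes over in silence.
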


\begin{proof}
By virtue of Theorem \ref{k8}, there is the ring isomorphism
(\ref{k9}):
\mar{k26}\beq
\mathbb C\cC(n)=\mathrm{Mat}(2^{n/2}, \mathbb C). \label{k26}
\eeq
In accordance with Corollary \ref{k12}, this algebra is a central
simple complex algebra with the center $\cZ=\mathbb C$.  In
accordance with the above-mentioned Skolem--Noether theorem
automorphisms of these algebras are inner.
\end{proof}

\begin{theorem} \label{sp503} \mar{sp503}
Invertible elements of the Clifford algebra (\ref{k26}) constitute
a general linear group $GL(2^{n/2},\mathbb C)$.
\end{theorem}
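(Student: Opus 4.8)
### Proof proposal for Theorem \ref{sp503}

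The plan is to mirror exactly the argument used for the real case in Theorem \ref{sp501}, now applied to the complex Clifford algebra $\mathbb C\cC(n)$. The work is almost entirely bookkeeping: the substantive input has already been established. First I would invoke Theorem \ref{k8}, which gives the ring isomorphism (\ref{k26}), namely $\mathbb C\cC(n)=\mathrm{Mat}(2^{n/2},\mathbb C)$ for even $n$. This reduces the problem to a purely matrix-algebraic statement: identify the invertible elements of a full matrix algebra $\mathrm{Mat}(r,\mathbb C)$ with $r=2^{n/2}$.

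The next step is the elementary linear-algebra fact that an element of $\mathrm{Mat}(r,\mathbb C)$ is invertible (i.e., has a two-sided inverse in the ring) if and only if it has nonzero determinant, and that the set of such elements, with ring multiplication, is precisely the general linear group $GL(r,\mathbb C)$ by definition. Transporting this back along the isomorphism (\ref{k26}) identifies the multiplicative group $\cG\mathbb C\cC(n)$ of invertible elements of $\mathbb C\cC(n)$ with $GL(2^{n/2},\mathbb C)$, which is the assertion. I would also remark, as in the real case, that this group contains all elements $e^i$ of the Euclidean basis, since $(e^i)^2=e$ by (\ref{210}), so each $e^i$ is its own inverse; this is a useful sanity check and connects to the subsequent Clifford/Pin/Spin discussion, though it is not strictly needed for the statement.

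There is essentially no obstacle here. The only point requiring a word of care is that "invertible" must be understood in the sense of the ring $\mathbb C\cC(n)$ itself, and one should note that a ring isomorphism automatically preserves the property of being a unit and hence restricts to a group isomorphism on the respective groups of units — so the isomorphism (\ref{k26}) does the job with no extra hypotheses. One might also briefly note that, unlike the real case where Theorem \ref{k3} produces several possibilities ($\mathbb R$, $\mathbb C$, $\mathbb H$, or direct sums), for even complex dimension Theorem \ref{k8} yields the single clean form $\mathrm{Mat}(2^{n/2},\mathbb C)$, so no case analysis is needed. The proof is therefore a two-line corollary of Theorem \ref{k8} together with the definition of $GL$.
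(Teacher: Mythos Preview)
Your proposal is correct and matches the paper's approach: the paper in fact states Theorem \ref{sp503} without proof, treating it as an immediate consequence of the isomorphism (\ref{k26}) established in Theorem \ref{k8}, exactly as in the parallel real case (Theorem \ref{sp501}). Your write-up simply makes explicit the one-line observation that units of $\mathrm{Mat}(r,\mathbb C)$ form $GL(r,\mathbb C)$ and that ring isomorphisms preserve units, which is precisely the intended argument.
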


\begin{theorem} \label{sp504} \mar{sp504}
Acting in $\mathbb C\cC(n)$ by left and right multiplications,
this group also acts in a Clifford algebra by the adjoint
representation, and we obtain an epimorphism
\be
GL(2^{n/2},\mathbb C) \to PGL(2^{n/2},\mathbb
C)=\mathrm{Aut}[\cC(n)].
\ee
and a group of its automorphisms is a projective linear group
\mar{k25}\ben
&& \mathrm{Aut}[\cC(n)]=PGL(2^{n/2}, \mathbb C)=GL(2^{n/2}, \mathbb
C)/\mathbb C=  \label{k25}\\
&& \qquad SL(2^{n/2}, \mathbb C)/\mathbb Z_{2^{n/2}}. \nonumber
\een
\end{theorem}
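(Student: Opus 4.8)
The plan is to derive this from the three immediately preceding results—Theorem \ref{k21} (all automorphisms of $\mathbb C\cC(n)$ are inner), Theorem \ref{sp503} (the invertible elements form $GL(2^{n/2},\mathbb C)$), and the isomorphism $\mathbb C\cC(n)=\mathrm{Mat}(2^{n/2},\mathbb C)$ from Theorem \ref{k8}—in exact parallel with the real case already handled in Theorems \ref{sp501}, \ref{k33} and the epimorphism (\ref{83}). First I would observe that since $GL(2^{n/2},\mathbb C)$ consists of invertible elements of $\mathbb C\cC(n)$, it acts on the algebra both by left and right multiplications, hence by the adjoint (conjugation) action $\wh g: q\mapsto gqg^{-1}$, which by construction is a ring automorphism of $\mathbb C\cC(n)$ for each $g$. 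This gives a group homomorphism $GL(2^{n/2},\mathbb C)\to \mathrm{Aut}[\cC(n)]$.

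Next I would argue surjectivity: by Theorem \ref{k21} every automorphism of $\mathbb C\cC(n)$ is inner, i.e. of the form $\wh g$ for some invertible $g$, so the homomorphism is an epimorphism. Then I would compute its kernel. An element $g$ acts trivially by conjugation iff $gq=qg$ for all $q\in \mathbb C\cC(n)$, i.e. iff $g$ lies in the center; by Corollary \ref{k12} the center of $\mathbb C\cC(n)=\mathrm{Mat}(2^{n/2},\mathbb C)$ is $\mathbb C$ (scalar matrices), so the kernel is exactly $\mathbb C^\times\cong\mathbb C\setminus\{0\}$ (written $\mathbb C$ in the paper's shorthand, matching the real-case convention in (\ref{k34})). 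Hence by the first isomorphism theorem $\mathrm{Aut}[\cC(n)]\cong GL(2^{n/2},\mathbb C)/\mathbb C = PGL(2^{n/2},\mathbb C)$, which is the first displayed equality.

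Finally, for the second equality $PGL(2^{n/2},\mathbb C)=SL(2^{n/2},\mathbb C)/\mathbb Z_{2^{n/2}}$, I would use the standard fact that over an algebraically closed field the determinant-normalization map $SL\hookrightarrow GL$ induces an isomorphism $SL(N,\mathbb C)/\mu_N \xrightarrow{\sim} GL(N,\mathbb C)/\mathbb C^\times$ with $N=2^{n/2}$: every class in $GL/\mathbb C^\times$ has a representative of determinant $1$ because $\mathbb C$ is closed under $N$-th roots (scale $g$ by $(\det g)^{-1/N}$), and two such $SL$-representatives differ by a scalar of determinant $1$, i.e. by an $N$-th root of unity $\mathbb Z_N$. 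Substituting $N=2^{n/2}$ gives the claimed form.

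The routine steps (adjoint action is by automorphisms, kernel is the center) are immediate. The only point needing care is the $SL/\mathbb Z_N$ reformulation, and even that is mild: the essential input is that $\mathbb C$ is algebraically closed, so $N$-th roots of both determinants and unity exist—this is precisely where the complex case differs from, and is cleaner than, the real case. I expect the main "obstacle," if any, is purely notational: reconciling the paper's habit of writing "$\mathbb C$" for the scalar subgroup $\mathbb C^\times$ and "$\mathbb Z_{2^{n/2}}$" for the group $\mu_{2^{n/2}}$ of roots of unity, which one should flag but need not belabor.
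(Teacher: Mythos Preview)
Your proposal is correct and matches the paper's approach exactly: the paper gives no separate proof of Theorem \ref{sp504}, treating it as the complex analogue of (\ref{83}) and Theorem \ref{k33}, drawn directly from Theorems \ref{k21}, \ref{sp503} and Corollary \ref{k12}. Your write-up is in fact more complete than the paper's, since you spell out the kernel computation and the $SL/\mathbb Z_N$ identification that the paper simply asserts.
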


Automorphisms of its real subrings $\cC(n,0)$ yield automorphisms
of $\mathbb C\cC(n)$, but do not exhaust all automorphisms of
$\mathbb C\cC(n)$ (Theorem \ref{ss26}).

Let us note that automorphisms under discussions need not be
automorphisms of $\mathbb C\cC(n)$ as an involutive algebra
(Remark \ref{ss30})

Any automorphism $g$ of a complex Clifford algebra $\mathbb
C\cC(n)$ sends its Euclidean generating space $(\cV,\kappa)$ onto
some generating space
\be
(\cV',\kappa'), \qquad \kappa'(g(v),g(v'))=\kappa(v,v'), \qquad
v,v'\in \cV,
\ee
which is the Euclidean one with respect to the basis $\{g(e^i\}$.
If $g$ preserves $\cV$, then
\be
\kappa(g(v),g(v'))=\kappa(v,v'), \qquad v,v'\in \cV,
\ee
i.e., $g$ is an automorphism of a metric space $(\cV,\kappa)$.

Conversely, any automorphism
\be
 g:\cV\ni v\to gv\in \cV, \qquad
\kappa(g(v),g(v'))=\kappa(v,v'), \qquad g\in O(n,\mathbb C),
\ee
of an Euclidean generating space $(\cV,\kappa)$ is prolonged to an
automorphism of a ring $\mathbb C\cC(n)$. Then we have a
monomorphism
\mar{ss25}\beq
O(n,\mathbb C)\to \mathrm{Aut}[\mathbb C\cC(n)] \label{ss25}
\eeq
of a group $O(n,\mathbb C)$ of automorphisms of an Euclidean
generating space $(\cV,\kappa)$ to a group of ring automorphisms
of $\mathbb C\cC(n)$. Herewith, an automorphism $g\in O(n,\mathbb
C)$ of a complex ring $\mathbb C\cC(n)$ is the identity one iff
its restriction to $\cV$ is an identity map of $\cV$.
Consequently, the following is true.

\begin{theorem} \label{ss26} \mar{ss26}
All ring automorphisms of a complex Clifford algebra $\mathbb
C\cC(n)$ preserving its Euclidean generating space constitute a
group $O(n,\mathbb C)$.
\end{theorem}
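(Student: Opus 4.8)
The plan is to show two inclusions: first that every element of $O(n,\mathbb C)$ gives, by prolongation, a ring automorphism of $\mathbb C\cC(n)$ that preserves the Euclidean generating space $(\cV,\kappa)$, and second that every ring automorphism with this preservation property is of this form. The first inclusion is already essentially supplied by the discussion preceding the statement: given $g\in O(n,\mathbb C)$, i.e.\ a linear map of $\cV$ with $\kappa(g(v),g(v'))=\kappa(v,v')$, one extends $g$ to the tensor algebra $\otimes\cV$ multiplicatively and checks that the defining ideal of $\mathbb C\cC(n)$ (generated by $v\otimes v' + v'\otimes v - 2\kappa(v,v')e$) is sent into itself, because $g(v)g(v') + g(v')g(v) = 2\kappa(g(v),g(v'))e = 2\kappa(v,v')e$. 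Hence $g$ descends to a ring endomorphism of $\mathbb C\cC(n)$; since $g$ is invertible on $\cV$, the same construction applied to $g^{-1}$ produces a two-sided inverse, so $g$ prolongs to a ring automorphism, and by construction its restriction to $\cV$ is $g$ itself, so it preserves $\cV$. This gives the monomorphism (\ref{ss25}); the injectivity is the observation, stated in the excerpt, that a prolonged automorphism is the identity iff its restriction to the generating space $\cV$ is.

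For the reverse inclusion, let $\phi$ be a ring automorphism of $\mathbb C\cC(n)$ with $\phi(\cV)=\cV$. The key point is that the quadratic relation $v^2 = \kappa(v,v)e$ holds in $\mathbb C\cC(n)$ for every $v\in\cV$, so applying $\phi$ gives $\phi(v)^2 = \kappa(v,v)\phi(e) = \kappa(v,v)e$ (using $\phi(e)=e$). Since $\phi(v)\in\cV$ by hypothesis, this reads $\kappa(\phi(v),\phi(v)) = \kappa(v,v)$ for all $v\in\cV$; polarizing the quadratic form recovers $\kappa(\phi(v),\phi(v'))=\kappa(v,v')$ for all $v,v'$, i.e.\ $\phi|_\cV\in O(n,\mathbb C)$. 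Finally, $\phi$ and the prolongation of $\phi|_\cV$ are two ring automorphisms of $\mathbb C\cC(n)$ with the same restriction to $\cV$; since $\cV$ generates $\mathbb C\cC(n)$ as a ring, they coincide. Thus $\phi$ is precisely the prolongation of an element of $O(n,\mathbb C)$, and the two inclusions together identify the group in question with $O(n,\mathbb C)$.

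I do not expect a serious obstacle here: the argument is the complex-coefficient analogue of Lemma \ref{ss4} together with Theorem \ref{a3}, and nothing new is needed beyond the fact that $\cV$ generates the algebra and that $\kappa$ is determined by its associated quadratic form via polarization (valid since we are in characteristic zero). The only point requiring a moment's care is making sure the prolongation is well defined on the quotient $\otimes\cV/I_\kappa$ and is multiplicative — but this is routine and identical in form to the real case treated earlier. One might also remark that, in contrast with the inner automorphisms of Theorem \ref{k21}, a generic element of this $O(n,\mathbb C)$ need not be realizable as conjugation by an element of $\cV$ alone, though by the Cartan--Dieudonné argument of Remark \ref{L2} it is a product of such conjugations up to the total reflection; this is a remark rather than part of the proof.
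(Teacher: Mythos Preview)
Your proof is correct and follows essentially the same route as the paper: the paper's argument is the discussion immediately preceding the theorem, which (i) observes that any ring automorphism preserving $\cV$ must satisfy $\kappa(g(v),g(v'))=\kappa(v,v')$, hence restricts to an element of $O(n,\mathbb C)$, and (ii) notes that every $g\in O(n,\mathbb C)$ prolongs to a ring automorphism, with injectivity following since $\cV$ generates the algebra. You spell out the prolongation via the tensor algebra and the polarization step more explicitly than the paper does, but the structure is identical.
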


Let $Z\mathbb C\cC(n)$ denote a set of Euclidean generating spaces
of a complex Clifford algebra $\mathbb C\cC(n)$. If $n>1$, a set
$Z\mathbb C\cC(n)$ contains more than one element. Indeed, let
$\cV$ be a generating space of $\mathbb C\cC(n)$ spanned by its
Euclidean basis $\{e^1,\ldots, e^n\}$. Then, $\{e^1, ie^1e^2,
\ldots, ie^1e^n\}$ is an Euclidean basis for a different
generating space of $\mathbb C\cC(n)$.

\begin{lemma} \mar{468} \label{468} A group Aut$[\mathbb C\cC(n)]$ of ring automorphism of a
complex Clifford algebra $\mathbb C\cC(n)$ acts in a set $Z\mathbb
C\cC(n)$ effectively and transitively, i.e., no element
Aut$[\mathbb C\cC(n)]\ni g\neq e$ is the identity morphism of
$Z\mathbb C\cC(n)$ and, for any two different elements of
$Z\mathbb C\cC(n)$, there exists a ring automorphism of $\mathbb
C\cC(n)$ which sends them onto each other.
\end{lemma}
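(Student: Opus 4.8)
The plan is to realise $Z\mathbb C\cC(n)$ as a single orbit of $\mathrm{Aut}[\mathbb C\cC(n)]$. Transitivity is the easy half. Let $\cV'\in Z\mathbb C\cC(n)$ have Euclidean basis $\{w^1,\dots,w^n\}$, so $w^iw^j+w^jw^i=2\delta^{ij}e$; then $e^i\mapsto w^i$ is an isometry of $(\cV,\kappa)$ onto $(\cV',\kappa)$ for the standard Euclidean generating space $\cV$, and, exactly as in Lemma \ref{ss4} (in its evident complex analogue), it prolongs to a ring isomorphism of $\mathbb C\cC(n)$ with itself, i.e.\ an automorphism $g_{\cV'}$ with $g_{\cV'}(\cV)=\cV'$. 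Hence every element of $Z\mathbb C\cC(n)$ is $g(\cV)$ for some $g\in\mathrm{Aut}[\mathbb C\cC(n)]$, and for any $\cV',\cV''\in Z\mathbb C\cC(n)$ the automorphism $g_{\cV''}\circ g_{\cV'}^{-1}$ carries $\cV'$ onto $\cV''$.

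For effectiveness I take $n$ even (the focus of the section; the odd case is noted at the end) and suppose $g\in\mathrm{Aut}[\mathbb C\cC(n)]$ fixes every element of $Z\mathbb C\cC(n)$ setwise; the goal is $g=e$. Since $g(\cV)=\cV$, Theorem \ref{ss26} gives $g|_{\cV}\in O(n,\mathbb C)$. Now bring in the volume element $\omega=e^1\cdots e^n$: for even $n$ one checks $e^j\omega=-\omega e^j$ for every $j$ and $\omega^2=\pm e$, so after multiplying by $1$ or $i$ one gets $\wt\omega$ with $(\wt\omega)^2=e$ and $e^j\wt\omega=-\wt\omega e^j$ for all $j$. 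For each $k$ the $n$ elements $\{e^j:j\ne k\}\cup\{\wt\omega\}$ pairwise anticommute, square to $e$, and generate $\mathbb C\cC(n)$ (one recovers $e^k$ from $\wt\omega$ and the $e^j$, $j\ne k$), hence span a Euclidean generating space $\cV^\sharp_k$. Because $\wt\omega$ has degree $n\ge 2$, the intersection $\cV\cap\cV^\sharp_k$ is exactly the hyperplane $H_k=\mathrm{span}\{e^j:j\ne k\}$, so $g$, fixing both $\cV$ and $\cV^\sharp_k$, fixes $H_k$. Since $\bigcap_{l\ne k}H_l=\mathbb C e^k$, the automorphism $g$ fixes every line $\mathbb C e^k$, i.e.\ $g(e^k)=\mu_k e^k$, and $g|_{\cV}\in O(n,\mathbb C)$ together with $\kappa(e^k,e^k)=1$ force $\mu_k=\pm1$.

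It remains to exclude $\mu_k=-1$. Fix $k$ and, for generic $\theta$, set $a=\cos\theta\,e^k+\sin\theta\,\wt\omega$; then $a^2=e$, $a$ anticommutes with every $e^j$ ($j\ne k$), and $a$ with those $e^j$ generates $\mathbb C\cC(n)$, spanning a generating space $\cV^\flat_{k,\theta}=\mathbb C a\oplus H_k$. Since $g$ is an algebra homomorphism with $g(e^j)=\mu_j e^j$, we have $g(\wt\omega)=(\prod_j\mu_j)\wt\omega$, hence $g(a)=\mu_k\cos\theta\,e^k+(\prod_j\mu_j)\sin\theta\,\wt\omega$. Requiring $g(a)\in\cV^\flat_{k,\theta}$ and using that $e^k$, $\wt\omega$, and the $e^j$ ($j\ne k$) are linearly independent in $\mathbb C\cC(n)$: matching the $\wt\omega$-component gives $g(a)=(\prod_j\mu_j)a$, and then matching the $e^k$-component gives $\mu_k=\prod_j\mu_j$, i.e.\ $\prod_{j\ne k}\mu_j=1$. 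As this holds for every $k$, all the $\mu_k$ coincide, say $\mu_k=\mu\in\{\pm1\}$, with $\mu^{n-1}=1$; since $n-1$ is odd, $\mu=1$. Thus $g$ is the identity on $\cV$ and, being a ring automorphism of the algebra generated by $\cV$, $g=e$.

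The one real difficulty is manufacturing enough Euclidean generating spaces different from $\cV$ to constrain $g$; the volume element supplies them economically in even dimension. Two side remarks. First, effectiveness also follows at once from structure theory: the kernel of the action on $Z\mathbb C\cC(n)$ is a normal subgroup of $\mathrm{Aut}[\mathbb C\cC(n)]=PGL(2^{n/2},\mathbb C)$ (Theorem \ref{sp504}) contained in the proper subgroup $O(n,\mathbb C)$ of Theorem \ref{ss26}, and $PGL(2^{n/2},\mathbb C)=PSL(2^{n/2},\mathbb C)$ is simple, so the kernel is trivial. Second, for odd $n$ the volume element is central and the rotation step breaks down; there one realises $\mathbb C\cC(n)$ as the even subring of $\mathbb C\cC(n+1)$ (Lemma \ref{sp507}) and transports the argument, or argues separately on the two matrix summands of Theorem \ref{k8}.
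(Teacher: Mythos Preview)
Your proof is correct and takes a genuinely different, more careful route than the paper. Transitivity is handled identically in both: send one Euclidean basis to another and prolong. For effectiveness the paper argues in a single stroke: given $g\neq e$ preserving $\cV$, it chooses $v\in\cV$ with $v^2=e$ and $g(v)\neq v$, extends to an Euclidean basis $\{v,e^2,\ldots,e^n\}$, and asserts that the generating space $\cV'=\mathrm{span}\{v,ve^2,\ldots,ve^n\}$ is not preserved by $g$. This is brief but, as written, incomplete: if $g$ is the reflection $v\mapsto -v$, $e^i\mapsto e^i$ (or more drastically $g=-\id_\cV$), then $g(ve^i)=-ve^i\in\cV'$ and $\cV'$ \emph{is} preserved. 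Your approach repairs exactly this gap: the family $\cV^\sharp_k$ forces $g(e^k)=\mu_k e^k$ with $\mu_k=\pm1$, and the rotated spaces $\cV^\flat_{k,\theta}$ then yield $\prod_{j\neq k}\mu_j=1$ for every $k$, whence all $\mu_k$ coincide and $\mu^{n-1}=1$ kills the remaining sign for even $n$. What the paper's idea buys is economy when it works (it does work once one insists $g(v)\notin\mathbb C v$, which one can arrange unless $g$ is $\pm\id$ on $\cV$; the case $g=-\id_\cV$ still needs a separate generating space such as your $\cV^\flat_{k,\theta}$). Your side remark that the kernel of the action is a normal subgroup of the simple group $PGL(2^{n/2},\mathbb C)$ contained in the proper subgroup $O(n,\mathbb C)$ is in fact the cleanest argument of all and could replace the volume-element construction entirely.
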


\begin{proof}
Let an automorphism Aut$[\mathbb C\cC(n)]\ni g\neq e$ preserves
some Euclidean generating space $(\cV,\kappa)$ of $\mathbb
C\cC(n)$. There exists an element $v\in \cV$ such that $v^2=e$ and
$g(v)\neq v$. Then $\cV$ admits an Euclidean basis
$\{v,e^2,\ldots, e^n\}$, and there exists a different generating
space of $\mathbb C\cC(n)$ possessing an Euclidean basis
$\{v,ve^2,\ldots, ve^n\}$. It is not preserved by an automorphism
$g$. Let $\cV$ and $\cV'$ be two different generating spaces of
$\mathbb C\cC(n)$ with Euclidean bases $\{e^1,\ldots,e^n\}$ and
$\{e'^1,\ldots,e'^n\}$, respectively. Then an association $e^i\to
e'^i$ provides an isomorphism $\cV\to \cV'$ which is extended to
an automorphism of $\mathbb C\cC(n)$.
\end{proof}

It follows from Theorem \ref{ss26} and Lemma \ref{468} that, if
$n>1$, a set $Z\mathbb C\cC(n)$ of generating spaces of a complex
Clifford algebra $\mathbb C\cC(n)$ is a homogeneous space
\mar{469}\beq
Z\mathbb C\cC(n)= PGL(2^{n/2}, \mathbb C)/O(n,\mathbb C).
\label{469}
\eeq

Given a complex Clifford algebra $\mathbb C\cC(n)$, let
$\cC(m,n-m)$ be a real Clifford algebra. Due to the canonical ring
monomorphism $\cC(m,n-m)\to \mathbb C\cC(n)$ (\ref{sp200}), there
is the canonical group monomorphism
\mar{sp201}\beq
\cG\cC(m,n-m)\to \mathrm{Mat}(2^{n/2},\mathbb C). \label{sp201}
\eeq
Since all ring automorphisms of a Clifford algebra are inner
(Theorem \ref{k15}), they are extended to inner automorphisms of a
complex Clifford algebra $\mathbb C\cC(n)$ and, consequently,
there is a group monomorphism
\mar{sp202}\beq
\mathrm{Aut}[\cC(m.n-m)]\to PGL(2^{n/2},\mathbb C). \label{sp202}
\eeq

\begin{example} \label{k27} \mar{k27}
Let us consider the complex Clifford algebra $\mathbb
C\cC(2)=\mathrm{Mat}(2, \mathbb C)$ (\ref{a26}). It possesses an
Euclidean basis $\{e^1,e^2\}$ obeying the relations (\ref{210}).
Its elements $\{e^1,e^2, ie^1e^2\}$ form a basis for a
three-dimensional complex subspace $W$ of $\mathbb C\cC(2)$
provided with a non-degenerate bilinear form $\chi$ such that
\be
ww'+w'w=2\chi(w,w')e, \qquad w,w'\in W.
\ee
Then any two-dimensional generating space of $\mathbb C\cC(2)$ is
a subspace of $W$, and any ring automorphism of $\mathbb C\cC(2)$
is that of $W$. By virtue of Theorem \ref{k21}, the group of
automorphisms of $\mathbb C\cC(2)$ is
\mar{452}\beq
\mathrm{Aut}[\mathbb C\cC(2)]\to SL(2,\mathbb C)/\mathbb
Z_2=SO(3,\mathbb C). \label{452}
\eeq
Elements of $SO(3,\mathbb C)$ are given by compositions of
automorphisms
\mar{352,'}\ben
&& \wh a_{\phi s}\begin{pmatrix}e^1\\ e^2\\ e^1e^2
\end{pmatrix}=\begin{pmatrix}\cos(\phi+is) &-\sin(\phi+is) & 0\\
\sin(\phi+is)& \cos(\phi+is) & 0\\ 0& 0&1
\end{pmatrix}\begin{pmatrix}e^1\\ e^2\\ e^1e^2
\end{pmatrix}, \label{352}\\
&&  \wh a_{\theta r}\begin{pmatrix}e^1\\ e^2\\ e^1e^2
\end{pmatrix}=\begin{pmatrix} 1 &0 & 0 \\ 0& \cos(\theta +ir) & i\sin(\theta
+ir) \\ 0 & i\sin(\theta +ir) & \cos(\theta +ir)
\end{pmatrix}\begin{pmatrix}e^1\\ e^2\\ e^1e^2
\end{pmatrix} \label{352'}
\een
of $W$. In accordance with the relation (\ref{469}) and the
isomorphism (\ref{452}), we obtain a set
\mar{470}\beq
Z\mathbb C\cC(2)= SO(3, \mathbb C)/O(2,\mathbb C) \label{470}
\eeq
of generating spaces of a complex Clifford algebra $\mathbb
C\cC(2)$. The automorphisms (\ref{352}) -- (\ref{352'}) are inner
automorphisms
\mar{355,6,7}\ben
&&\wh a_{\phi s}(a)= a_{\phi s}a a_{\phi s}^{-1}, \qquad \wh
a_{\theta r}(a)= a_{\theta r}a a_{\theta r}^{-1}, \qquad a\in
\mathbb C\cC(2), \label{355}\\
&& a_{\phi s}= e\cos(\phi/2+is/2) +
e^1e^2\sin(\phi/2+is/2), \label{356} \\
&& a_{\phi s}^{-1}=
e\cos(\phi/2+is/2)- e^1e^2\sin(\phi/2+is/2), \nonumber\\
&& a_{\theta r}= e\cos(\theta/2+ir/2) +
ie^1\sin(\theta/2+ir/2),\label{357}
\\
&& a_{\theta r}^{-1}= e\cos(\theta/2+ir/2) -
ie^1\sin(\theta/2+ir/2), \nonumber
\een

In particular, the Spin groups Spin$(2,0)$, Spin$(0,2)$ and
Spin$(1,1)$ in Examples \ref{e10} and \ref{e11} yield inner
automorphism $\wh a_{\phi,0}$ and $\wh a_{0,s}$ (\ref{352}) of
$\mathbb C\cC(2)$, respectively. There are natural injections of a
group $SO(2)=\zeta(\mathrm{Spin}(2,0))$ of automorphisms of a
Clifford algebra $\cC(2,0)$, a group
$SO(1,1)=\zeta(\mathrm{Spin}(1,1))$ of automorphisms of a Clifford
algebra $\cC(1,1)$ and a group $SO(2)=\zeta(\mathrm{Spin}(0,2))$
of automorphisms of a Clifford algebra $\cC(0,2)$ to $SO(3,\mathbb
C)$.
\end{example}

\begin{remark} \label{ss30} \mar{ss30}
Let us note that automorphisms $\wh a_{\phi s\neq 0}$ (\ref{352})
and $\wh a_{\theta r\neq 0}$ (\ref{352'}) do not transform
Hermitian elements to Hermitian elements, and thus they are not
automorphisms of an involutive algebra $\mathbb C\cC(2)$.
\end{remark}

\begin{example} \label{k30} \mar{k30}
Let us consider the complex Clifford algebra $\mathbb
C\cC(4)=\mathrm{Mat}(4, \mathbb C)$ (\ref{a27}) possessing an
Euclidean basis $\{\epsilon^\mu\}$. Its elements
$\{\epsilon^\mu,\epsilon^5\}$ form a basis for a five-dimensional
complex subspace $W$ of $\mathbb C\cC(4)$ provided with a
non-degenerate bilinear form $\chi$ such that
\be
ww'+w'w=2\chi(w,w')e, \qquad w,w'\in W.
\ee
Then any four-dimensional complex subspace $\cV$ of $W$ provided
with an induced bilinear form is a generating space of a complex
Clifford algebra $\mathbb C\cC(4)$. By virtue of Theorem
\ref{k21}, the group of automorphisms of $\mathbb C\cC(4)$ is
\mar{460}\beq
\mathrm{Aut}[\mathbb C\cC(4)]=PGL(4,\mathbb C)=SO(6,\mathbb
C)/\mathbb Z_2. \label{460}
\eeq
Then in accordance with the relation (\ref{469}), we obtain a set
\mar{471}\beq
Z\mathbb C\cC(4)= PGL(4,\mathbb C)/O(4,\mathbb C) \label{471}
\eeq
of generating spaces of a complex Clifford algebra $\mathbb
C\cC(4)$.
\end{example}

\section{Spinor spaces of complex Clifford algebras}

As was mentioned above, we define spinor spaces in terms of
Clifford algebras.

\begin{definition} \label{sp510} \mar{sp510}
A real spinor space $\Psi(m,n-m)$ is defined as a carrier space of
an irreducible representation of a Clifford algebra $\cC(m,n-m)$.
\end{definition}

It also carries out a representation of the corresponding group
Spin$(m,n-m)\subset \cC(m,n-m)$ \cite{law}.

If $n$ is even, such a real spinor space is unique up to an
equivalence in accordance with Theorem \ref{a10}. However,
Examples \ref{r88} -- \ref{r8} of Clifford algebras $\cC(0,2)$ and
$\cC(2,0)$, respectively, show that spinor spaces $\Psi(m,n-m)$
and $\Psi(m',n-m')$ need not be isomorphic vector spaces for
$m'\neq m$.

For instance, a Dirac spinor space is defined to be a spinor space
$\Psi(1,3)$ of a Clifford algebra $\cC(1,3)$ (Example \ref{k1}).
It differs from a Majorana spinor space $\Psi(3,1)$ of a Clifford
algebra $\cC(3,1)$ (Example \ref{sp600}). In contrast with the
four-dimensional real matrix representation (\ref{62}) of
$\cC(3,1)$, a representation of a real Clifford algebra $\cC(3,1)$
by complex Dirac's matrices (\ref{41a}) is not a representation a
real Clifford algebra by virtue of Definition \ref{sp602}. From
the physical viewpoint, Dirac spinor fields describing charged
fermions are complex fields.

Therefore, we consider complex spinor spaces.

\begin{definition} \label{ss34} \mar{ss34}
A complex spinor space $\Psi(n)$ is defined as a carrier space of
an irreducible representation of a complex Clifford algebra
$\mathbb C\cC(n)$.
\end{definition}

Since $n$ is even, a representation $\Psi(n)$ is unique up to an
equivalence in accordance with Theorem \ref{a11}. Therefore, it is
sufficient to describe a complex spinor space $\Psi(n)$ as a
subspace of a complex Clifford algebra $\mathbb C\cC(n)$ which
acts on $\Psi(n)$ by left multiplications.

Given a complex Clifford algebra $\mathbb C\cC(n)$, let us
consider its non-zero minimal left ideal which $\cC(n)$ acts on by
left multiplications. It is a finite-dimensional complex vector
space. Therefore, an action of a complex Clifford algebra $\mathbb
C\cC(n)$ in a minimal left ideal by left multiplications defines a
linear representation of $\mathbb C\cC(n)$. It obviously is
irreducible. In this case, a minimal left ideal of $\mathbb
C\cC(n)$ is a complex spinor space $\Psi(n)$. Thus, we come to an
equivalent definition of a complex spinor space.

\begin{definition} \label{ss33} \mar{ss33}
Complex spinor spaces $\Psi(n)$ are minimal left ideals of a
complex Clifford algebra $\mathbb C\cC(n)$ which carry out its
irreducible representation (\ref{sp110}).
\end{definition}

By virtue of Theorem \ref{k8}, there is a ring isomorphism
$\mathbb C\cC(n)=\mathrm{Mat}(2^{n/2}, \mathbb C)$ (\ref{k26}).
Then we come to the following.

\begin{theorem} \label{k53} \mar{k53}
A spinor representation of a Clifford algebra $\mathbb C\cC(n)$ is
equivalent to the canonical representation of
$\mathrm{Mat}(2^{n/2}, \mathbb C)$ by matrices in a complex vector
space $\mathbb C^{2^{n/2}}$, i.e., $\Psi(n)=\mathbb C^{2^{n/2}}$.
\end{theorem}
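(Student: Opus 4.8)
The plan is to push everything through the concrete matrix model supplied by Theorem \ref{k8}, in which $\mathbb C\cC(n)=\mathrm{Mat}(d,\mathbb C)$ with $d=2^{n/2}$. First I would exhibit one distinguished minimal left ideal: let $p\in\mathrm{Mat}(d,\mathbb C)$ be the idempotent with a single $1$ in the upper-left corner and zeros elsewhere. Then $\mathrm{Mat}(d,\mathbb C)\,p$ is exactly the subspace of matrices supported on the first column; it is visibly a left ideal, and it has complex dimension $d$. The $\mathbb C$-linear map sending a matrix of $\mathrm{Mat}(d,\mathbb C)\,p$ to the column vector formed by its first column is a bijection onto $\mathbb C^{d}$, and a direct check shows that left multiplication by $A\in\mathrm{Mat}(d,\mathbb C)$ on the ideal corresponds under this map to the ordinary matrix action of $A$ on $\mathbb C^{d}$. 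Since the canonical action of $\mathrm{Mat}(d,\mathbb C)$ on $\mathbb C^{d}$ is irreducible, it follows both that $\mathrm{Mat}(d,\mathbb C)\,p$ is a genuine minimal left ideal and that the representation of $\mathbb C\cC(n)$ it carries is, essentially by construction, equivalent in the sense of Definition \ref{sp603} to the canonical representation on $\mathbb C^{d}$.

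The second step is to see that the choice of minimal left ideal is immaterial. By Corollary \ref{k12} the algebra $\mathbb C\cC(n)$ is (central) simple, hence a simple Artinian ring, so all its simple left modules are mutually isomorphic. A minimal left ideal of $\mathbb C\cC(n)$ is by definition a simple submodule of the left-regular representation, so it is isomorphic as a left module to the ideal $\mathrm{Mat}(d,\mathbb C)\,p\cong\mathbb C^{d}$ built above. One can make the intertwiner explicit: any two minimal left ideals of $\mathrm{Mat}(d,\mathbb C)$ are of the form $\mathrm{Mat}(d,\mathbb C)\,p'$ and $\mathrm{Mat}(d,\mathbb C)\,p''$ for primitive (rank-one) idempotents related by $p''=up'u^{-1}$ with $u$ invertible, and right multiplication by $u^{-1}$ carries the first onto the second while commuting with every left multiplication. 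Composing with the column-vector isomorphism of the first step then transports any such ideal to $\mathbb C^{d}$ with the canonical action.

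Assembling the two steps with Definition \ref{ss33}: a complex spinor space $\Psi(n)$ is by definition a minimal left ideal of $\mathbb C\cC(n)$ equipped with the left-regular representation (\ref{sp110}), and we have shown this representation is equivalent to the canonical matrix representation of $\mathrm{Mat}(2^{n/2},\mathbb C)$ on $\mathbb C^{2^{n/2}}$; hence $\Psi(n)=\mathbb C^{2^{n/2}}$ up to equivalence. I expect the only substantive point to be the independence of the output from the chosen minimal left ideal, and this is precisely where the simplicity of the matrix algebra — equivalently, of $\mathbb C\cC(n)$ — is indispensable, since without it distinct left ideals would have no reason to afford equivalent representations. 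Everything else reduces to the routine bookkeeping of identifying a column subspace of a matrix algebra with the standard column space.
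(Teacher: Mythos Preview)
Your proposal is correct and follows the same line as the paper, only spelled out in far more detail. The paper offers no formal proof of this theorem at all: it simply prefaces the statement with ``By virtue of Theorem \ref{k8}, there is a ring isomorphism $\mathbb C\cC(n)=\mathrm{Mat}(2^{n/2},\mathbb C)$. Then we come to the following,'' and relies implicitly on the uniqueness statement of Theorem \ref{a11} (stated just before Definition \ref{ss33}) to pass from one irreducible representation to any other. Your argument unpacks exactly this: the explicit column-ideal $\mathrm{Mat}(d,\mathbb C)p$ you exhibit is the very one the paper uses later in Example \ref{k61}, and your use of simplicity via Corollary \ref{k12} to identify all minimal left ideals is the concrete content behind the paper's appeal to Theorem \ref{a11}. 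Nothing is missing; if anything, you have supplied the details the paper omits.
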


\begin{corollary} \label{k54} \mar{k54}
A spinor space $\Psi(n)\subset \mathbb C\cC(n)$ also carries out
the left-regular irreducible representation of a group
$GL(2^{n/2},\mathbb C)=\cG\mathbb C\cC(n)$ which is equivalent to
the natural matrix representation of $GL(2^{n/2},\mathbb C)$ in
$\mathbb C^{2^{n/2}}$.
\end{corollary}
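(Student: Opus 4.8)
The plan is to read the statement off from Theorem \ref{k53} together with Theorem \ref{sp503}, with only a short extra argument for irreducibility. First I would recall that, by Definition \ref{ss33}, the complex spinor space $\Psi(n)$ is a minimal left ideal of $\mathbb C\cC(n)$; being a left ideal, it is stable under left multiplication by \emph{every} element of $\mathbb C\cC(n)$, hence in particular by every invertible element. By Theorem \ref{sp503} the invertible elements of $\mathbb C\cC(n)=\mathrm{Mat}(2^{n/2},\mathbb C)$ are precisely the group $\cG\mathbb C\cC(n)=GL(2^{n/2},\mathbb C)$, so the left-regular representation (\ref{sp110}) of $\mathbb C\cC(n)$ on $\Psi(n)$ restricts to a linear representation of $GL(2^{n/2},\mathbb C)$ on $\Psi(n)$. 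This is the left-regular group representation named in the statement.

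Next I would invoke Theorem \ref{k53}, which supplies an isomorphism $\xi:\Psi(n)\to\mathbb C^{2^{n/2}}$ of complex vector spaces intertwining the action of $\mathbb C\cC(n)=\mathrm{Mat}(2^{n/2},\mathbb C)$ on $\Psi(n)$ by left multiplication with the canonical action of $\mathrm{Mat}(2^{n/2},\mathbb C)$ on $\mathbb C^{2^{n/2}}$ by matrix multiplication, i.e. $\xi(a\psi)=a\,\xi(\psi)$ for $a\in\mathbb C\cC(n)$ and $\psi\in\Psi(n)$. Restricting $a$ to $g\in GL(2^{n/2},\mathbb C)\subset\mathbb C\cC(n)$ shows that $\xi$ also intertwines the left-regular $GL(2^{n/2},\mathbb C)$-action on $\Psi(n)$ with the defining matrix representation of $GL(2^{n/2},\mathbb C)$ on $\mathbb C^{2^{n/2}}$; hence the two group representations are equivalent. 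For irreducibility I would note that $GL(2^{n/2},\mathbb C)$ acts transitively on $\mathbb C^{2^{n/2}}\setminus\{0\}$ (any nonzero vector extends to a basis, and any two bases are related by an invertible matrix), so $\mathbb C^{2^{n/2}}$ has no proper nonzero $GL(2^{n/2},\mathbb C)$-invariant subspace; equivalently, the invertible matrices span $\mathrm{Mat}(2^{n/2},\mathbb C)$ linearly (e.g.\ $E_{ij}=(\mathbf 1+E_{ij})-\mathbf 1$ for $i\neq j$, and diagonal matrices are differences of invertible ones), so any $GL$-invariant subspace is $\mathrm{Mat}(2^{n/2},\mathbb C)$-invariant, hence trivial by the irreducibility already recorded in Theorem \ref{k53}.

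There is essentially no serious obstacle here: the corollary amounts to restricting an already-established module structure from the ring to its group of units and transporting it through the isomorphism of Theorem \ref{k53}. The only point deserving care is the identification of ``invertible in the ring $\mathbb C\cC(n)$'' with ``invertible matrix under (\ref{k26})'', which is exactly what Theorem \ref{sp503} provides and should be cited rather than assumed; and, if one adopts the spanning argument for irreducibility, one should emphasise that it uses only linear algebra over $\mathbb C$ and no special feature of the Clifford algebra.
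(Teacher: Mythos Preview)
Your proposal is correct and follows essentially the same route as the paper: the corollary is stated immediately after Theorem \ref{k53} with no separate proof, so the intended argument is precisely the restriction of the $\mathbb C\cC(n)$-module structure on $\Psi(n)$ to the group of units $\cG\mathbb C\cC(n)=GL(2^{n/2},\mathbb C)$, transported through the equivalence of Theorem \ref{k53}. Your explicit justification of irreducibility (either via transitivity of $GL$ on nonzero vectors or via the spanning argument) is more than the paper supplies, but is a welcome addition rather than a deviation.
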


\begin{corollary} \label{sp203} \mar{sp203}
Owing to the monomorphism $\cC(m,n-m)\to \mathbb C\cC(n)$
(\ref{sp200}), a spinor space $\Psi(n)$ also carries out a
representation of real Clifford algebras $\cC(m,n-m)$, their
Clifford< Pin and Spin groups, though these representation need
not be reducible.
\end{corollary}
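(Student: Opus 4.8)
The plan is to produce all of these representations by restriction, along monomorphisms already established in the paper. Recall from Definition \ref{ss33} that $\Psi(n)$ is a minimal left ideal of $\mathbb C\cC(n)$ and that the spinor representation (\ref{sp110}) is the left-multiplication action of $\mathbb C\cC(n)$ on $\Psi(n)$, which by Theorem \ref{k53} is equivalent to the canonical action of $\mathrm{Mat}(2^{n/2},\mathbb C)$ on $\mathbb C^{2^{n/2}}$. Precomposing this action with the canonical ring monomorphism $\cC(m,n-m)\to\mathbb C\cC(n)$ (\ref{sp200}) of Lemma \ref{ss23} gives a ring homomorphism from $\cC(m,n-m)$ to the ring of endomorphisms of $\Psi(n)$; since $\mathbb C\cC(n)$ acts by complex-linear, hence real-linear, maps and $\Psi(n)$ is a finite-dimensional real vector space (of real dimension $2^{n/2+1}$), this is a representation of the real Clifford algebra $\cC(m,n-m)$ in the sense of Definition \ref{sp602}.

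For the groups, note that the Clifford group $\mathrm{Cliff}(V,\eta)$ --- here $(V,\eta)=(\mathbb R^n,\eta)$ of signature $(m,n-m)$ --- together with its subgroups $\mathrm{Pin}(V,\eta)$ and $\mathrm{Spin}(V,\eta)$ all lie inside the group $\cG\cC(m,n-m)$ of invertible elements of $\cC(m,n-m)$ (Section 3.2). The group monomorphism $\cG\cC(m,n-m)\to\mathrm{Mat}(2^{n/2},\mathbb C)$ (\ref{sp201}) carries them into $GL(2^{n/2},\mathbb C)=\cG\mathbb C\cC(n)$, which acts on $\Psi(n)$ by the left-regular representation of Corollary \ref{k54}; restricting along these inclusions yields representations of $\mathrm{Cliff}(V,\eta)$, $\mathrm{Pin}(V,\eta)$ and $\mathrm{Spin}(V,\eta)$ on $\Psi(n)$, which are simply the restrictions of the above $\cC(m,n-m)$-representation to these multiplicative subgroups.

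The only part of the statement with real content is that these restricted representations need not remain irreducible, and this I would justify by a dimension count rather than any delicate argument. Although $\Psi(n)\cong\mathbb C^{2^{n/2}}$ is irreducible over $\mathbb C\cC(n)=\mathrm{Mat}(2^{n/2},\mathbb C)$ (Theorem \ref{k53}), the real subalgebra $\cC(m,n-m)$ fills out only a real form of this complex matrix algebra, so by Theorem \ref{k3} together with Theorem \ref{a10} an irreducible real $\cC(m,n-m)$-module can have real dimension strictly less than $2^{n/2+1}=\dim_{\mathbb R}\Psi(n)$. The cleanest witness is $n=4$: there $\mathbb C\cC(4)=\mathrm{Mat}(4,\mathbb C)$, so $\Psi(4)\cong\mathbb C^4\cong\mathbb R^8$, whereas $\cC(3,1)=\mathrm{Mat}(4,\mathbb R)$ (Example \ref{sp600}, isomorphism (\ref{12})) has by Theorem \ref{a10} a unique irreducible representation, the Majorana spinor space $\Psi(3,1)\cong\mathbb R^4$; since $\mathbb C\cC(4)$ is a complexification of $\cC(3,1)$ (cf.\ (\ref{66})), $\Psi(4)$ restricted to $\cC(3,1)$ is the complexification of $\Psi(3,1)$, i.e.\ $\Psi(3,1)\oplus\Psi(3,1)$, which is reducible. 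So there is no genuine obstacle here: the content is merely tracing the cited (mono)morphisms and recording this one splitting example.
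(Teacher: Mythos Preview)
Your proposal is correct and is precisely the argument the paper has in mind: the corollary is stated without proof in the paper, as an immediate consequence of the monomorphism (\ref{sp200}), Theorem \ref{k53}, and Corollary \ref{k54}, and you have simply written out that restriction argument explicitly. Your concrete $n=4$, $\cC(3,1)$ example showing reducibility is more than the paper itself supplies.
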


In order to describe complex spinor spaces in accordance with
Definition \ref{ss33}, we are based on the following (Theorem
\ref{a41}).

\begin{lemma} \mar{a20} \label{a20} If a minimal left ideal $Q$ of a
complex Clifford algebra $\mathbb C\cC(n)$ is generated by an
element $q$, then $q^2=\lambda q$, $\lambda\in \mathbb C$.
\end{lemma}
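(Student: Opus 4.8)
The plan is to exploit the matrix model of the Clifford algebra provided by Theorem~\ref{k8}, together with the classification of left ideals of a full matrix algebra. First I would invoke the ring isomorphism $\mathbb C\cC(n)=\mathrm{Mat}(2^{n/2},\mathbb C)$ (\ref{k26}), valid since $n$ is even, so that the statement becomes a statement about minimal left ideals of $\mathrm{Mat}(N,\mathbb C)$ with $N=2^{n/2}$. A minimal left ideal $Q$ of $\mathrm{Mat}(N,\mathbb C)$ is, up to the left action, the space of matrices whose columns all vanish except (say) the first; equivalently $Q=\mathrm{Mat}(N,\mathbb C)\,\pi$ for a rank-one idempotent $\pi$. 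If $q$ generates $Q$ as a left ideal, then $q$ is a nonzero element of $Q$, so $q$ has rank one as a matrix.

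Next I would record the elementary linear-algebra fact that a rank-one $N\times N$ matrix $q$ satisfies $q^2=\lambda q$ for the scalar $\lambda=\mathrm{tr}\,q\in\mathbb C$. Indeed, writing $q=u\,w^{\mathsf T}$ for suitable nonzero column vectors $u,w\in\mathbb C^N$, one has $q^2=u(w^{\mathsf T}u)w^{\mathsf T}=(w^{\mathsf T}u)\,q$, and $w^{\mathsf T}u=\mathrm{tr}\,q$. This gives the relation with $\lambda=\mathrm{tr}\,q$; in particular $\lambda$ is intrinsically determined by $q$, though the statement only asks for existence.

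It remains only to argue that $q$ indeed has matrix rank one. This is the one point that needs care, and I expect it to be the main (though still mild) obstacle: one must check that a \emph{generator} of a \emph{minimal} left ideal cannot have rank $\geq 2$. If $\mathrm{rank}\,q=r$, then the left ideal $\mathrm{Mat}(N,\mathbb C)\,q$ consists of all matrices whose row space lies in the $r$-dimensional row space of $q$, hence has complex dimension $rN$; since a minimal left ideal of $\mathrm{Mat}(N,\mathbb C)$ has dimension $N$ (each such ideal is isomorphic to the simple module $\mathbb C^N$), minimality of $Q=\mathrm{Mat}(N,\mathbb C)\,q$ forces $r=1$. Alternatively, one can avoid dimension counts: $q$ nonzero lies in the minimal ideal $Q$, and $\mathrm{Mat}(N,\mathbb C)\,q$ is a nonzero left subideal of $Q$, hence equals $Q$ by minimality; but then for the rank-one idempotent $\pi$ with $Q=\mathrm{Mat}(N,\mathbb C)\,\pi$ we get $q=a\pi$ for some $a$, so $q$ has rank $\leq\mathrm{rank}\,\pi=1$, and $q\neq 0$ forces rank exactly one. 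Either route completes the proof, and $q^2=(\mathrm{tr}\,q)\,q$ follows, so $\lambda=\mathrm{tr}\,q\in\mathbb C$ as claimed.
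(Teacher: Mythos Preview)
Your proof is correct, but it follows a genuinely different route from the paper's. The paper argues intrinsically, by contradiction: assuming $q^2\neq\lambda q$ for all $\lambda$, it distinguishes the cases where $q$ is nilpotent and where it is not, and in each case manufactures a strictly smaller nonzero left ideal inside $Q$ (generated by $q^{N-1}$ in the nilpotent case, or by an element $cq^{p-1}$ extracted from a linear dependence among powers of $q$ in the non-nilpotent case), contradicting minimality. Your argument instead transports the problem through the isomorphism $\mathbb C\cC(n)\cong\mathrm{Mat}(2^{n/2},\mathbb C)$ and invokes the standard description of minimal left ideals of a full matrix ring, reducing the claim to the elementary fact that a rank-one matrix satisfies $q^2=(\mathrm{tr}\,q)q$. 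Your approach is shorter and yields the explicit value $\lambda=\mathrm{tr}\,q$, at the cost of importing the structure theory of $\mathrm{Mat}(N,\mathbb C)$; the paper's approach is self-contained and would apply verbatim in any finite-dimensional algebra, without appealing to the matrix model.
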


\begin{proof} Let $q\in Q$ such that
$q^2\neq\lambda q$, $\lambda\in \mathbb C$. There are two
variants: (i) $q^N=0$ starting with some natural number $N>2$,
(ii) there is no $m>2$ such that $q^m=0$. In the first case, let
us consider a left ideal $Q'$ generated by $q^{N-1}=q^{N-2}q\in
Q$. It does not contains $q$ because, if $q=bq^{N-1}$, $b\in
\mathbb C\cC(n)$, then $q^2=bq^N=0$ that contradicts the condition
$q^2\neq\lambda q$. Thus, a left ideal $Q'$ is a proper subset of
$Q$, i.e., $Q$ fails to be minimal. In the second variant, since
$q^2\neq\lambda q$ and $Q$ is a finite-dimensional complex space,
there exists a natural number $m>2$ such that elements $q^r$,
$r=2,\ldots, m,$ are linearly dependent, i.e.,
\be
\op\sum_{r=2}^m\lambda_r q^r=0, \qquad \lambda_r\in\mathbb C.
\ee
This equality is brought into the form $q^pc=cq^p=0$, $c\in
\mathbb C\cC(n)$ for some $1< p <m$. Let us consider a left ideal
$Q'$ generated by an element $cq^{p-1}=(cq^{p-2})q\in Q$. It does
not contains an element $q$ because, if $q=bcq^{p-1}$, $b\in
\mathbb C\cC(n)$, then $q^2=bcq^p=0$ that contradicts the
condition $q^2\neq\lambda q$. Thus, a left ideal $Q'$ is a proper
subset of $Q$, i.e., $Q$ fails to be minimal.
\end{proof}

Lemma \ref{a20} gives something more. Since a minimal left ideal
$Q$ of $\mathbb C\cC(n)$ is generated by any its element, each
element $q\in Q$ possesses a property $q^2=\lambda q$, $\lambda\in
\mathbb C$.

\begin{lemma} \mar{a30} \label{a30}
A minimal left ideal of a complex Clifford algebra $\mathbb
C\cC(n)$ contains a non-zero Hermitian element, and thus it is
generated by a Hermitian element.
\end{lemma}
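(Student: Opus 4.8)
The plan is to manufacture a Hermitian element directly from an arbitrary non-zero element of the ideal, exploiting the positivity built into the involution $*$ of Remark \ref{sp640}. First I would pick any non-zero $q$ in the minimal left ideal $Q$ and set $h=q^*q$. Since $Q$ is a left ideal of $\mathbb C\cC(n)$ and $q^*\in\mathbb C\cC(n)$, the element $h$ lies in $Q$. The involution (\ref{a36}) is an anti-automorphism, i.e. $(ab)^*=b^*a^*$, and satisfies $a^{**}=a$; hence $h^*=(q^*q)^*=q^*(q^*)^*=q^*q=h$, so $h$ is Hermitian.

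The only point that needs genuine input is that $h\neq 0$, and this is exactly formula (\ref{a34}). Writing $q$ in the Euclidean basis as in (\ref{a43}) with coefficients $\lambda,\lambda_{i_1\ldots i_k}\in\mathbb C$, the coefficient of $e$ in $q^*q$ equals $\ol\lambda\lambda+\op\sum_{1\le k\le n}\op\sum_{i_1<\ldots<i_k}\ol\lambda_{i_1\ldots i_k}\lambda_{i_1\ldots i_k}$, which is strictly positive whenever $q\neq 0$. Therefore $h=q^*q$ is a non-zero Hermitian element of $Q$.

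Finally, since $Q$ is a minimal left ideal and $h\in Q$ is non-zero, the left ideal $\mathbb C\cC(n)\,h\subseteq Q$ is non-zero, hence equals $Q$; thus $Q$ is generated by the Hermitian element $h$, which proves the claim (and, combined with Lemma \ref{a20}, shows $h^2=\lambda h$ with $\lambda\neq 0$, setting up the normalization to an idempotent in Theorem \ref{a41}). The step I expect to carry the weight is the non-vanishing $q^*q\neq 0$, but this is already packaged in Remark \ref{sp640}; everything else is formal, apart from the routine check that $*$ is indeed an anti-automorphism, which is immediate from (\ref{a36}).
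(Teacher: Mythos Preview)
Your proof is correct and follows exactly the paper's approach: take any non-zero $q\in Q$ and observe that $q^*q$ is a non-zero Hermitian element of $Q$ by (\ref{a34}). The paper's proof is a single sentence to this effect; you have simply filled in the routine verifications (that $q^*q\in Q$, that it is Hermitian, and that it generates $Q$ by minimality).
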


\begin{proof}
Let $q\neq 0$ be an element of $Q$. Then, $q^*q\neq 0$ in
accordance with the inequality (\ref{a34}), and this is a
Hermitian element of $Q$.
\end{proof}

By virtue of Lemmas \ref{a20} -- \ref{a30}, any minimal left ideal
of a complex Clifford algebra $\mathbb C\cC(n)$ is generated by a
Hermitian idempotent $p=p^*$, $p^2=p$. Of course, it is not
invertible because invertible elements generate an algebra
$\mathbb C\cC(n)$ which contains proper left ideals. It is readily
observed that any Hermitian idempotent takes a form
\mar{a21}\beq
p=\frac12(e + s), \qquad s^2=e, \qquad s^*=s, \qquad s\neq e.
\label{a21}
\eeq

Thus, the following has been proved.

\begin{theorem} \mar{a41} \label{a41}
Any complex spinor space $\Psi(n)$ is generated by some Hermitian
idempotent $p\in \Psi(n)$ (\ref{a21}).
\end{theorem}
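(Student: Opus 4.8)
The plan is to read the result off Lemmas \ref{a20} and \ref{a30} together with the positivity property (\ref{a34}) of the involution, the only genuine work being a normalization. First I would recall that, by Definition \ref{ss33}, a complex spinor space $\Psi(n)$ is a minimal left ideal $Q$ of $\mathbb C\cC(n)$. By Lemma \ref{a30} this $Q$ contains a non-zero Hermitian element $h=h^*$, and, $Q$ being minimal, it is generated by $h$. Lemma \ref{a20} applied to this generator then yields $h^2=\lambda h$ for some $\lambda\in\mathbb C$.

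The next step is to show that $\lambda$ is real and non-zero, so that $h$ can be rescaled to an idempotent. It is non-zero because $h^2=h^*h\neq 0$ by the inequality (\ref{a34}), whence $\lambda h\neq 0$ and $\lambda\neq 0$. It is real because applying the involution to $h^2=\lambda h$ and using $h^*=h$ gives $h^2=(h^2)^*=(\lambda h)^*=\bar\lambda h$, so $(\lambda-\bar\lambda)h=0$; since $h\neq 0$ and $\lambda-\bar\lambda$ is a central scalar, this forces $\lambda=\bar\lambda$.

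Now I would set $p=\lambda^{-1}h$. A one-line computation gives $p^2=\lambda^{-2}h^2=\lambda^{-1}h=p$ and $p^*=\lambda^{-1}h^*=p$, so $p$ is a Hermitian idempotent generating the same left ideal as $h$, namely $\Psi(n)$. Since $\Psi(n)$ is a proper left ideal — for even $n\geq 2$ the ring $\mathbb C\cC(n)=\mathrm{Mat}(2^{n/2},\mathbb C)$ with $2^{n/2}\geq 2$ is not simple as a left module over itself, so its minimal left ideals are proper — the element $p$ is not invertible; in particular $p\neq e$. Finally, putting $s=2p-e$ one checks $s^2=4p^2-4p+e=e$, $s^*=2p^*-e=s$, and $s\neq e$ because $p\neq e$, so $p=\tfrac12(e+s)$ has exactly the form (\ref{a21}), which is the assertion.

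The argument is essentially immediate once the preceding lemmas are in hand; the one point that needs attention is confirming that the scalar $\lambda$ produced by Lemma \ref{a20} is real and non-zero, which is precisely what the positivity (\ref{a34}) of the $*$-operation supplies, so that the normalization $p=\lambda^{-1}h$ is legitimate.
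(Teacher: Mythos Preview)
Your proof is correct and follows essentially the same route as the paper: invoke Lemma \ref{a30} to get a Hermitian generator, Lemma \ref{a20} to get $h^2=\lambda h$, then normalize to a Hermitian idempotent and write it in the form (\ref{a21}). The paper is terser and leaves the normalization implicit, whereas you make explicit the point that $\lambda$ is real and non-zero via (\ref{a34}) and the Hermiticity of $h$ --- this is a genuine detail the paper glosses over, so your version is if anything more complete.
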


The converse however need not be true.

\begin{example} \label{k61} \mar{k61}
Let us consider a Hermitian idempotent $p\in \mathrm{Mat}(2^{n/2},
\mathbb C)$ whose non-zero component is only $p_{11}=1$. It
generates a minimal left ideal $\Psi_{11}(n)\subset
\mathrm{Mat}(2^{n/2}, \mathbb C)$ which consists of matrices $a\in
\mathrm{Mat}(2^{n/2}, \mathbb C)$ whose columns, except $a_{1i}$
equal zero.
\end{example}

Certainly, an automorphism of a Clifford algebra $\mathbb C\cC(n)$
sends a spinor space onto a spinor space.

\begin{lemma} \label{k60} \mar{k60} An action of a group $PGL(2^{n/2},\mathbb C)$
of automorphisms of $\mathbb C\cC(n)$ in a set $S\Psi(n)$ of
spinor spaces is transitive.
\end{lemma}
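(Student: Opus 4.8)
The plan is to reduce everything to the matrix picture. By Theorem \ref{k8} we identify $\mathbb C\cC(n)$ with $\mathrm{Mat}(N,\mathbb C)$, where $N=2^{n/2}$; by Definition \ref{ss33} a point of $S\Psi(n)$ is a minimal left ideal of this algebra; and by Theorem \ref{k21} every automorphism of $\mathbb C\cC(n)$ is inner, so the action of $PGL(2^{n/2},\mathbb C)=\mathrm{Aut}[\mathbb C\cC(n)]$ (\ref{k25}) on $S\Psi(n)$ is realised by the adjoint maps $\wh g$ (\ref{a5}), i.e. $\Psi\mapsto g\Psi g^{-1}$ with $g\in GL(N,\mathbb C)$. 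Thus transitivity is exactly the classical assertion that any two minimal left ideals of $\mathrm{Mat}(N,\mathbb C)$ are conjugate under $GL(N,\mathbb C)$.

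First I would recall the structure of minimal left ideals of the matrix algebra. By Theorem \ref{a41} (via Lemmas \ref{a20} and \ref{a30}) each spinor space $\Psi$ has the form $\mathrm{Mat}(N,\mathbb C)\,p$ for an idempotent $p$, and minimality forces $p$ to have rank one: a minimal left ideal of $\mathrm{Mat}(N,\mathbb C)$ is the simple module, of dimension $N$, whereas $\dim_{\mathbb C}\bigl(\mathrm{Mat}(N,\mathbb C)\,p\bigr)=N\cdot\mathrm{rk}\,p$. Conversely every rank-one idempotent generates a minimal left ideal. Hence it suffices to show that any two rank-one idempotents $p,p'\in\mathrm{Mat}(N,\mathbb C)$ are conjugate.

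The key observation for that is that a rank-one idempotent is diagonalisable with spectrum $\{1,0,\dots,0\}$: from $p^2=p$ its minimal polynomial divides $t^2-t$, so $p$ is diagonalisable, and $\mathrm{rk}\,p=1$ pins down the multiplicities. Therefore $p$ is similar to the elementary matrix $e_{11}=\mathrm{diag}(1,0,\dots,0)$, and so is $p'$; composing the two similarities yields $g\in GL(N,\mathbb C)$ with $gpg^{-1}=p'$. Since $g$ normalises the whole algebra, $\wh g\bigl(\mathrm{Mat}(N,\mathbb C)\,p\bigr)=g\,\mathrm{Mat}(N,\mathbb C)\,p\,g^{-1}=\mathrm{Mat}(N,\mathbb C)\,(gpg^{-1})=\mathrm{Mat}(N,\mathbb C)\,p'$, so $\wh g$ carries $\Psi$ onto $\Psi'$. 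Because scalar matrices act trivially by conjugation, $\wh g$ depends only on the class of $g$ in $PGL(2^{n/2},\mathbb C)$, which is the group in the statement; this gives the transitivity.

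I do not expect a genuine obstacle — the whole content is the classical fact that primitive idempotents of a matrix algebra of equal rank are conjugate. The only two places that deserve an explicit line are (i) that every element of $S\Psi(n)$ really is generated by a rank-one idempotent, which follows either from Theorem \ref{a41} plus the dimension count above or directly from semisimplicity of $\mathrm{Mat}(N,\mathbb C)$, and (ii) that conjugation by $g$ sends $\mathrm{Mat}(N,\mathbb C)\,p$ precisely onto $\mathrm{Mat}(N,\mathbb C)\,(gpg^{-1})$ and nothing larger, which uses $g\,\mathrm{Mat}(N,\mathbb C)\,g^{-1}=\mathrm{Mat}(N,\mathbb C)$ — trivial for the full matrix algebra, but it is the one spot where Theorem \ref{k21} (inner-ness of the automorphisms) is actually invoked.
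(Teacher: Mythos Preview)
Your proof is correct. The route differs from the paper's in a small but noticeable way. The paper does not argue that the generating idempotents $p$ and $p'$ are similar; instead it invokes the right-regular action of $GL(N,\mathbb C)$ on $\mathrm{Mat}(N,\mathbb C)$, claims one can find $g$ with $p'=pg^{-1}$, and then observes that $Ap'=Apg^{-1}=A(gpg^{-1})$, so $\wh g(\Psi)=\Psi'$. Your argument---rank-one idempotents are diagonalisable with the same spectrum, hence all similar to $e_{11}$ and to each other---is more direct and sidesteps a loose step in the paper: the right-regular action is not literally transitive on $\mathrm{Mat}(N,\mathbb C)$ (it preserves rank and column space), and indeed the equation $p'=pg^{-1}$ forces $\mathrm{im}\,p'=\mathrm{im}\,p$, which need not hold. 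What the paper really needs is only that the \emph{left ideals} $Ap$ and $Ap'$ are related by right multiplication, and that is true; your similarity argument delivers the required $g$ without this detour. Both approaches are short; yours is the cleaner one and makes explicit the rank-one condition that is doing the work.
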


\begin{proof} Let $\Psi(n)$ and $\Psi'(n)$ are spinor space
defined by Hermitian idempotents $p\in \mathrm{Mat}(2^{n/2},
\mathbb C)$ and $p\in \mathrm{Mat}(2^{n/2}, \mathbb C)$. Since
right-regular representation of a group $GL(2^{n/2},\mathbb C)$ in
$\mathrm{Mat}(2^{n/2}, \mathbb C)$ is transitive, there exists an
element $g\in \mathrm{Mat}(2^{n/2}, \mathbb C)$ so that
$p'=pg^{-1}$. Then a spinor space $\Psi'(n)$ is generated by an
idempotent $gpg^{-1}$, and an inner automorphism $a\to gpg^{-1}$,
$a\in \mathbb C\cC(n)$, sends $\Psi(n)$ onto $\Psi'(n)$.
\end{proof}

Given a spinor space $\Psi(n)$, let $G\Psi(n)$ be a subgroup of
$PGL(2^{n/2},\mathbb C)$ which preserves $\Psi(n)$. Then it
follows from Lemma \ref{k60} that a set of spinor spaces
$S\Psi(n)$ is bijective to the quotient
\mar{k62}\beq
S\Psi(n)= PGL(2^{n/2},\mathbb C)/G\Psi(n). \label{k62}
\eeq

For instance, let $\Psi_{11}(n)$ be a spinor space in Example
\ref{k61}. Its stabilizer $G\Psi_{11}(n)$ consists of inner
automorphisms generated by elements $g\in \mathrm{Mat}(2^{n/2},
\mathbb C)$ with components $g_{k1}=0$, $1<k$.

\section{Reduced structures}

This section addresses gauge theory on principal bundles in a case
of spontaneous symmetry breaking \cite{book09,book00,book13}.

\subsection{Reduced structures in gauge theory}

 Let $G$ be a real Lie group
whose unit is denoted by $\bb$. A fibre bundle
\mar{51f1}\beq
\pi_P :P\to X \label{51f1}
\eeq
is called a principal bundle with a structure group $G$ if it
admits an action of $G$ on $P$ on the right by a fibrewise
morphism
\mar{1}\beq
R_P: G\op\times_X P \ar_X P, \qquad R_g: p\to pg, \qquad
\pi_P(p)=\pi_P(pg), \quad p\in P,  \label{1}
\eeq
which is free and transitive on each fibre of $P$. It follows
that:

$\bullet$ a typical fibre of $P$ (\ref{51f1}) is a group space of
$G$, which a structure group $G$ acts on by left multiplications;

$\bullet$ the quotient of $P$ with respect to the action (\ref{1})
of $G$ is diffeomorphic to a base $X$, i.e., $P/G=X$;

$\bullet$ a principal bundle $P$ is equipped with a bundle atlas
\mar{51f2}\beq
\Psi_P=\{(U_\al,\psi^P_\al),\vr_{\al\bt}\} \label{51f2}
\eeq
whose trivialization morphisms
\be
\psi_\al^P: \pi_P^{-1}(U_\al)\to U_\al\times G
\ee
obey a condition
\mar{tt1}\beq
\psi_\al^P(pg)=g\psi_\al^P(p), \qquad g\in G, \label{tt1}
\eeq
and transition functions $\vr_{\al\bt}$ are local $G$-valued
functions.

For short, we call $P$ (\ref{51f1})  the principal $G$-bundle.

Due to the property (\ref{tt1}), every trivialization morphism
$\psi^P_\al$ determines a unique local section $z_\al:U_\al\to P$
such that
\be
(\psi^P_\al\circ z_\al)(x)=\bb, \qquad x\in U_\al.
\ee
A transformation law for $z_\al$ reads
\mar{b1.202}\beq
z_\bt(x)=z_\al(x)\vr_{\al\bt}(x),\qquad x\in U_\al\cap
U_\bt.\label{b1.202}
\eeq
Conversely, a family
\mar{vcv}\beq
\Psi_P=\{(U_\al,z_\al),\vr_{\al\bt}\} \label{vcv}
\eeq
of local sections of $P$ which obey the transformation law
(\ref{b1.202}) determines the unique bundle atlas $\Psi_P$
(\ref{51f2}) of a principal bundle $P$.

\begin{corollary} \label{52a3} \mar{52a3}
It follows that a principal bundle admits a global section iff it
is trivial.
\end{corollary}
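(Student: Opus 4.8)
The plan is to establish both directions, of which only the converse carries content. For the forward implication: if $P$ is trivial then $P$ is isomorphic, as a principal $G$-bundle, to the product $X\times G$, on which $x\mapsto(x,\bb)$ is manifestly a global section; transporting this section back along the isomorphism produces a global section of $P$.

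For the converse, I would start from a global section $z:X\to P$ and manufacture a global trivialization. The quickest route is to invoke the correspondence recalled immediately before the statement: any family $\{(U_\al,z_\al),\vr_{\al\bt}\}$ of local sections obeying the transformation law (\ref{b1.202}) determines a unique bundle atlas of the principal bundle. Applying this to the one-element family consisting of the single chart $(X,z)$ — for which (\ref{b1.202}) is vacuous and the only transition function $\vr$ is forced by $z=z\vr$ to be the constant map with value $\bb$ — yields a bundle atlas of $P$ consisting of a single trivialization morphism $\psi^P:P\to X\times G$, i.e. a global trivialization. Hence $P$ is trivial.

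Alternatively, and more concretely, I would define $\Phi:X\times G\to P$ by $\Phi(x,g)=R_g(z(x))=z(x)g$ using the right action (\ref{1}). Since the action of $G$ is free and transitive on each fibre, $\Phi$ restricts to a bijection on every fibre and so is a bijection over $X$; it intertwines the right $G$-actions by associativity; and its inverse sends $p\in P$ to $(\pi_P(p),g(p))$, where $g(p)\in G$ is the unique element with $z(\pi_P(p))\,g(p)=p$. Checking that $\Phi$ and its inverse are smooth then exhibits $\Phi$ as an isomorphism of principal bundles $X\times G\to P$, so again $P$ is trivial.

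The only step requiring care is the smoothness of the map $p\mapsto g(p)$ in the second approach (equivalently, that the single-chart atlas produced in the first approach is genuinely a smooth bundle atlas): one passes to a local bundle chart $\psi^P_\al$ of $P$, writes $z$ there as $x\mapsto(x,s_\al(x))$ with $s_\al:U_\al\to G$ smooth, and observes that $g(p)=s_\al(\pi_P(p))^{-1}\,\mathrm{pr}_2(\psi^P_\al(p))$, which is smooth because multiplication and inversion in $G$ are smooth and $\psi^P_\al$, $\pi_P$ are smooth. Everything else in the argument is purely formal.
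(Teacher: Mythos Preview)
Your proposal is correct. The paper does not supply a proof at all: the corollary is stated as an immediate consequence of the preceding remark that a family (\ref{vcv}) of local sections obeying (\ref{b1.202}) determines a unique bundle atlas, which is exactly your first approach; your second, explicit construction of $\Phi$ and the smoothness check simply spell out what the paper leaves tacit.
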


\begin{example} \mar{52e1} \label{52e1} Let $H$ be a closed
subgroup of a real Lie group $G$. Then $H$ is a Lie group. Let
$G/H$ be the quotient of $G$ with respect to an action of $H$ on
$G$ by right multiplications. Then
\mar{ggh}\beq
\pi_{GH}:G\to G/H \label{ggh}
\eeq
 is a principal $H$-bundle. If $H$ is a maximal
compact subgroup of $G$, the quotient $G/H$ is diffeomorphic to an
Euclidean manifold $\mathbb R^m$ and the principal bundle
(\ref{ggh}) is trivial, i.e., $G$ is diffeomorphic to a product
$\mathbb R^m\times H$.
\end{example}

\begin{remark} \mar{52r1} \label{52r1}
If $f:X'\to X$ is a manifold morphism,  the pull-back $f^*P\to X'$
 of a principal bundle also is a principal bundle with the same
structure group as of $P$.
\end{remark}

\begin{remark}  \mar{52r1'} \label{52r1'}
Let $P\to X$ and $P'\to X'$ be principal $G$- and $G'$-bundles,
respectively. A bundle morphism $\Phi:P\to P'$ is a morphism of
principal bundles if there exists a Lie group homomorphism
$\g:G\to G'$ such that $\Phi(pg)=\Phi(p)\g(g)$.
\end{remark}

 In accordance with Remark
\ref{52r1'}, an automorphism $\Phi_P$ of a principal $G$-bundle
$P$ is called principal if it is equivariant under the right
action (\ref{1}) of a structure group $G$ on $P$, i.e.,
\mar{55ff1}\beq
\Phi_P(pg)=\Phi_P(p)g, \qquad g\in G, \qquad p\in P. \label{55ff1}
\eeq
In particular, every vertical principal automorphism of a
principal bundle $P$ is represented as
\mar{b3111}\beq
\Phi_P(p)=pf(p), \qquad p\in P, \label{b3111}
\eeq
where $f$ is a $G$-valued equivariant function on $P$, i.e.,
\mar{b3115}\beq
f(pg)=g^{-1}f(p)g, \qquad g\in G. \label{b3115}
\eeq
Note that there is one-to-one correspondence
\mar{56f2}\beq
s(\pi_P(p))p = pf(p), \qquad p\in P, \label{56f2}
\eeq
between the equivariant functions $f$ (\ref{b3115}) (consequently,
the vertical automorphisms of $P$) and the global sections $s$ of
the group bundle $P^G$ (\ref{b3130}) (Example \ref{56e2}).

Let $P$ (\ref{51f1}) be a principal bundle and $V$ a smooth
manifold that on a group $G$ acts on the left. Let us consider the
quotient
\mar{b1.230}\beq
Y=(P\times V)/G \label{b1.230}
\eeq
of a product $P\times V$ by identification of elements $(p,v)$ and
$(pg,g^{-1}v)$ for all $g\in G$. It is a fibre bundle with a
structure group $G$ and a typical fibre $V$ which is said to be
associated to be associated to the principal $G$-bundle $P$. For
the sake of brevity, we call it the $P$-associated bundle.

\begin{example} \label{56e2} \mar{56e2}
A $P$-associated group bundle is defined as the quotient
\mar{b3130}\beq
\pi_{P^G}:P^G =(P\times G)/G\to X, \label{b3130}
\eeq
where a structure group $G$ which acts on itself by the adjoint
representation. There is the following fibre-to-fibre action of
the group bundle $P^G$ on any $P$-associated bundle $Y$
(\ref{b1.230}):
\be
P^G\op\times_X Y\op\to_X Y, \qquad ((p, g)/G, (p, v)/G) \to (p,
gv)/ G, \qquad g\in G, \quad  v\in V.
\ee
For instance, the action of $P^G$ on $P$ in the formula
(\ref{56f2}) is of this type.
\end{example}

The peculiarity of the $P$-associated bundle $Y$ (\ref{b1.230}) is
the following.

$\bullet$ Every bundle atlas $\Psi_P=\{(U_\al, z_\al)\}$
(\ref{vcv}) of $P$ defines a unique associated bundle atlas
\mar{aaq1}\beq
\Psi=\{(U_\al,\psi_\al(x)=[z_\al(x)]^{-1})\} \label{aaq1}
\eeq
of the quotient $Y$ (\ref{b1.230}).

$\bullet$ Any principal automorphism $\Phi_P$ (\ref{55ff1}) of $P$
yields a unique principal automorphism
\mar{024}\beq
\Phi_Y: (p,v)/G\to  (\Phi_P(p),v)/G, \qquad p\in P, \qquad v\in V,
\label{024}
\eeq
of the $P$-associated bundle $Y$ (\ref{b1.230}).

\begin{remark} \label{ss40} \mar{ss40}
In classical gauge theory on a principal bundle $P$, matter fields
are described as sections of $P$-associated bundles
(\ref{b1.230}).
\end{remark}

As was mentioned above, spontaneous symmetry breaking in classical
gauge theory on a principal bundle $P\to X$ is characterized by a
reduction of a structure group of $P$ \cite{book09,higgs,tmp}

Let $H$ and $G$ be Lie groups and $\f:H\to G$ a Lie group
homomorphism. If $P_H\to X$ is a principal $H$-bundle, there
always exists a principal $G$-bundle $P_G\to X$ together with the
principal bundle morphism
\mar{510f1}\beq
\Phi:P_H\ar_X P_G \label{510f1}
\eeq
over $X$ (Remark \ref{52r1'}). It is a $P_H$-associated bundle
\be
P_G=(P_H\times G)/H
\ee
with a typical fibre $G$ which on $H$ acts on the left by the rule
$h(g)=\f(h)g$, while $G$ acts on $P_G$ as
\be
G\ni g': (p,g)/ H \to (p,gg')/ H.
\ee

Conversely, if $P_G\to X$ is a principal $G$-bundle, a problem is
to find a principal $H$-bundle $P_H\to X$ together with the
principal bundle morphism (\ref{510f1}). If $H\to G$ is a group
epimorphism, one says that $P_G$ gives rise to $P_H$. If $H\to G$
is a closed subgroup, we have the  structure group reduction. In
this case, the bundle monomorphism (\ref{510f1}) is called a
reduced $H$-structure.

Let $P$ (\ref{51f1}) be a principal $G$-bundle,  and let $H$,
dim$H>0$, be a closed subgroup of $G$. Then we have a composite
bundle
\mar{b3223a}\beq
P\to P/H\to X, \label{b3223a}
\eeq
where
\mar{b3194}\beq
 P_\Si=P\ar^{\pi_{P\Si}} P/H \label{b3194}
\eeq
is a principal bundle with a structure group $H$ and
\mar{b3193}\beq
\Si=P/H\ar^{\pi_{\Si X}} X \label{b3193}
\eeq
is a $P$-associated bundle with a typical fibre $G/H$ which on a
structure group $G$ acts on the left (Example \ref{52e1}).

\begin{definition} \label{ss41} \mar{ss41}
One says that a structure Lie group $G$ of a principal bundle
$P$ is reduced to its closed subgroup $H$ if the following
equivalent conditions hold.

$\bullet$ A principal bundle $P$ admits a bundle atlas $\Psi_P$
(\ref{51f2}) with $H$-valued transition functions $\vr_{\al\bt}$.

$\bullet$ There exists a reduced principal subbundle $P_H$ of $P$
with a structure group $H$.
\end{definition}

\begin{remark} \mar{rttf} \label{rttf}
It is easily justified that these conditions are equivalent. If
$P_H\subset P$ is a reduced principal subbundle, its atlas
(\ref{vcv}) given by local sections $z_\al$ of $P_H\to X$ is a
desired atlas of $P$. Conversely, let $\Psi_P=\{(U_\al,
z_\al),\vr_{\al\bt}\}$ (\ref{vcv}) be an atlas of $P$ with
$H$-valued transition functions $\vr_{\al\bt}$. For any $x\in
U_\al\subset X$, let us define a submanifold $z_\al(x)H\subset
P_x$. These submanifolds form a desired $H$-subbundle of $P$
because
\be
z_\al(x)H=z_\bt(x)H\vr_{\bt\al}(x)
\ee
on the overlaps $U_\al\cap U_\bt$.
\end{remark}

A key point is the following.

\begin{theorem}\label{redsub} \mar{redsub}
There is one-to-one correspondence
\mar{510f2}\beq
P^h=\pi_{P\Si}^{-1}(h(X)) \label{510f2}
\eeq
between the reduced principal $H$-subbundles $i_h:P^h\to P$ of $P$
and the global sections $h$ of the quotient bundle $P/H\to X$
(\ref{b3193}) \cite{book09,kob}.
\end{theorem}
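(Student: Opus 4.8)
The plan is to exhibit the correspondence explicitly in both directions and then check that the two constructions are mutually inverse. Recall from the composite bundle (\ref{b3223a})--(\ref{b3193}) that $\pi_{P\Si}:P\to P/H$ is a principal $H$-bundle and that $\pi_{\Si X}:\Si=P/H\to X$ is the $P$-associated bundle with typical fibre $G/H$; in particular each fibre $\Si_x$ is the orbit space $P_x/H$, and the fibres of $\pi_{P\Si}$ lying over $\Si_x$ are precisely the right $H$-orbits in $P_x$. First I would treat the direction \emph{section} $\Rightarrow$ \emph{subbundle}. Given a global section $h$ of $\Si\to X$, its image $h(X)$ is a closed imbedded submanifold of $\Si$ (a section of a fibre bundle is an imbedding). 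Put $P^h=\pi_{P\Si}^{-1}(h(X))$ as in (\ref{510f2}). Then $P^h$ is an imbedded submanifold of $P$, invariant under the right $H$-action (which preserves the fibres of $\pi_{P\Si}$), and $\pi_{P\Si}|_{P^h}:P^h\to h(X)\cong X$ is again a principal $H$-bundle; concretely it is isomorphic to the pull-back $h^*P_\Si$, which is a principal $H$-bundle by Remark \ref{52r1}. The inclusion $i_h:P^h\hookrightarrow P$ is a bundle monomorphism over $X$, equivariant with respect to $H\hookrightarrow G$, hence a reduced $H$-structure in the sense of (\ref{510f1}), so $P^h$ is a reduced principal $H$-subbundle of $P$.

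Next I would treat the converse, \emph{subbundle} $\Rightarrow$ \emph{section}. Let $i_h:P^h\to P$ be a reduced principal $H$-subbundle. For each $x\in X$ the fibre $P^h_x$ is a single right $H$-orbit, so $\pi_{P\Si}(P^h_x)$ is a single point of $\Si_x=P_x/H$; define $h(x)$ to be this point. Then $\pi_{\Si X}\circ h=\id_X$, so $h$ is a section of $\Si\to X$. Smoothness is checked locally: over a trivializing chart $U_\al$ choose a local section $z_\al$ of $P^h\to X$ (which exists since $P^h$ is locally trivial); then $h|_{U_\al}=\pi_{P\Si}\circ z_\al$ is a composition of smooth maps, hence smooth.

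It remains to see that these two assignments are mutually inverse. Starting from a section $h$ and forming $P^h=\pi_{P\Si}^{-1}(h(X))$, the fibre $P^h_x=\pi_{P\Si}^{-1}(h(x))$ is exactly one $H$-orbit, so the section associated to it sends $x$ to $\pi_{P\Si}(P^h_x)=h(x)$: we recover $h$. Starting from a reduced subbundle $P^h$ with associated section $h$, we have $h(x)=\pi_{P\Si}(P^h_x)$; since the fibres of $\pi_{P\Si}$ over $\Si_x$ are precisely the $H$-orbits in $P_x$ and $P^h_x$ is one such orbit, $\pi_{P\Si}^{-1}(h(x))=P^h_x$, so $\pi_{P\Si}^{-1}(h(X))=P^h$: we recover $P^h$. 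This establishes the bijection (\ref{510f2}).

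The one genuine point of care, and the main obstacle, is the smooth-manifold bookkeeping in the first direction: verifying that $\pi_{P\Si}^{-1}(h(X))$ really is a smooth imbedded submanifold of $P$ that is locally trivial as an $H$-bundle. I would handle this by passing to a local trivialization of the principal $H$-bundle $\pi_{P\Si}$ over a neighbourhood of a point of $h(X)$, in which $h(X)$ appears as a slice, so that $P^h$ is locally of the form $U_\al\times H$; equivalently one simply invokes the standard fact (Remark \ref{52r1}) that the pull-back $h^*P_\Si$ is a principal $H$-bundle and identifies it with its image in $P$. Everything else reduces to the elementary group-theoretic observation that the fibres of $P\to P/H$ are the right $H$-orbits, which is immediate from the definition (\ref{b1.230}) of the quotient.
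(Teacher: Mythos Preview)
Your proof is correct and is precisely the standard argument for this classical result. Note, however, that the paper does not actually prove Theorem~\ref{redsub}: it is stated with citations to \cite{book09,kob} (Kobayashi--Nomizu) and used as a known fact, with no proof given in the text. Your argument---pulling back $P_\Si$ along $h$ to get $P^h$, and conversely projecting a reduced subbundle fibrewise to $\Si$ to recover the section, then checking the two constructions are mutual inverses---is exactly the classical proof one finds in the cited references, so there is nothing to compare beyond observing that you have supplied what the paper only quotes.
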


In classical field theory, global sections of a quotient bundle
$P/H\to X$ are interpreted as  classical Higgs fields
\cite{book09,sard92,higgs,tmp}.

\begin{corollary} \mar{5101c} \label{5101c}
A glance at the formula (\ref{510f2}) shows that a reduced
principal $H$-bundle $P^h$ is the restriction $h^*P_\Si$ of a
principal $H$-bundle $P_\Si$ (\ref{b3194}) to $h(X)\subset \Si$.
Any atlas $\Psi^h$ of a principal $H$-bundle $P^h$ defined by a
family of local sections of $P^h\to X$ also is an atlas of a
principal $G$-bundle $P$ and the $P$-associated bundle $\Si\to X$
(\ref{b3193}) with $H$-valued transition functions (Remark
\ref{rttf}). Herewith, a Higgs field $h$ written with respect to
an atlas $\Psi^h$ takes its values into the center of a quotient
$G/H$.
\end{corollary}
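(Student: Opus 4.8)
The plan is to extract each clause of the corollary from Theorem \ref{redsub} and from the general facts about pull-back and associated bundles assembled earlier in this section, so essentially no new geometry is needed.

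First I would settle the identification $P^h = h^*P_\Si$. By Theorem \ref{redsub}, $P^h = \pi_{P\Si}^{-1}(h(X))$, where $\pi_{P\Si}:P_\Si = P \to \Si = P/H$ (\ref{b3194}) is a principal $H$-bundle and $h$ is a global section of $\pi_{\Si X}:\Si \to X$ (\ref{b3193}). Being a section, $h$ is an embedding whose image is a submanifold carried diffeomorphically onto $X$ by $\pi_{\Si X}$; the pull-back $h^*P_\Si = \{(x,p)\in X\times P : h(x) = \pi_{P\Si}(p)\}$ is therefore carried, by the projection $(x,p)\mapsto p$, diffeomorphically and $H$-equivariantly over $X$ onto $\pi_{P\Si}^{-1}(h(X)) = P^h$ (cf. Remark \ref{52r1}). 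This yields the first claim, namely that $P^h$ is the restriction of $P_\Si$ to $h(X)\subset\Si$.

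Next I would treat the atlas claim. Let $\Psi^h = \{(U_\al,z_\al),\vr_{\al\bt}\}$ (\ref{vcv}) be an atlas of the principal $H$-bundle $P^h\to X$ defined by local sections $z_\al:U_\al\to P^h$. Since $P^h$ is a subbundle of $P$, each $z_\al$ is equally a local section of $P\to X$, and the transformation law $z_\bt(x) = z_\al(x)\vr_{\al\bt}(x)$ (\ref{b1.202}) continues to hold inside $P$. Both $z_\al(x)$ and $z_\bt(x)$ lie in the single fibre $P^h_x$, on which $H$ acts freely and transitively, so $\vr_{\al\bt}(x)\in H$; thus $\Psi^h$, read inside $P$, is an atlas of $P$ with $H$-valued transition functions --- one of the equivalent conditions of Definition \ref{ss41}, exactly as in Remark \ref{rttf}. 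Because $\Si\to X$ (\ref{b3193}) is $P$-associated, the associated-bundle atlas construction (\ref{aaq1}) then produces from this atlas of $P$ a bundle atlas of $\Si$ with the same transition functions $\vr_{\al\bt}$, which are again $H$-valued.

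Finally, the value of the Higgs field. Writing $\Si = (P\times G/H)/G$ (\ref{b1.230}) with the canonical identification $\pi_{P\Si}(p) = (p,\bb H)/G$, and using that $\pi_{P\Si}$ maps $P^h$ onto $h(X)$ while $\pi_{\Si X}\circ\pi_{P\Si} = \pi_P$, uniqueness of the point of $h(X)$ over $x$ forces $\pi_{P\Si}(z_\al(x)) = h(x)$. Hence $h(x) = (z_\al(x),\bb H)/G$, so in the trivialization of $\Si$ over $U_\al$ attached to $z_\al$ the Higgs field $h$ is the constant map $x\mapsto \bb H$; that is, in the atlas $\Psi^h$ it takes the single value $\bb H\in G/H$, the point left fixed by the isotropy action of $H$ (the ``center'' of the homogeneous space $G/H$ --- literally the origin when, as in the gravitational case, $G/H$ is realised as a vector space or symmetric space). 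The main obstacle, and the only place where care is genuinely required, is this last step: one has to line up the associated-bundle trivialization convention of (\ref{aaq1}), which is phrased through $[z_\al(x)]^{-1}$, with the quotient description of $\Si = P/H$ as a $P$-associated bundle with fibre $G/H$, keeping the left and right group actions consistent; once that bookkeeping is fixed the identification of $h$ with the constant section $\bb H$ is immediate.
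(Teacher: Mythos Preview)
Your proposal is correct and follows exactly the route the paper intends: the paper gives no separate proof environment for this corollary, treating it as immediate from formula (\ref{510f2}) and Remark \ref{rttf}, and your three paragraphs simply unpack those references in the natural way. Your careful handling of the last clause --- identifying the ``center'' of $G/H$ with the coset $\bb H$ fixed by the isotropy $H$-action, and verifying that $h$ becomes the constant section at this point in the atlas $\Psi^h$ --- is more explicit than anything the paper writes down, but it is precisely what the authors mean.
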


\begin{remark} \label{sp220} \mar{sp220}
Let $P^h$ be a reduced principal $H$-subbundle of a principal
$G$-bundle in Corollary \ref{5101c}. Any principal automorphism
$g\f$ of $P^h$ gives rise to a principal automorphism of $P$ by
means of the relation $\f(P^hg)=\f(P^h)g$, $g\in G$.
\end{remark}

In general, there is topological obstruction to reduction of a
structure group of a principal bundle to its subgroup.

\begin{theorem} \label{510a1} \mar{510a1}
The structure group $G$ of a principal bundle $P$ always is
reducible to its closed subgroup $H$, if the quotient $G/H$ is
diffeomorphic to a Euclidean space $\mathbb R^m$.
\end{theorem}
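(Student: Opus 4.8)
The plan is to use Theorem \ref{redsub} to convert the claim into the existence of a global section of the quotient bundle $\Si=P/H\to X$ (\ref{b3193}), and then to obtain such a section from the fact that its typical fibre $G/H$ is diffeomorphic to $\mathbb R^m$, hence contractible. First I would recall that, by Theorem \ref{redsub}, the reduced principal $H$-subbundles $P^h$ of $P$ are in one-to-one correspondence (\ref{510f2}) with the global sections $h$ of the $P$-associated bundle $\Si=P/H\to X$ (\ref{b3193}), whose typical fibre is $G/H$. So it is enough to produce one global section of $\Si\to X$.

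To build the section: since $X$ is a smooth manifold it is paracompact, so $\Si\to X$ admits a locally finite trivializing open cover $\{U_\al\}$, which we may shrink to $\{V_\al\}$ with $\ol V_\al\subset U_\al$. One then constructs a continuous section by transfinite induction over a well-ordering of the index set: at stage $\al$, a section already defined on the closed set $\bigcup_{\bt<\al}\ol V_\bt$ is restricted to its intersection with $\ol V_\al$, read off through a trivialization $\Si|_{U_\al}\cong U_\al\times(G/H)$ as a partial continuous map into $G/H\cong\mathbb R^m$, and extended over all of $\ol V_\al$ using the Tietze extension theorem (applicable coordinatewise precisely because the fibre is $\mathbb R^m$, i.e. an absolute retract); the extensions agree on the overlaps by construction, and local finiteness of $\{U_\al\}$ ensures the resulting map is well defined and continuous at limit stages. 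This yields a continuous global section of $\Si\to X$, which is then smoothed — smooth sections of a smooth fibre bundle over a manifold are $C^0$-dense among the continuous ones — to a smooth section $h$. By Theorem \ref{redsub}, $P^h=\pi_{P\Si}^{-1}(h(X))$ is a reduced principal $H$-subbundle of $P$, so the structure group $G$ reduces to $H$. Equivalently, in the language of obstruction theory, the successive obstructions to a section of $\Si\to X$ lie in $H^{k+1}(X;\pi_k(G/H))$ with local coefficients and vanish because $G/H\cong\mathbb R^m$ is contractible; an alternative, if one is willing to quote more, is to invoke directly the standard fact that a fibre bundle with an absolute-retract fibre over a paracompact base has a global section \cite{kob}.

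The main obstacle is the patching step itself: one must check that the partially defined sections extend across closed subsets of the trivializing neighbourhoods — this is exactly where the hypothesis $G/H\cong\mathbb R^m$ (as opposed to $G/H$ being merely some homogeneous space) is essential — and that the transfinite induction is arranged so that local finiteness of $\{U_\al\}$ makes the limit stages go through. The concluding smoothing is routine but should be recorded explicitly, since the bundles in Definition \ref{ss41} are smooth ones.
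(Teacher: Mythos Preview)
The paper states Theorem \ref{510a1} without proof, treating it as a known result (it is essentially the classical fact, found e.g.\ in \cite{kob,ste}, that a fibre bundle whose typical fibre is contractible admits a global section, combined with Theorem \ref{redsub}). Your proposal supplies a correct proof along exactly these standard lines: reduce via Theorem \ref{redsub} to the existence of a section of $\Si=P/H\to X$, then use paracompactness and the fact that $G/H\cong\mathbb R^m$ is an absolute retract to patch local sections, and finally smooth. The Tietze-extension-over-a-locally-finite-cover argument is sound, as is the obstruction-theoretic alternative you mention; either fills the gap the paper leaves.
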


In particular, this is the case of a maximal compact subgroup $H$
of a Lie group $G$ (Example \ref{52e1}). Then the following is a
corollary of Theorem \ref{510a1} \cite{ste}.

\begin{theorem} \label{comp} \mar{comp}
A structure group $G$ of a principal bundle always is reducible to
its maximal compact subgroup $H$.
\end{theorem}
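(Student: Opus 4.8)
The plan is to deduce Theorem \ref{comp} from Theorem \ref{510a1} together with the structure theory of Lie groups. First I would invoke the classical theorem of Cartan--Iwasawa--Malcev: any connected Lie group $G$ possesses a maximal compact subgroup $H$, it is unique up to conjugation, and $G$ is diffeomorphic to a product $H\times\mathbb R^m$ for some $m\geq 0$. In particular, the homogeneous space $G/H$ is diffeomorphic to the Euclidean space $\mathbb R^m$. This is exactly the situation already recorded in Example \ref{52e1}, where it was noted that for a maximal compact subgroup $H$ the quotient $G/H$ is diffeomorphic to $\mathbb R^m$ and the principal bundle $G\to G/H$ is trivial.

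The second step is then immediate: apply Theorem \ref{510a1} with this closed subgroup $H$. Since $G/H\cong\mathbb R^m$ is diffeomorphic to a Euclidean space, that theorem asserts that the structure group $G$ of any principal $G$-bundle $P\to X$ is reducible to $H$; equivalently, by Definition \ref{ss41}, $P$ admits an atlas with $H$-valued transition functions, or, what is the same by Theorem \ref{redsub}, the quotient bundle $P/H\to X$ admits a global section. This reduced $H$-structure is the desired reduction to the maximal compact subgroup.

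One should address the case when $G$ is not connected. Writing $G^0$ for the identity component and $H^0$ for a maximal compact subgroup of $G^0$, a maximal compact subgroup $H$ of $G$ meets every connected component of $G$, the quotient $G/H$ is still diffeomorphic to $G^0/H^0\cong\mathbb R^m$ (the component group being carried along inside $H$), and the argument proceeds verbatim. So the proof reduces to citing Theorem \ref{510a1} once the Euclidean-quotient property of maximal compact subgroups is in hand.

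The main obstacle is not in the bundle-theoretic step, which is a one-line application of Theorem \ref{510a1}; it is the input fact that $G/H$ is diffeomorphic to $\mathbb R^m$ for $H$ maximal compact. This is a nontrivial theorem of Lie theory (the Cartan--Iwasawa--Malcev theorem, generalizing Cartan's decomposition for semisimple groups), and in a self-contained treatment one would have to either prove it or explicitly cite it, as the excerpt indeed does via the reference \cite{ste}. Given that the paper states it may assume earlier results and cites Steenrod, I would simply quote this structural fact and let Theorem \ref{510a1} do the rest.
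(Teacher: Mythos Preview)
Your proposal is correct and follows exactly the paper's approach: the paper states Theorem \ref{comp} as a corollary of Theorem \ref{510a1}, invoking Example \ref{52e1} for the fact that $G/H$ is Euclidean when $H$ is maximal compact, and citing \cite{ste} for the underlying Lie-theoretic input. Your write-up is simply a more detailed unpacking of that one-line deduction, including the Cartan--Iwasawa--Malcev attribution and the non-connected case, neither of which the paper spells out.
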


Given different Higgs fields $h$ and $h'$, the corresponding
principal $H$-subbundles $P^h$ and $P^{h'}$ of a principal
$G$-bundle $P$ fail to be isomorphic to each other in general
\cite{book09,higgs}.

\begin{theorem}\label{iso1} \mar{iso1} Let a structure Lie group $G$
of a principal bundle be reducible to its closed subgroup $H$.

$\bullet$ Every vertical principal automorphism $\Phi$ of $P$
sends a reduced principal $H$-subbundle $P^h$ of $P$ onto an
isomorphic principal $H$-subbundle $P^{h'}$.

$\bullet$ Conversely, let two reduced principal subbundles $P^h$
and $P^{h'}$ of a principal bundle $P\to X$ be isomorphic to each
other, and let $\Phi:P^h\to P^{h'}$ be their isomorphism over $X$.
Then $\Phi$ is extended to a vertical principal automorphism of
$P$.
\end{theorem}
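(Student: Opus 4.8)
The plan is to prove the two bullets separately, in each case exploiting the one-to-one correspondence of Theorem \ref{redsub} between reduced principal $H$-subbundles of $P$ and global sections of the quotient bundle $\Si=P/H\to X$ (\ref{b3193}), together with the fact (\ref{024}) that a principal automorphism of $P$ induces a principal automorphism of any $P$-associated bundle, in particular of $\Si$.

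\textit{First bullet.} Let $\Phi$ be a vertical principal automorphism of $P$. By (\ref{024}) it induces a principal automorphism $\Phi_\Si:pH\mapsto \Phi(p)H$ of $\Si=P/H$, which covers $\id_X$ because $\Phi$ does. Hence $h':=\Phi_\Si\circ h$ is again a global section of $\Si\to X$, so by Theorem \ref{redsub} it determines a reduced principal $H$-subbundle $P^{h'}=\pi_{P\Si}^{-1}(h'(X))$. From $\pi_{P\Si}\circ\Phi=\Phi_\Si\circ\pi_{P\Si}$ one gets $\Phi(P^h)=\Phi(\pi_{P\Si}^{-1}(h(X)))=\pi_{P\Si}^{-1}(\Phi_\Si(h(X)))=\pi_{P\Si}^{-1}(h'(X))=P^{h'}$. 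Finally $\Phi|_{P^h}:P^h\to P^{h'}$ is a diffeomorphism over $X$ and is $H$-equivariant, being the restriction of the $G$-equivariant $\Phi$ to an $H$-invariant submanifold; thus it is an isomorphism of principal $H$-bundles.

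\textit{Second bullet.} Let $\Phi:P^h\to P^{h'}$ be an isomorphism over $X$. I would use that $P$ is recovered from its reduced subbundle $P^h$ as the $P^h$-associated bundle $(P^h\times G)/H$ with typical fibre $G$ on which $H$ acts on the left through the inclusion $H\hookrightarrow G$ (the construction (\ref{510f1})): every $p\in P$ is of the form $p=p_hg$ with $p_h\in P^h$ and $g\in G$, and $p_hg=p_h'g'$ precisely when $p_h'=p_hh$, $g'=h^{-1}g$ for some $h\in H$. Then put $\wt\Phi(p_hg):=\Phi(p_h)g$. The substantive checks are: well-definedness, which follows from $H$-equivariance of $\Phi$ via $\Phi(p_hh)g'=\Phi(p_h)hg'=\Phi(p_h)g$; smoothness, obtained by trivializing with a local section $z_\al$ of $P^h\to X$ as in Remark \ref{rttf}, so that $\wt\Phi$ reads $(x,g)\mapsto(\Phi(z_\al(x)),g)$ in these charts; $G$-equivariance, $\wt\Phi(pg'')=\Phi(p_h)gg''=\wt\Phi(p)g''$; and projection to $\id_X$, inherited from $\Phi$. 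Hence $\wt\Phi$ is a vertical principal automorphism of $P$ restricting to $\Phi$ on $P^h$, which is precisely the extension already indicated in Remark \ref{sp220}.

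\textit{Main obstacle.} The equivariance and naturality manipulations are routine; the one point needing care is the identification $P\cong(P^h\times G)/H$ in the second bullet and the resulting smoothness of $\wt\Phi$, which is where I would spend most of the effort, working throughout in local trivializations coming from an atlas of $P^h$.
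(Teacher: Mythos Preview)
Your proof is correct. For the second bullet it is essentially the paper's argument in a different packaging: the paper defines an $H$-equivariant $G$-valued function $f$ on $P^h$ by $pf(p)=\Phi(p)$ and extends it to a $G$-equivariant function on $P$ via $f(pg)=g^{-1}f(p)g$, which by (\ref{b3111}) gives the desired vertical automorphism. Unwinding this, $\wt\Phi(p_hg)=(p_hg)f(p_hg)=p_hf(p_h)g=\Phi(p_h)g$, which is exactly your formula; you check well-definedness under the $H$-ambiguity of the factorization $p=p_hg$, while the paper checks $G$-equivariance of the extended $f$ --- two sides of the same coin. For the first bullet your route genuinely differs from the paper's: the paper works with atlases, taking local sections $z^h_\al$ of $P^h$, setting $z^{h'}_\al=\Phi\circ z^h_\al$, and observing that the transition functions coincide, $\vr^{h'}_{\al\bt}=\vr^h_{\al\bt}$, so $P^{h'}=\Phi(P^h)$ is a reduced $H$-subbundle isomorphic to $P^h$. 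Your argument via the induced automorphism $\Phi_\Si$ of $\Si=P/H$ and the section $h'=\Phi_\Si\circ h$ is cleaner in that it identifies $h'$ intrinsically and uses Theorem \ref{redsub} directly; the paper's atlas argument, on the other hand, makes the isomorphism $P^h\cong P^{h'}$ explicit at the level of transition functions, which is what is then used in Remark \ref{510r50}.
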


\begin{proof}
Let
\mar{510ff}\beq
\Psi^h=\{(U_\al,z^h_\al), \vr^h_{\al\bt}\} \label{510ff}
\eeq
be an atlas of a reduced principal subbundle $P^h$, where
$z^h_\al$ are local sections of $P^h\to X$  and $\vr^h_{\al\bt}$
are the transition functions. Given a vertical automorphism $\Phi$
of $P$, let us provide a subbundle $P^{h'}=\Phi(P^h)$ with an
atlas
\mar{510ff1}\beq
\Psi^{h'}=\{(U_\al,z^{h'}_\al), \vr^{h'}_{\al\bt}\} \label{510ff1}
\eeq
given by the local sections $z^{h'}_\al =\Phi\circ z^h_\al$ of
$P^{h'}\to X$. Then it is readily observed that
\mar{510ff2}\beq
\vr^{h'}_{\al\bt}(x) =\vr^h_{\al\bt}(x), \qquad x\in U_\al\cap
U_\bt. \label{510ff2}
\eeq
Conversely, any isomorphism $(\Phi, \id X)$ of reduced principal
subbundles $P^h$ and $P^{h'}$ of $P$ defines an $H$-equivariant
$G$-valued function $f$ on $P^h$ given by the relation
\be
pf(p)=\Phi(p), \qquad p\in P^h.
\ee
Its prolongation to a $G$-equivariant function on $P$ is defined
as
\be
f(pg)=g^{-1}f(p)g, \qquad p\in P^h, \qquad g\in G.
\ee
In accordance with the relation (\ref{b3111}), this function
provides a vertical principal automorphism of $P$ whose
restriction to $P^h$ coincides with $\Phi$.
\end{proof}

\begin{theorem}\label{iso0} \mar{iso0} If the quotient $G/H$ is homeomorphic to a
Euclidean space $\mathbb R^m$, all principal $H$-subbundles of a
principal $G$-bundle $P$ are isomorphic to each other \cite{ste}.
\end{theorem}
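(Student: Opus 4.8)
The plan is to combine the correspondence of Theorem \ref{redsub} and Corollary \ref{5101c} with a homotopy argument, exactly parallel to the way the reducibility Theorem \ref{510a1} is obtained from the topology of the quotient bundle. First I would recall the set-up: by Theorem \ref{redsub} the reduced principal $H$-subbundles of $P$ are in one-to-one correspondence with the global sections $h$ of the quotient bundle $\pi_{\Si X}:\Si=P/H\to X$ (\ref{b3193}), whose typical fibre is $G/H$, and by Corollary \ref{5101c} each $P^h$ is the restriction $h^*P_\Si$ of the principal $H$-bundle $P_\Si=P\to\Si$ (\ref{b3194}). Hence it is enough to prove that, for any two sections $h_0$ and $h_1$ of $\Si\to X$, the pull-back principal $H$-bundles $h_0^*P_\Si$ and $h_1^*P_\Si$ over $X$ are isomorphic. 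By the converse part of Theorem \ref{iso1}, any such isomorphism is then automatically induced by a vertical principal automorphism of $P$, though this observation is not needed for the statement itself.

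Second, I would produce a homotopy between $h_0$ and $h_1$ through sections of $\Si\to X$. Since $G/H$ is homeomorphic to $\mathbb R^m$, it is an absolute retract, i.e.\ a ``solid'' space in the sense of \cite{ste}. Consider the projection $q:X\times I\to X$ and the pull-back bundle $q^*\Si\to X\times I$; the pair $(h_0,h_1)$ determines a section of $q^*\Si$ over the closed subset $X\times\{0,1\}$. Because the fibre $G/H$ is solid and $X\times I$ is a paracompact (indeed metrizable) manifold, this section extends to a global section of $q^*\Si\to X\times I$ by Steenrod's cross-section extension theorem, of which Theorem \ref{510a1} is the special case $A=\varnothing$. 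Reading this extension as a map $\wt h:X\times I\to\Si$ with $\wt h(x,t)\in\Si_x$, $\wt h(\cdot,0)=h_0$, $\wt h(\cdot,1)=h_1$, we obtain the required homotopy of sections.

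Third, I would pull $P_\Si$ back along $\wt h$ to the principal $H$-bundle $\wt h^*P_\Si\to X\times I$; its restrictions to $X\times\{0\}$ and $X\times\{1\}$ are $h_0^*P_\Si=P^{h_0}$ and $h_1^*P_\Si=P^{h_1}$, respectively. Since a principal bundle over a product $X\times I$ is isomorphic to the product of its restriction to $X\times\{0\}$ with $I$ (the covering homotopy theorem, equivalently homotopy invariance of pull-backs over a paracompact base), it follows that $P^{h_0}\cong P^{h_1}$ as principal $H$-bundles over $X$. Therefore all reduced principal $H$-subbundles of $P$ are mutually isomorphic.

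The only non-formal ingredient is the second step: that any two sections of a fibre bundle with Euclidean fibre over a manifold are homotopic through sections. This is the ``uniqueness'' counterpart of the reducibility Theorem \ref{510a1}, it rests on the solid-space extension theorem of \cite{ste}, and it is the point I expect to require the most care; everything else is bookkeeping with the identifications already established in Theorems \ref{redsub} and \ref{iso1} and Corollary \ref{5101c}.
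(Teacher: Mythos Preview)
Your argument is correct and is essentially the standard proof one finds in Steenrod's book. Note, however, that the paper does not actually supply a proof of this theorem: it merely states the result and cites \cite{ste}. So there is no ``paper's own proof'' to compare with; what you have written is precisely the argument the citation points to, namely: identify reduced $H$-subbundles with sections of $\Si=P/H\to X$ via Theorem \ref{redsub} and Corollary \ref{5101c}, use the contractibility (solidity) of the fibre $G/H\cong\mathbb R^m$ to connect any two sections by a homotopy through sections, and then invoke the covering homotopy theorem (homotopy invariance of pull-backs of principal bundles over a paracompact base) to conclude $h_0^*P_\Si\cong h_1^*P_\Si$. Your identification of the second step as the only substantive one is accurate, and your use of the cross-section extension theorem over $X\times\{0,1\}\subset X\times I$ is the right way to handle it.
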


\begin{remark} \mar{510r50} \label{510r50} Let $P^h$ and $P^{h'}$
be isomorphic reduced principal subbundles in Theorem \ref{iso1}.
A principal $G$-bundle $P$ provided with the atlas $\Psi^h$
(\ref{510ff}) can be regarded as a $P^h$-associated bundle with a
structure group $H$ acting on its typical fibre $G$ on the left.
Endowed with the atlas $\Psi^{h'}$ (\ref{510ff1}), it is a
$P^{h'}$-associated $H$-bundle. The $H$-bundles $(P,\Psi^h)$ and
$(P,\Psi^{h'})$ fail to be equivalent because their atlases
$\Psi^h$ and $\Psi^{h'}$ are not equivalent. Indeed, the union of
these atlases is an atlas
\be
\Psi=\{(U_\al,z^h_\al, z^{h'}_\al), \vr^h_{\al\bt},
\vr^{h'}_{\al\bt}, \vr_{\al\al}=f(z_\al)\}
\ee
possessing transition functions
\mar{510f11}\beq
z^{h'}_\al =z^h_\al\vr_{\al\al}, \qquad
\vr_{\al\al}(x)=f(z_\al(x)), \label{510f11}
\eeq
between the bundle charts $(U_\al,z^h_\al)$ and
$(U_\al,z^{h'}_\al)$ of $\Psi^h$ and $\Psi^{h'}$, respectively.
However, the transition functions $\vr_{\al\al}$ are not
$H$-valued. At the same time, a glance at the equalities
(\ref{510ff2}) shows that transition functions of both the atlases
form the same cocycle. Consequently, the $H$-bundles $(P,\Psi^h)$
and $(P,\Psi^{h'})$ are associated. Due to the isomorphism
$\Phi:P^h\to P^{h'}$, one can write
\be
&& P=(P^h\times G)/H=(P^{h'}\times G)/H,\\
&& (p\times g)/H=(\Phi(p)\times f^{-1}(p)g)/H.
\ee
For any $\rho\in H$, we have
\be
&&(p\rho,g)/H=(\Phi(p)\rho, f^{-1}(p)g)/H=(\Phi(p),\rho
f^{-1}(p)g)/H = \\
&& \qquad  (\Phi(p), f^{-1}(p)\rho'g)/H,
\ee
where
\mar{510f23}\beq
\rho'= f(p)\rho f^{-1}(p). \label{510f23}
\eeq
It follows that $(P,\Psi^{h'})$ can be regarded as a
$P^h$-associated bundle with the same typical fibre $G$ as that of
$(P,\Psi^h)$, but the action $g\to\rho' g$ (\ref{510f23}) of a
structure group $H$ on a typical fibre of $(P,\Psi^{h'})$ is not
equivalent to its action $g\to\rho g$ on a typical fibre of
$(P,\Psi^h)$ since they possesses different orbits in $G$.
\end{remark}

Given a classical Higgs field $h$ and the corresponding reduced
principal $H$-bundle $P^h$, let
\mar{510f24}\beq
Y^h=(P^h\times V)/H \label{510f24}
\eeq
be the associated vector bundle with a typical fibre $V$ which
admits a representation of a group $H$ of exact symmetries. Its
sections $s_h$ describe matter fields in the presence of a
classical Higgs field $h$ (Remark \ref{ss40}).

In general, the fibre bundle $Y^h$ (\ref{510f24}) fails to be
associated to another principal $H$-subbundles $P^{h'}$ of $P$. It
follows that, in this case, a $V$-valued matter field can be
represented only by a pair with a certain Higgs field. Therefore,
a goal is to describe the totality of these pairs $(s_h,h)$ for
all Higgs fields $h\in \Si(X)$.

\begin{remark} \label{510r90} \mar{510r90}
If reduced principal $H$-subbundles $P^h$ and $P^{h'}$ of a
principal $G$-bundle are isomorphic in accordance with Theorem
\ref{iso1}, then the $P^h$-associated bundle $Y^h$ (\ref{510f24})
is associated as
\mar{510f25}\beq
Y^h=(\Phi(p)\times V)/H \label{510f25}
\eeq
to $P^{h'}$. If a typical fibre $V$ admits an action of the whole
group $G$, the $P^h$-associated bundle $Y^h$ (\ref{510f24}) also
is $P$-associated as
\be
Y^h=(P^h\times V)/H= (P\times V)/G.
\ee
Such $P$-associated bundles $P^h$ and $P^{h'}$ are equivalent as
$G$-bundles, but they fail to be equivalent as $H$-bundles because
transition functions between their atlases are not $H$-valued
(Remark \ref{510r50}).
\end{remark}

In order to describe matter fields in the presence of different
classical Higgs fields, let us consider the composite bundle
(\ref{b3223a}) and the composite bundle
\mar{b3225}\beq
Y\ar^{\pi_{Y\Si}} \Si\ar^{\pi_{\Si X}} X \label{b3225}
\eeq
where $Y\to \Si$ is a $P_\Si$-associated bundle
\mar{bnn}\beq
Y=(P\times V)/H \label{bnn}
\eeq
with a structure group $H$. Given a Higgs field $h$ and the
corresponding reduced principal $H$-subbundle $P^h=h^*P$, the
$P^h$-associated fibre bundle (\ref{510f24}) is the restriction
\mar{b3226}\beq
Y^h=h^*Y=(h^*P\times V)/H \label{b3226}
\eeq
of a fibre bundle $Y\to\Si$ to $h(X)\subset \Si$. Every global
section $s_h$ of the fibre bundle $Y^h$ (\ref{b3226}) is a global
section of the composite bundle (\ref{b3225}) projected onto a
Higgs field $h=\pi_{Y\Si}\circ s$. Conversely, every global
section $s$ of the composite bundle $Y\to X$ (\ref{b3225})
projected onto a Higgs field $h=\pi_{Y\Si}\circ s$ takes its
values into the subbundle $Y^hY$ (\ref{b3226}). Hence, there is
one-to-one correspondence between the sections of the fibre bundle
$Y^h$ (\ref{510f24}) and the sections of the composite bundle
(\ref{b3225}) which cover $h$.

Thus, it is the composite bundle $Y\to X$ (\ref{b3225}) whose
sections describe the above mentioned totality of pairs $(s_h, h)$
of matter fields and Higgs fields in classical gauge theory with
spontaneous symmetry breaking \cite{book09,higgs,tmp}.

A key point is that, though $Y\to\Si$ is a fibre bundle with a
structure group $H$, a composite bundle $Y\to X$ is a
$P$-associated bundle as follows \cite{book09,tmp}.

\begin{theorem} \mar{LL1} \label{LL1} The composite bundle $Y\to X$ (\ref{b3225})
is a $P$-associated bundle
\be
&& Y=(P\times (G\times V)/H)/G, \\
&& (pg',(g\rho,v))= (pg',(g,\rho v))=(p,g'(g,\rho v))=(p,(g'g,\rho v)).
\ee
with a structure group $G$. Its typical fibre is a fibre bundle
\mar{wes}\beq
\pi_{WH}:W=(G\times V)/H\to G/H \label{wes}
\eeq
associated to a principal $H$-bundle $G\to G/H$ (\ref{ggh}). A
structure group $G$ acts on $W$ by the law
\mar{iik}\beq
g': (G\times V)/H \to (g'G\times V)/H. \label{iik}
\eeq
\end{theorem}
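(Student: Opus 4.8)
The plan is to construct an explicit isomorphism of fibre bundles over $X$ between the composite bundle $Y\to X$ of (\ref{b3225}) and the $P$-associated bundle $(P\times W)/G$, where $W=(G\times V)/H$ carries the left $G$-action (\ref{iik}); once such an isomorphism is produced, every assertion of the statement follows by reading off a typical fibre over a trivializing chart of $P$. First I would fix the identifications coming from the construction (\ref{b1.230}): in $Y=(P\times V)/H$ of (\ref{bnn}) one identifies $(p\rho,v)$ with $(p,\rho v)$ for $\rho\in H$, where $H$ acts on $P$ by restriction of the right $G$-action on $P$; in $(P\times W)/G$ one identifies $(pg',w)$ with $(p,g'w)$; and in $W=(G\times V)/H$, regarded as the bundle associated to the principal $H$-bundle $G\to G/H$ of (\ref{ggh}) (Example \ref{52e1}), one identifies $(g\rho,v)$ with $(g,\rho v)$. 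As a preliminary I would check that (\ref{iik}) is a well-defined left $G$-action on $W$ and that $(g,v)/H\mapsto gH$ is the bundle projection $\pi_{WH}$ of (\ref{wes}), so that $W$ is genuinely a locally trivial fibre bundle over $G/H$; this uses only that $G\to G/H$ is a principal $H$-bundle, which holds because $H$ is closed in $G$ (Example \ref{52e1}).

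The heart of the argument is the pair of maps
\[
\Theta:(P\times W)/G\to Y,\qquad \Theta\big((p,(g,v)/H)/G\big)=(pg,v)/H,
\]
\[
\Lambda:Y\to(P\times W)/G,\qquad \Lambda\big((q,v)/H\big)=(q,(\bb,v)/H)/G.
\]
I would verify that $\Theta$ descends to the two quotients: invariance under the $H$-identification in $W$ reduces to the identity $(pg\rho,v)/H=(pg,\rho v)/H$ in $Y$, and invariance under the $G$-identification reduces to $pg'\cdot g'^{-1}g=pg$. Similarly $\Lambda$ descends because $(q\rho,(\bb,v)/H)/G=(q,(\rho,v)/H)/G=(q,(\bb,\rho v)/H)/G$. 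A short computation then shows $\Theta$ and $\Lambda$ are mutually inverse, both are smooth, and $\Theta$ covers $\id X$, so $\Theta$ is an isomorphism of fibre bundles over $X$. The chain of equalities displayed in the statement is precisely the coordinate form of this identification, namely the assertion that $(pg',(g\rho,v))$ and $(p,(g'g,\rho v))$ represent the same point of $(P\times W)/G$.

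I do not expect a genuine obstacle here: the content is the construction above, and the only delicate point is keeping the left/right conventions for the $H$- and $G$-actions consistent throughout the well-definedness checks, so that $\Theta$ and $\Lambda$ actually pass to the quotients. Once the isomorphism $Y\cong(P\times W)/G$ over $X$ is in hand, the statement that $Y\to X$ is $P$-associated with typical fibre $W=(G\times V)/H\to G/H$ and structure group $G$ acting by (\ref{iik}) is immediate, together with the description of $W$ as the bundle (\ref{wes}) associated to $G\to G/H$.
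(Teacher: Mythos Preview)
Your argument is correct. The paper does not supply a separate proof of this theorem; it is stated with a reference to \cite{book09,tmp}, and the chain of identifications displayed within the statement itself is all the justification the paper offers. What you have done is exactly to unpack that chain of equalities into a rigorous isomorphism: your maps $\Theta$ and $\Lambda$ are the explicit inverse bijections whose well-definedness is precisely the content of the displayed line $(pg',(g\rho,v))=(pg',(g,\rho v))=(p,g'(g,\rho v))=(p,(g'g,\rho v))$. So your approach is not different from the paper's---it is the paper's sketch made precise, with the quotient compatibilities checked carefully rather than asserted.
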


\begin{theorem} \mar{LL2} \label{LL2}
Given a Higgs field $h$, any atlas of a $P_\Si$-associated bundle
$Y\to \Si$ defines an atlas of a $P$-associated bundle $Y\to X$
with $H$-valued transition functions. The converse need not be
true.
\end{theorem}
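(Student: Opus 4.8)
The plan is to reduce everything to the language of principal bundles and local sections, using two facts already in hand: a bundle atlas of a principal bundle amounts to a family of local sections obeying the transformation law (\ref{b1.202}), see (\ref{vcv}); and an atlas of an associated bundle is induced from one of its structure principal bundle via (\ref{aaq1}), with the same transition functions. Since $Y\to\Si$ is $P_\Si$-associated by (\ref{bnn}), an atlas of $Y\to\Si$ is the same datum as an atlas $\{(U^\Si_\iota,z_\iota),\vr_{\iota\kappa}\}$ of the principal $H$-bundle $P_\Si=P\to P/H$, that is, local sections $z_\iota:U^\Si_\iota\to P$ of $\pi_{P\Si}$ over an open cover of $\Si$ together with $H$-valued transition functions $\vr_{\iota\kappa}$. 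And since $Y\to X$ is $P$-associated by Theorem \ref{LL1}, it suffices to manufacture from this an atlas of the principal $G$-bundle $P\to X$ with $H$-valued transition functions; then (\ref{aaq1}) delivers the required atlas of $Y\to X$, whose transition functions act on the typical fibre $W=(G\times V)/H$ by the law (\ref{iik}) restricted to $H$.

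First I would pull the sections $z_\iota$ back along the Higgs field. Choose an open cover $\{U_\al\}$ of $X$ and indices $\iota(\al)$ with $h(U_\al)\subset U^\Si_{\iota(\al)}$, which is possible because $\{h^{-1}(U^\Si_\iota)\}$ is an open cover of $X$, and set $z^h_\al:=z_{\iota(\al)}\circ h|_{U_\al}$. The composite-bundle identity $\pi_P=\pi_{\Si X}\circ\pi_{P\Si}$ from (\ref{b3223a}), together with $\pi_{P\Si}\circ z_{\iota(\al)}=\id$ and $\pi_{\Si X}\circ h=\id_X$, gives $\pi_P\circ z^h_\al=\id_{U_\al}$, so the $z^h_\al$ are local sections of $P\to X$; and (\ref{b1.202}) for the $z_\iota$ yields, on overlaps, $z^h_\bt=z^h_\al\,(\vr_{\iota(\al)\iota(\bt)}\circ h)$. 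Hence by (\ref{vcv}) the family $\{(U_\al,z^h_\al)\}$ is a bundle atlas of $P\to X$ with transition functions $\vr^h_{\al\bt}=\vr_{\iota(\al)\iota(\bt)}\circ h$, which take values in $H$; transporting it through (\ref{aaq1}) produces the desired atlas of $Y\to X$ with $H$-valued transition functions. One recognizes $\{(U_\al,z^h_\al)\}$ as an atlas of the reduced $H$-subbundle $P^h=\pi_{P\Si}^{-1}(h(X))$ of Corollary \ref{5101c}, which makes the construction natural. This step is routine bookkeeping with the correspondences (\ref{vcv}) and (\ref{aaq1}).

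For the failure of the converse, the point to articulate is that an atlas of $Y\to X$, even one with $H$-valued transition functions, trivializes $Y$ only over opens of the form $U_\al\times W$ with $W=(G\times V)/H\to G/H$ the typical fibre (\ref{wes}), whereas an atlas of $Y\to\Si$ must trivialize the $V$-bundle $Y\to\Si$ over opens of $\Si$. By Theorem \ref{redsub} such an atlas of $Y\to X$ is the atlas of a reduced subbundle $P^{h'}=\pi_{P\Si}^{-1}(h'(X))$ for some Higgs field $h'$ which in general differs from $h$ and whose image is a proper submanifold of $\Si$; it therefore encodes no trivialization of $Y$ over the rest of $\Si$ and cannot be promoted to an atlas of $Y\to\Si$. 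The cleanest explicit obstruction: take $\Si\to X$ — equivalently $G\to G/H$, cf.\ Corollary \ref{52a3} — nontrivial, so that the typical fibre $W\to G/H$ is a nontrivial bundle; reversing the forward construction would amount to trivializing $W\to G/H$ fibrewise, which is then impossible. Isolating this genuine obstruction, as opposed to the formal manipulations of the first two paragraphs, is the only real content of the converse assertion and the main thing to get right.
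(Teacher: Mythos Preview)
Your forward direction is exactly the paper's argument: pull back the local sections $z_\iota$ of $P_\Si\to\Si$ along $h$ to obtain an atlas $\{z_\iota\circ h\}$ of the reduced subbundle $P^h$, which by Remark~\ref{rttf} is an atlas of $P\to X$ with $H$-valued transition functions; you are just more explicit about refining the cover. For the converse the paper gives no argument at all, so your explanation---that an atlas of $P\to X$ with $H$-valued transition functions lives over some $h'(X)\subset\Si$ and carries no data over the rest of $\Si$---is a genuine addition; only your final sentence about trivializing $W\to G/H$ is loosely phrased and not really needed, since the preceding observation already suffices.
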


\begin{proof} Any atlas $\Psi_{Y\Si}$ of a $P_\Si$-associated bundle $Y\to \Si$
is defined by an atlas
\mar{aaq0}\beq
\Psi_{P\Si}=\{(U_{\Si \iota},z_\iota), \vr_{\iota\kappa}\}
\label{aaq0}
\eeq
of the principal $H$-bundle $P_\Si$ (\ref{b3194}). Given a section
$h$ of $\Si\to X$, we have an atlas
\mar{aaq}\beq
\Psi^h=\{(\pi_{P\Si}(U_{\Si \iota}),z_\iota\circ h),
\vr_{\iota\kappa}\circ h\} \label{aaq}
\eeq
of the reduced principal $H$-bundle $P^h$ which also is an atlas
of $P$ with $H$-valued transition functions (Remark \ref{rttf}).
\end{proof}

Given an atlas $\Psi_P$ of $P$, the quotient bundle $\Si\to X$
(\ref{b3193}) is endowed with the associated atlas (\ref{aaq1}).
With this atlas and an atlas $\Psi_{Y\Si}$ of $Y\to \Si$, the
composite bundle $Y\to X$ (\ref{b3225}) is provided with adapted
bundle coordinates $(x^\la,\si^m,y^i)$ where $(\si^m)$ are fibre
coordinates on $\Si\to X$ and $(y^i)$ are those on $Y\to\Si$.

\begin{theorem} \mar{LL3} \label{LL3}
Any principal automorphism of a principal $G$-bundle $P\to X$ is
$G$-equivariant and,  consequently, $H$-equivariant. Thus, it is a
principal automorphism of a principal $H$-bundle $P\to\Si$ and,
consequently, it yields an automorphism of the $P_\Si$-associated
bundle $Y$ (\ref{b3225}).
\end{theorem}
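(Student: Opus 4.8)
The plan is to unwind the definitions; no substantial new idea is needed, only careful bookkeeping. Let $(\Phi_P,f)$ be a principal automorphism of the principal $G$-bundle $P\to X$. By the very definition (\ref{55ff1}) of a principal automorphism, $\Phi_P(pg)=\Phi_P(p)g$ for all $g\in G$ and $p\in P$; this is exactly $G$-equivariance. Since $H$ is a subgroup of $G$, restricting this identity to elements $\rho\in H$ gives $\Phi_P(p\rho)=\Phi_P(p)\rho$, so $\Phi_P$ is $H$-equivariant as well. This is the first assertion.

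Next I would check that $\Phi_P$ is a principal automorphism of the principal $H$-bundle $P_\Si=P\to P/H=\Si$ (\ref{b3194}). Because $\Phi_P$ is $H$-equivariant, it carries $H$-orbits in $P$ to $H$-orbits, hence descends to a map $\Phi_\Si:\Si\to\Si$ on the quotient $\Si=P/H$; this $\Phi_\Si$ is a diffeomorphism, its inverse being the map induced by $\Phi_P^{-1}$, and it covers $f$, i.e. $\pi_{\Si X}\circ\Phi_\Si=f\circ\pi_{\Si X}$ and $\pi_{P\Si}\circ\Phi_P=\Phi_\Si\circ\pi_{P\Si}$. Together with the $H$-equivariance established above, this says precisely that $(\Phi_P,\Phi_\Si)$ is a principal automorphism of $P_\Si\to\Si$ in the sense of Remark \ref{52r1'}, taken with the identity homomorphism $\id_H$ of the structure group $H$.

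Finally I would apply the general associated-bundle construction (\ref{024}), now with structure group $H$ in place of $G$, to the $P_\Si$-associated bundle $Y=(P\times V)/H$ (\ref{bnn}). Define $\Phi_Y:(p,v)/H\mapsto(\Phi_P(p),v)/H$. This is well defined: if $(p,v)$ and $(p\rho,\rho^{-1}v)$ represent the same element of $Y$ for some $\rho\in H$, then $H$-equivariance gives $(\Phi_P(p\rho),\rho^{-1}v)=(\Phi_P(p)\rho,\rho^{-1}v)$, which represents the same element as $(\Phi_P(p),v)$. The map $\Phi_Y$ is fibrewise a linear (resp. smooth) isomorphism, covers $\Phi_\Si$ on $\Si$ and $f$ on $X$, and has inverse induced by $\Phi_P^{-1}$; hence it is an automorphism of the composite bundle $Y\to\Si\to X$ (\ref{b3225}).

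There is essentially no deep obstacle here. The only point that requires genuine care is the middle step: verifying that the map $\Phi_\Si$ induced on $\Si=P/H$ is well defined and smooth, and keeping straight that "principal automorphism of $P\to\Si$" is meant with respect to the structure group $H$ and the identity group homomorphism, not with respect to $G$. Everything else follows formally from $G$-equivariance combined with the inclusion $H\subset G$ and the already-recorded functoriality (\ref{024}) of the associated-bundle construction.
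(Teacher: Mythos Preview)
Your proposal is correct and follows exactly the logical chain that the theorem statement itself outlines; the paper in fact gives no separate proof of this theorem, treating it as immediate from the definitions. Your careful verification that $\Phi_P$ descends to a well-defined diffeomorphism $\Phi_\Si$ of $\Si=P/H$ and that the associated-bundle formula (\ref{024}) applies with structure group $H$ is precisely the content that the paper leaves implicit.
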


The converse is not true. For instance, a vertical principal
automorphism of $P\to\Si$ is never a principal automorphism of
$P\to X$.

Theorems \ref{LL1} -- \ref{LL3} enables  one to describe matter
fields with an exact symmetry group $H\subset G$ in the framework
of gauge theory on a $G$-principal bundle $P\to X$ if its
structure group $G$ is reducible to $H$.

\subsection{Lorentz reduced structures in gravitation theory}

As was mentioned in Section 1, gravitation theory based on
Relativity and Equivalence Principles is formulated as gauge
theory on natural bundles (Remark \ref{ss45}) over a world
manifold whose structure group $GL_4$ (\ref{gl4}) is reduced to a
Lorentz subgroup $SO(1,3)$ \cite{book09,sard02,sard11}.

Natural bundles are exemplified by tensor bundles $T$ and, in
particular, the tangent bundle $TX$ over $X$. Given a
diffeomorphism $f$ of $X$, the tangent morphism $Tf:TX\to TX$  is
a general covariant transformation of $TX$. Tensor bundles over an
oriented world manifold possess the structure group $GL_4$
(\ref{gl4}). An associated principal bundle is the above mentioned
frame bundle $LX$ (Remark \ref{ss45}). Its (local) sections are
called frame fields. Given the holonomic atlas of the tangent
bundle $TX$, every element $\{H_a\}$ of a frame bundle $LX$ takes
a form $H_a=H^\m_a\dr_\m$, where $H^\m_a$ is a matrix of the
natural representation of a group $GL_4$ in $\mathbb R^4$. These
matrices constitute bundle coordinates
\be
(x^\la, H^\m_a), \qquad H'^\m_a=\frac{\dr x'^\m}{\dr
x^\la}H^\la_a,
\ee
on $LX$ associated to its holonomic atlas
\mar{tty}\beq
\Psi_\mathrm{T}=\{(U_\iota, z_\iota=\{\dr_\m\})\}, \label{tty}
\eeq
given by local frame fields $z_\iota=\{\dr_\m\}$. With respect to
these coordinates, the canonical right action of $GL_4$ on $LX$
reads $GL_4\ni g: H^\m_a\to H^\m_bg^b{}_a$.

A frame bundle $LX$ is equipped with a canonical $\mathbb
R^4$-valued one-form
\mar{b3133'}\beq
\thh_{LX} = H^a_\m dx^\m\ot t_a,\label{b3133'}
\eeq
where $\{t_a\}$ is a fixed basis for $\mathbb R^4$ and $H^a_\m$ is
the inverse matrix of $H^\m_a$.

A frame bundle $LX\to X$ is natural. Any diffeomorphism $f$ of $X$
gives rise to a principal automorphism
\mar{025}\beq
\wt f: (x^\la, H^\la_a)\to (f^\la(x),\dr_\m f^\la H^\m_a)
\label{025}
\eeq
of $LX$ which is its general covariant transformation.

Let $Y=(LX\times V)/GL_4$ be an $LX$-associated bundle with a
typical fibre $V$. It admits a lift of any diffeomorphism $f$ of
its base to an automorphism
\be
f_Y(Y)=(\wt f(LX)\times V)/GL_4
\ee
of $Y$ associated with the principal automorphism $\wt f$
(\ref{025}) of a frame bundle $LX$. Thus, all bundles associated
to a frame bundle $LX$ are natural bundles.

As was mentioned above, gravitation theory on a world manifold $X$
is a gauge theory with spontaneous symmetry breaking described by
Lorentz reduced structures of a frame bundle $LX$. We deal with
the following Lorentz and proper Lorentz reduced structures.

By a Lorentz reduced structure is meant a reduced
$SO(1,3)$-subbundle $L^gX$, called the Lorentz subbundle, of a
frame bundle $LX$.

Let L$=SO^0(1,3)$ be a proper Lorentz group. Recall that
$SO(1,3)=\mathbb Z_2\times$L. A proper Lorentz reduced structure
is defined as a reduced L-subbundle $L^hX$ of $LX$.

\begin{theorem} \label{ss48} \mar{ss48}
If a world manifold $X$ is simply connected, there is one-to-ne
correspondence between the Lorentz and proper Lorentz reduced
structures.
\end{theorem}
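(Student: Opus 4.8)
The plan is to pass from subbundles to sections via Theorem \ref{redsub} and then analyse the double covering relating the two quotient bundles. Put $\Si_1=LX/SO(1,3)\to X$ and $\Si_2=LX/L\to X$ with $L=SO^0(1,3)$. Since $SO(1,3)=\mathbb Z_2\times L$ is a direct product, $L$ is a closed normal subgroup of $SO(1,3)$ of index two, so the canonical projection $p:\Si_2\to\Si_1$ is an $LX$-associated fibre bundle with typical fibre $SO(1,3)/L\cong\mathbb Z_2$; in fact the residual right action of $SO(1,3)/L$ on $\Si_2=LX/L$ is free with quotient $\Si_1$, so $p:\Si_2\to\Si_1$ is a principal $\mathbb Z_2$-bundle, i.e.\ a two-fold covering. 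By Theorem \ref{redsub}, Lorentz reduced structures $L^gX\subset LX$ correspond bijectively to global sections $g$ of $\Si_1\to X$ and proper Lorentz reduced structures $L^hX\subset LX$ to global sections $h$ of $\Si_2\to X$; moreover $\pi_{P\Si_1}=p\circ\pi_{P\Si_2}$, where $\pi_{P\Si_i}$ denotes the relevant projection of $LX$ onto $\Si_i$.

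First I would record the forgetful assignment. A proper Lorentz subbundle $L^hX$ generates the $SO(1,3)$-subbundle $L^hX\cdot SO(1,3)$, which is a Lorentz reduced structure; in terms of sections this is just $h\mapsto p\circ h$, post-composition with the covering $p$, and $L^hX\subset L^{p\circ h}X$. The core step is the converse. Given a section $g$ of $\Si_1\to X$, the pull-back $g^*\Si_2\to X$ is a two-fold covering of $X$. Since $X$ is simply connected every covering of $X$ is trivial (two-fold coverings are classified by $H^1(X;\mathbb Z_2)=\mathrm{Hom}(\pi_1(X),\mathbb Z_2)=0$), so $g^*\Si_2\to X$ admits a global section; composing it with the natural map $g^*\Si_2\to\Si_2$ yields a section $h$ of $\Si_2\to X$ with $p\circ h=g$, hence a proper Lorentz reduced substructure $L^hX$ of $L^gX$. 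Thus every Lorentz reduced structure is $L^hX\cdot SO(1,3)$ for some proper Lorentz reduced structure $L^hX$ (and if no Lorentz reduced structure exists, then by the forgetful assignment no proper one does either, so both families are empty).

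The point requiring care is uniqueness of the lift, i.e.\ injectivity. Two sections $h,h'$ of $\Si_2\to X$ with $p\circ h=p\circ h'=g$ are two sections of the trivial two-fold covering $g^*\Si_2\cong X\times\mathbb Z_2$, and these differ precisely by the deck transformation, i.e.\ by the residual free $\mathbb Z_2$-action on $\Si_2$; equivalently $L^{h'}X=L^hX\cdot(-e)$, where $-e\in SO(1,3)$ is the central element generating $SO(1,3)/L$. So the fibre of the forgetful map over $L^gX$ is the single orbit $\{L^hX,\ L^hX\cdot(-e)\}$ of the central $\mathbb Z_2$, and the asserted one-to-one correspondence is obtained by identifying $L^hX$ with its right translate $L^hX\cdot(-e)$ under this central element, which is a canonical isomorphism of $L$-subbundles over $X$. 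I expect this identification — together with checking that $g^*\Si_2$ has no sections other than these two — to be the only genuinely delicate part; everything else is a direct application of Theorem \ref{redsub} and the triviality of coverings over a simply connected base.
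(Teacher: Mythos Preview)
The paper states this theorem without proof, so there is nothing to compare your argument against directly. Your approach via Theorem \ref{redsub} and the double covering $\Si_\mathrm{T}=LX/\mathrm{L}\to\Si_\mathrm{PR}=LX/SO(1,3)$ is exactly the natural one, and the paper itself records this covering immediately after the theorem (the map $\zeta:\Si_\mathrm{T}\to\Si_\mathrm{PR}$). Your use of simple connectivity to trivialize the pulled-back $\mathbb Z_2$-cover $g^*\Si_\mathrm{T}\to X$ is correct and is the essential analytic input.

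You have also put your finger on the one genuine subtlety: the map from proper Lorentz to Lorentz reduced structures is literally two-to-one, since $L^hX$ and $L^hX\cdot(-e)$ are distinct $\mathrm{L}$-subbundles of $LX$ sitting over the same $SO(1,3)$-subbundle. The theorem as stated (``one-to-one correspondence'') is therefore imprecise unless one reads it as a correspondence up to the central $\mathbb Z_2$, or as the statement that each Lorentz reduced structure admits a proper Lorentz refinement and conversely. Your handling of this --- exhibiting the two lifts as the two sections of the trivial cover and identifying them via the deck transformation --- is the right way to make the assertion precise; I would only suggest you say explicitly that this is how the phrase ``one-to-one'' must be interpreted, rather than leaving it as an identification you ``expect'' to be delicate. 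It is not delicate: it is forced, and it is the content of the theorem.
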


One can show that different proper Lorentz subbundles $L^hX$ and
$L^{h'}X$ of a frame bundle $LX$ are isomorphic as principal
L-bundles. This means that there exists a vertical automorphism of
a frame bundle $LX$ which sends $L^hX$ onto $L^{h'}X$
\cite{book09,higgs}. If a world manifold $X$ is simply connected,
the similar property of Lorentz subbundles also is true in
accordance with Theorem \ref{ss48}.

\begin{remark} \label{ss49} \mar{ss49}
There is the well-known topological obstruction to the existence
of a Lorentz structure on a world manifold $X$. All non-compact
manifolds and compact manifolds whose Euler characteristic equals
zero admit a Lorentz reduced structure \cite{dods}.
\end{remark}

By virtue of Theorem \ref{redsub}, there is one-to-one
correspondence between the principal L-subbundles $L^hX$ of a
frame bundle $LX$ and the global sections $h$ of a quotient bundle
\mar{5.15}\beq
\Si_\mathrm{T}=LX/\rL,  \label{5.15}
\eeq
called the tetrad bundle. This is an $LX$-associated bundle with
the typical fibre $GL_4/$L. Its global sections are called the
tetrad fields. The fibre bundle (\ref{5.15}) is a two-fold
covering $\zeta: \Si_{\rm T}\to \Si_{\rm PR}$ of the quotient
bundle
\mar{b3203}\beq
\Si_\mathrm{PR}=LX/SO(1,3) \label{b3203}.
\eeq
whose global sections $g$ are pseudo-Riemannian metrics of
signature $(+,---)$ on a world manifold/ It is called the metric
bundle.

In particular, every tetrad field $h$ defines a unique
pseudo-Riemannian metric $g=\zeta\circ h$. For the sake of
convenience, one usually identifies a metric bundle with an open
subbundle of the tensor bundle $\Si_{\rm PR}\subset \op\vee^2 TX$.
Therefore, the metric bundle $\Si_{\rm PR}$ (\ref{b3203}) can be
equipped with bundle coordinates $(x^\la, \si^{\m\nu})$.

Every tetrad field $h$ defines an associated Lorentz bundle atlas
\mar{lat}\beq
\Psi^h=\{(U_\iota,z_\iota^h=\{h_a\})\} \label{lat}
\eeq
of a frame bundle $LX$ such that the corresponding local sections
$z_\iota^h$ of $LX$ take their values into a proper Lorentz
subbundle $L^hX$ and the transition functions of $\Psi^h$
(\ref{lat}) between the frames $\{h_a\}$ are L-valued. The frames
(\ref{lat}):
\mar{b3211a}\beq
\{h_a =h_a^\m(x)\dr_\m\}, \qquad h_a^\m=H_a^\m\circ z_\iota^h,
\qquad x\in U_\iota, \label{b3211a}
\eeq
are called the tetrad frames.

Given a Lorentz bundle atlas $\Psi^h$, the pull-back
\mar{b3211}\beq
h=h^a\ot t_a=z_\iota^{h*}\thh_{LX}=h_\la^a(x) dx^\la\ot t_a
\label{b3211}
\eeq
of the canonical form $\thh_{LX}$ (\ref{b3133'}) by a local
section $z_\iota^h$ is called the (local) tetrad form. It
determines tetrad coframes
\mar{b3211'}\beq
\{h^a =h^a_\m(x)dx^\m\}, \qquad x\in U_\iota, \label{b3211'}
\eeq
in the cotangent bundle $T^*X$. They are the dual of the tetrad
frames (\ref{b3211a}). The coefficients $h_a^\m$ and $h^a_\m$ of
the tetrad frames (\ref{b3211a}) and coframes (\ref{b3211'}) are
called the tetrad functions. They are transition functions between
the holonomic atlas $\Psi_\mathrm{T}$ (\ref{tty}) and the Lorentz
atlas $\Psi^h$ (\ref{lat}) of a frame bundle $LX$.

With respect to the Lorentz atlas $\Psi^h$ (\ref{lat}), a tetrad
field $h$ can be represented by the $\mathbb R^4$-valued tetrad
form (\ref{b3211}). Relative to this atlas, the corresponding
pseudo-Riemannian metric $g=\zeta\circ h$ takes the well-known
form
\mar{mos175}\beq
g=\eta(h\ot h)=\eta_{ab}h^a\ot h^b, \qquad
g_{\m\nu}=h_\m^ah_\nu^b\eta_{ab}, \label{mos175}
\eeq
where $\eta={\rm diag}(1,-1,-1,-1)$ is the Minkowski metric in
$\mathbb R^4$ written with respect to its fixed basis $\{t_a\}$.
It is readily observed that the tetrad coframes $\{h^a\}$
(\ref{b3211'}) and the tetrad frames $\{h_a\}$ (\ref{b3211a}) are
orthornormal relative to the pseudo-Riemannian metric
(\ref{mos175}), namely:
\be
g^{\m\nu}h^a_\mu h^b_\nu=\eta^{ab}, \qquad g_{\m\nu}h_a^\mu
h_b^\nu=\eta_{ab}.
\ee

\section{Spinor structures}

As was mentioned above, we aim to describe spinor bundles as
subbundles of a fibre bundle in complex Clifford algebras.

\subsection{Fibre bundles in Clifford algebras}

One usually consider fibre bundles in Clifford algebras whose
structure group is a group of automorphisms of these algebras
\cite{book09,law} (Remark \ref{sp103}). A problem is that this
group fails to preserve spinor subspaces of a Clifford algebra
and, thus, it can not be a structure group of spinor bundles.
Therefore, we define fibre bundles in Clifford algebras whose
structure group is a group of invertable elements of a Clifford
algebra which acts on this algebra by left multiplications.
Certainly, it preserves minimal left ideals of this algebra and,
consequently, it is a structure group of spinor bundles. In a case
in question, this is a matrix group.

Let $\mathbb C\cC(n)$ be a complex Clifford algebra modelled over
an even dimensional complex space $\mathbb C^n$ (Definition
\ref{ss19}). It is isomorphic to a ring $\mathrm{Mat}(2^{n/2},
\mathbb C)$ of complex $(2^{n/2}\times 2^{n/2})$-matrices (Theorem
\ref{k8}). Its invertible elements constitute a general linear
group $GL(2^{n/2}, \mathbb C)$ whose adjoint representation in
$\mathbb C\cC(n)$ yields the projective linear group $PGL(2^{n/2},
\mathbb C)$ (\ref{k25}) of automorphisms of $\mathbb C\cC(n)$
(Theorem \ref{k21}).

\begin{definition} \label{ps101} \mar{ps101}
Given a smooth manifold $X$, let us consider a principal bundle
$P\to X$ with a structure group $GL(2^{n/2}, \mathbb C)$. A fibre
bundle in complex Clifford algebras $\mathbb C\cC(n)$ is defined
to be the $P$-associated bundle (\ref{b1.230}):
\mar{sp100}\beq
\lC= (P\times \mathrm{Mat}(2^{n/2}, \mathbb C))/GL(2^{n/2},
\mathbb C)\to X \label{sp100}
\eeq
with a typical fibre
\mar{sp102}\beq
\mathbb C\cC(n)= \mathrm{Mat}(2^{n/2}, \mathbb C) \label{sp102}
\eeq
which carries out the left-regular representation of a group
$GL(2^{n/2}, \mathbb C)$.
\end{definition}

Owing to the canonical inclusion $GL(2^{n/2}, \mathbb C)\to
\mathrm{Mat}(2^{n/2}, \mathbb C)$, a principal $GL(2^{n/2},
\mathbb C)$-bundle $P$ is a subbundle $P\subset \lC$ of the
Clifford algebra bundle $\lC$ (\ref{sp100}). Herewith, the
canonical right action of a structure group $GL(2^{n/2}, \mathbb
C)$ on a principal bundle $P$ is extended to the fibrewise action
of $GL(2^{n/2}, \mathbb C)$ on the Clifford algebra bundle $\lC$
(\ref{sp100}) by right multiplications. This action is globally
defined because it is commutative with transition functions of
$\lC$ acting on its typical fibre $\mathrm{Mat}(2^{n/2}, \mathbb
C)$ on the left.

\begin{remark} \label{sp103} \mar{sp103}
As was mentioned above, one usually considers a fibre bundle in
Clifford algebras $\mathrm{Mat}(2^{n/2})$ whose structure group is
the projective linear group $PGL(2^{n/2}, \mathbb C)$ (\ref{k25})
of automorphisms of $\mathbb C\cC(n)$. This also is a
$P$-associated bundle
\mar{ps104}\beq
\cA\lC=(P\times \mathbb C\cC(n))/GL(2^{n/2}, \mathbb C)\to X
\label{sp104}
\eeq
where $GL(2^{n/2}, \mathbb C)$ acts on $\mathbb C\cC(n)$ by the
adjoint representation. In particular, a certain subbundle of
$\cA\lC$ (\ref{sp104}) is the group bundle $P^G$ (\ref{b3130})
(Remark \ref{56e2}).
\end{remark}

Let $\Psi(n)$ (Definition \ref{ss34}) be a spinor space of a
complex Clifford algebra $\mathbb C\cC(n)$. Being a minimal left
ideal of $\mathbb C\cC(n)$ (Definition \ref{ss33}), it is a
subspace $\Psi(n)$ of $\mathbb C\cC(n)$ (Theorem \ref{k53}) which
inherits the left-regular representation of a group $GL(2^{n/2},
\mathbb C)$ in $\mathbb C\cC(n)$.

\begin{definition} \label{sp106} \mar{sp106}
Given a principal $GL(2^{n/2}, \mathbb C)$-bundle $P$, a spinor
bundle is defined as a $P$-associated bundle
\mar{sp107}\beq
S=(P\times \Psi(n))/GL(2^{n/2}, \mathbb C)\to X  \label{sp107}
\eeq
with a typical fibre $\Psi(n)=\mathbb C^{2^{n/2}}$ and a structure
group $GL(2^{n/2}, \mathbb C)$ which acts on $\Psi(n)$ by left
multiplications  that is equivalent to the natural matrix
representation of $GL(2^{n/2},\mathbb C)$ in $\mathbb C^{2^{n/2}}$
(Corollary \ref{k54}).
\end{definition}

Obviously, the spinor bundle $S$ (\ref{sp107}) is a subbundle of
the Clifford algebra bundle $\lC$ (\ref{sp100}). However, $S$
(\ref{sp107}) need not  be a subbundle of the fibre bundle
$\cA\lC$ (\ref{sp104}) in Clifford algebras because a spinor space
$\Psi(n)$ is not stable under automorphisms of a Clifford algebra
$\mathbb C\cC(n)$.

At the same time, given the spinor representation  (\ref{sp110}),
of a complex Clifford algebra, there is a fibrewise morphism
\mar{sp111}\ben
&& \g: \cA\lC\op\times_X S\ar_X S,  \label{sp111}\\
&& (P\times (\mathbb C\cC(n)\times \Psi(n)))/GL(2^{n/2}, \mathbb C)
\to \nonumber \\
&& \qquad  (P\times \g(\mathbb C\cC(n)\times \Psi(n)))/GL(2^{n/2},
\mathbb C), \nonumber
\een
of the $P$-associated fibre bundles $\cA\lC$ (\ref{sp104}) and $S$
(\ref{sp107}) with a structure group $GL(2^{n/2}, \mathbb C)$.

\begin{remark} \label{sp210} \mar{sp210} Let $X$ be a smooth real
manifold of dimension $2^{n/2}$, $n=2,4,\ldots$. Let $TX$ be the
tangent bundle over $X$ and $LX$ the associated principal frame
bundle. Their structure group is $GL(2^{n/2}, \mathbb R)$. There
is the canonical group monomorphism
\mar{sp211}\beq
GL(2^{n/2}, \mathbb R) \to GL(2^{n/2}, \mathbb C). \label{sp211}
\eeq
Let us consider a trivial complex line bundle $X\times\mathbb C$
over $X$ and a complexification
\mar{sp212}\beq
{\mathbb C}TX=(X\times \mathbb C)\op\ot_X TX=\mathbb C\op\ot_X TX
\label{sp212}
\eeq
of $TX$. This is a fibre bundle with a structure group
$GL(2^{n/2}, \mathbb C)$ which is reducible to its subgroup
$GL(2^{n/2}, \mathbb R)$ (\ref{sp211}). Let $P\to X$ be an
associated principal $GL(2^{n/2}, \mathbb C)$-bundle. There is the
corresponding mnomorphism of principal bundles
\mar{sp213}\beq
LX\ar_X P. \label{sp213}
\eeq
Let $\lC$ be the $P$-associated bundle (\ref{sp100}) in complex
Clifford algebras $\mathbb C\cC(n)$. Let $S$ (\ref{sp107}) be a
spinor subbundle of $\lC$ whose typical fibre $\mathbb
C^{2^{n/2}}$ carries out the natural matrix representation of
$GL(2^{n/2},\mathbb C)$. Due to the monomorphism (\ref{sp213}),
its structure group is reducible to $GL(2^{n/2},\mathbb R)$, and
$S$ is a $LX$-associated bundle isomorphic to the complex tangent
bundle ${\mathbb C}TX$ (\ref{sp212}). Thus, a complexification of
the tangent bundle of a smooth real manifold of dimension
$2^{n/2}$, $n=2,4,\ldots$ can be represented as a spinor bundle.
Moreover, general covariant transformations of $LX$ gives rise to
principal automorphisms of a principal bundle $P$ (Remark
\ref{sp220}). Therefore, a principal bundle $P$, a $P$-associated
Clifford algebra bundle $\lC$ and its spinor subbundle $S$ are
natural bundles (Remark \ref{ss45}).
\end{remark}

It should be emphasized that, though there is the ring
monomorphism $\cC(m,n-m)\to \mathbb C\cC(n)$ (\ref{sp200}), the
Clifford algebra bundle $\lC$ (\ref{sp100}) need not contains a
subbundle in real Clifford algebras $\cC(m,n-m)$, unless a
structure group $GL(2^{n/2}, \mathbb C)$ of $\lC$ is reducible to
a group $\cG\cC(m,n-m)$. This problem can be solved as follows.

Let $X$ be a smooth real manifold of even dimension $n$. Let $TX$
be the tangent bundle over $X$ and $LX$ the associated principal
frame bundle. Let us assume that their structure group is $GL(n,
\mathbb R)$ is reducible to a pseudo-ortohogonal subgroup
$O(m,n-m)$. In accordance with Theorem \ref{comp}, a structure
group $GL(n, \mathbb R)$ always is reducible to a subgroup
$O(n,\mathbb R)$. There is the exact sequence of groups
(\ref{106}):
\mar{sp121}\beq
 e\to \mathbb Z_2\longrightarrow
\mathrm{Pin}(m,n-m)\op\longrightarrow^\zeta O(m,n-m)\to e.
\label{sp121}
\eeq
A problem is that this exact sequence is not split, i.e., there is
no monomorphism $\kappa: O(m,n-m)\to \mathrm{Pin}(m,n-m)$ so that
$\zeta\circ\kappa=\id$ (Example \ref{ss6} and Remark \ref{sp225}).

In this case, we say that a principal $\mathrm{Pin}(m,n-m)$-bundle
$\wt P^h\to X$ is an extension of a principal $O(m,n-m)$-bundle
$P^h\to X$ if there is an epimorphism of principal bundles
\mar{sp223}\beq
\wt P^h\to P^h \label{sp223}
\eeq
(Remark \ref{52r1'}). Such an extension need not exist. The
following is a corollary of the well-known theorem
\cite{book09,greub,law}.

\begin{theorem} \label{sp222} \mar{sp222}
The topological obstruction to that a principal $O(m,n-m)$-bundle
$P^h\to X$ lifts to a principal $\mathrm{Pin}(m,n-m)$-bundle $\wt
P^h\to X$ is given by the \v Cech cohomology group $H^2(X;\mathbb
Z_2)$ of $X$. Namely, a principal bundle $P$ defines an element of
$H^2(X;\mathbb Z_2)$ which must be zero so that $P^h\to X$ can
give rise to $\wt P^h\to X$. Inequivalent lifts of $P^h\to X$ to
principal $\mathrm{Pin}(m,n-m)$-bundles are classified by elements
of the \v Cech cohomology group $H^1(X;\mathbb Z_2)$.
\end{theorem}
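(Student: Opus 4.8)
The plan is to run the classical \v Cech-cohomological obstruction argument for lifting the structure group of a principal bundle through the central extension in the exact sequence (\ref{sp121}), exploiting that $\mathbb{Z}_2$ is central in $\mathrm{Pin}(m,n-m)$ and that $\zeta$ is a two-fold covering, hence a local diffeomorphism onto each component of $O(m,n-m)$.

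First I would choose an open cover $\{U_\alpha\}$ of $X$ that trivializes $P^h$ and is at the same time a good cover (every nonempty finite intersection contractible). Then $P^h$ is recorded by smooth transition functions $\varrho_{\alpha\beta}:U_\alpha\cap U_\beta\to O(m,n-m)$ obeying $\varrho_{\alpha\beta}\varrho_{\beta\gamma}\varrho_{\gamma\alpha}=e$ on triple overlaps. Since $U_\alpha\cap U_\beta$ is contractible and $\zeta$ is a covering, each $\varrho_{\alpha\beta}$ lifts to a smooth $\tilde\varrho_{\alpha\beta}:U_\alpha\cap U_\beta\to\mathrm{Pin}(m,n-m)$ with $\zeta\circ\tilde\varrho_{\alpha\beta}=\varrho_{\alpha\beta}$, chosen so that $\tilde\varrho_{\beta\alpha}=\tilde\varrho_{\alpha\beta}^{-1}$. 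On triple overlaps the product $c_{\alpha\beta\gamma}:=\tilde\varrho_{\alpha\beta}\tilde\varrho_{\beta\gamma}\tilde\varrho_{\gamma\alpha}$ lies in $\ker\zeta=\mathbb{Z}_2$; using centrality of $\mathbb{Z}_2$, a direct computation on quadruple overlaps shows that $\{c_{\alpha\beta\gamma}\}$ is a \v Cech $2$-cocycle with coefficients in the constant sheaf $\mathbb{Z}_2$. Replacing each lift $\tilde\varrho_{\alpha\beta}$ by $\epsilon_{\alpha\beta}\tilde\varrho_{\alpha\beta}$, $\epsilon_{\alpha\beta}\in\mathbb{Z}_2$, alters $\{c_{\alpha\beta\gamma}\}$ by the coboundary $\epsilon_{\alpha\beta}\epsilon_{\beta\gamma}\epsilon_{\gamma\alpha}$; hence the class $w(P^h):=[\,c\,]\in H^2(X;\mathbb{Z}_2)$ is independent of the lifts, and a refinement argument shows it is independent of the good cover as well. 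This is the element of $H^2(X;\mathbb{Z}_2)$ attached to $P^h$.

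Next I would establish that $w(P^h)=0$ is equivalent to the existence of an extension $\wt P^h\to P^h$ (\ref{sp223}). If a $\mathrm{Pin}(m,n-m)$-valued $1$-cocycle $\{\tilde\varrho_{\alpha\beta}\}$ projecting onto $\{\varrho_{\alpha\beta}\}$ exists, then its triple products equal $e$ and $w(P^h)=0$. Conversely, if $w(P^h)=0$, write $c_{\alpha\beta\gamma}=\epsilon_{\alpha\beta}\epsilon_{\beta\gamma}\epsilon_{\gamma\alpha}$ and put $\tilde\varrho'_{\alpha\beta}:=\epsilon_{\alpha\beta}^{-1}\tilde\varrho_{\alpha\beta}$; then $\{\tilde\varrho'_{\alpha\beta}\}$ is a genuine $\mathrm{Pin}(m,n-m)$-valued $1$-cocycle whose image under $\zeta$ is $\{\varrho_{\alpha\beta}\}$, so it defines a principal $\mathrm{Pin}(m,n-m)$-bundle $\wt P^h$ together with the fibrewise $\zeta$-induced epimorphism onto $P^h$. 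For the classification: two such lifts, compared over a common refinement, differ by the ratios $\delta_{\alpha\beta}:=\tilde\varrho''_{\alpha\beta}\tilde\varrho_{\alpha\beta}^{-1}$, which land in the central $\mathbb{Z}_2$ and form a $1$-cocycle; the lifts are isomorphic as principal $\mathrm{Pin}(m,n-m)$-bundles over $X$ precisely when $[\delta]\in H^1(X;\mathbb{Z}_2)$ vanishes, and every class in $H^1(X;\mathbb{Z}_2)$ is realized by twisting a fixed lift. Hence, when nonempty, the set of inequivalent lifts is a torsor over $H^1(X;\mathbb{Z}_2)$.

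The step I expect to be the main obstacle is making $w(P^h)$ rigorously well defined: one must check the cocycle identity for $\{c_{\alpha\beta\gamma}\}$ on quadruple overlaps (this is exactly where the centrality of $\ker\zeta$ is indispensable), verify invariance under the choice of local lifts, and, least mechanically, invariance under passage to a different good cover, which requires comparing the two constructions over a common refinement and invoking that refinement induces an isomorphism on \v Cech cohomology with $\mathbb{Z}_2$ coefficients. Everything else is a faithful transcription of the general structure-group lifting theorem (cf. \cite{book09,greub,law}) to the present $\mathrm{Pin}$--$O$ situation; no geometric input is needed beyond the exactness of (\ref{sp121}) and the covering property of $\zeta$.
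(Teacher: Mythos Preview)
The paper does not actually give a proof of this theorem: it is stated as ``a corollary of the well-known theorem'' with references to \cite{book09,greub,law}, and no proof environment follows. Your proposal is a correct and essentially complete sketch of the standard \v Cech obstruction argument for lifting a structure group through a central $\mathbb{Z}_2$-extension, which is precisely the content of the cited references (in particular Greub--Petry \cite{greub} and Lawson--Michelsohn \cite{law}); so you have supplied the proof that the paper merely invokes.
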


Let $L^hX$ be a reduced principal $O(m,n-m)$-subbundle of a frame
bundle. In this case, the topological obstruction in Theorem
\ref{sp222} to that this bundle $L^hX$ is extended to a principal
$\mathrm{Pin}(m,n-m)$-bundle $\wt L^hX$ is the second
Stiefel--Whitney class $w_2(X)\in H^2(X;\mathbb Z_2)$ of $X$
\cite{law}. Let us assume that a manifold $X$ is orientable, i.e.,
the \v Cech cohomology group $H^1(X;\mathbb Z_2)$ is trivial,  and
that the second Stiefel--Whitney class $w_2(X)\in H^2(X;\mathbb
Z_2)$ of $X$ also is trivial. Let (\ref{sp223}) be a desired
$\mathrm{Pin}(m,n-m)$-lift of a principal $O(m,n-m)$-bundle $P^h$.

Owing to the canonical monomorphism (\ref{sp200}) of Clifford
algebras, there is the group monomorphism $\mathrm{Pin}(m,n-m)\to
\cG\mathbb C\cC(n)$. Due to this monomorphism, there exists a
principal $\cG\mathbb C\cC(n)$-bundle $P$ whose reduced
$\mathrm{Pin}(m,n-m)$-subbundle is $\wt L^hX$, and whose structure
group $\cG\mathbb C\cC(n)$ thus is reducible to
$\mathrm{Pin}(m,n-m)$. Let
\mar{sp240}\beq
\lC^h\to X\label{sp240}
\eeq
be the $P$-associated Clifford algebra bundle (\ref{sp100}). Then
it contains a subbundle
\mar{sp241}\beq
\lC^h(m,n-m)\to X\label{sp241}
\eeq
in real Clifford algebras $\cC(m,n-m)$. This subbundle in turn
contains a subbundle of generating vector spaces which is
$L^hX$-associated and, thus, isomorphic to the tangent bundle $TX$
as a $L^hX$-associated bundle. The Clifford algebra bundle $\lC^h$
(\ref{sp240}) contains spinor subbundles $S^h\to X$ (\ref{sp107})
together with the representation morphisms (\ref{sp111}).

Of course, with a different reduced principal $O(m,n-m)$-subbundle
$L^{h'}X$ of $LX$, we come to a different Clifford algebra bundle
$\lC^{h'}$ (\ref{sp240}). Let us recall that, in accordance with
Theorem \ref{redsub}, there is one-to-one correspondence
(\ref{510f2}) between the reduced principal $O(m,n-m)$-subbundle
$L^hX$ of $LX$ and the global sections of the quotient bundle
\mar{sp226}\beq
\Si(m,n-m)=LX/O(m,n-m)\to X \label{sp226}
\eeq
which are pseudo-Riemannian metrics on $X$ of signature $(m,n-m)$.

\begin{remark} \label{sp225} \mar{sp225}
Let $X$ be a four-dimensional manifold. In this case, we have the
fibre bundle $\lC$ in Clifford algebras $\mathbb C\cC(4)$ in
Example \ref{sp210} and the Clifford algebra bundles $\lC^h$
(\ref{sp240}) in $\mathbb C\cC(4)$. A difference between them lies
in the fact that a structure group $GL(4,\mathbb C)$ of $\lC$ is
reducible to a subgroup $O(m,4-m)$, whereas the a structure group
of $\lC^h$ is reducible to $\mathrm{Pin}(m,4-m)$, but $O(m,4-m)$
is not a subgroup of $\mathrm{Pin}(m,4-m)$.
\end{remark}

A key point is that, given different sections $h$ and $h'$ of the
quotient bundle $\Si(m,n-m)$ (\ref{sp226}), the Clifford algebra
bundles $\lC^h$ and $\lC^{h'}$ need not be isomorphic as follows.
These fibre bundles are associated to principal
$\mathrm{Pin}(m,n-m)$-bundles $\wt L^hX$ and $\wt L^{h'}X$ which
are the two-fold covers (\ref{sp223}) of the reduced principal
$O(m,n-m)$-subbundles $L^hX$ and $L^{h'}X$ of a frame bundle $LX$,
respectively. These subbundles need not be isomorphic, and then
the principal bundles $\wt L^hX$, $\wt L^{h'}X$ and, consequently,
associated Clifford algebra bundles $\lC^h$, $\lC^{h'}$
(\ref{sp240}) are well. Moreover, let principal bundles $L^hX$ and
$L^{h'}X$ be isomorphic. For instance, in accordance with Theorem
\ref{iso0}, this is the case of an orthogonal group
$O(n,o)=O(n,\mathbb R)$. However, their covers $\wt L^hX$ and $\wt
L^{h'}X$ need not be isomorphic. Thus a Clifford algebra bundle
must be considered only in a pair with a certain pseudo-Riemannian
metric $h$.

\subsection{Composite bundles in Clifford algebras}

In order to describe a whole family of non-isomorphic Clifford
algebra bundles $\lC^h$, one can follow a construction of the
composite bundle (\ref{b3225}). Let us consider the composite
bundle (\ref{b3223a}):
\mar{sp230}\beq
LX \to \Si(m,n-m) \to X \label{sp230}
\eeq
where
\mar{sp231}\beq
LX\ar^{\pi_{P\Si}} \Si(m,n-m) \label{sp231}
\eeq
is a principal bundle with a structure group $O(m,n-m)$. Let us
consider its principal $\mathrm{Pin}(m,n-m)$-cover (\ref{sp223}):
\mar{sp232}\beq
\wt P_\Si\to \Si(m,n-m) \label{sp232}
\eeq
if it exists. Then, given a global section $h$ of $\Si(m,n-m) \to
X$ (\ref{sp226}), the pull-back $h^*\wt P_\Si$ is a subbundle of
$\wt P_\Si\to X$ which a $\mathrm{Pin}(m,n-m)$-cover
\be
h^*\wt P_\Si\to L^hX.
\ee

Owing to the canonical monomorphism (\ref{sp200}) of Clifford
algebras, there is the group monomorphism $\mathrm{Pin}(m,n-m)\to
\cG\mathbb C\cC(n)$. Due to this monomorphism, there exists a
principal $\cG\mathbb C\cC(n)$-bundle
\mar{sp245}\beq
\wt LX_\Si\to \Si(m,n-m). \label{sp245}
\eeq
Let
\mar{sp246}\beq
\lC_\Si\to \Si(m,n-m) \label{sp246}
\eeq
be the associated Clifford algebra bundle. It contains a subbundle
\mar{sp247}\beq
\lC_\Si(m,n-m)\to \Si(m,n-m)\label{sp247}
\eeq
and the spinor subbundles
\mar{sp248}\beq
S_\Si\to \Si(m,n-m). \label{sp248}
\eeq

\begin{theorem} \label{sp511} \mar{sp511}
Given a global section $h$ of $\Si(m,n-m) \to X$ (\ref{sp226}),
the pull-back bundles $h^*\lC_\Si\to X$, $h^*\lC_\Si(m,n-m)\to X$
and $h^*S_\Si\to X$ are subbundles of the composite bundles
$\lC_\Si\to X$, $\lC_\Si(m,n-m)\to X$ and $S_\Si\to X$ and are the
bundles $\lC^h\to X$ (\ref{sp240}), $\lC^h(m,n-m)\to X$
(\ref{sp241}) and $S^h\to X$, respectively.
\end{theorem}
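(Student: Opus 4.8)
The plan is to track how each of the three constructions behaves under the pull-back functor $h^{*}$ and to match the outcome with the bundles built directly over $X$ in Section 6.1. Since the statements for $\lC_\Si$, $\lC_\Si(m,n-m)$ and $S_\Si$ are formally parallel, I would prove them simultaneously, carrying the three typical fibres $\mathrm{Mat}(2^{n/2},\mathbb C)$, $\cC(m,n-m)$ and $\Psi(n)$ through the argument at once. First I would recall, applying Theorem \ref{redsub} and Corollary \ref{5101c} to the composite bundle (\ref{sp230}), that a global section $h$ of $\Si(m,n-m)\to X$ (\ref{sp226}) produces the reduced principal $O(m,n-m)$-subbundle $L^hX=\pi_{P\Si}^{-1}(h(X))=h^{*}(LX_\Si)$, where $LX_\Si$ denotes $LX$ regarded as a principal $O(m,n-m)$-bundle over $\Si(m,n-m)$ via (\ref{sp231}).

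Next, the covering epimorphism $\wt P_\Si\to LX_\Si$ (the $\mathrm{Pin}(m,n-m)$-cover (\ref{sp232})) is a morphism of principal bundles over $\Si(m,n-m)$ lying over the group epimorphism $\zeta:\mathrm{Pin}(m,n-m)\to O(m,n-m)$; since $h^{*}$ is functorial and commutes with morphisms of principal bundles (Remarks \ref{52r1} and \ref{52r1'}), $h^{*}\wt P_\Si\to h^{*}LX_\Si=L^hX$ is again a $\mathrm{Pin}(m,n-m)$-cover, that is, $h^{*}\wt P_\Si=\wt L^hX$. The principal $\cG\mathbb C\cC(n)$-bundle $\wt LX_\Si$ (\ref{sp245}) arises from $\wt P_\Si$ by extension of the structure group along the monomorphism $\mathrm{Pin}(m,n-m)\to\cG\mathbb C\cC(n)$ induced by (\ref{sp200}); since $h^{*}$ commutes with the associated-bundle construction, $h^{*}\wt LX_\Si=(h^{*}\wt P_\Si\times\cG\mathbb C\cC(n))/\mathrm{Pin}(m,n-m)=(\wt L^hX\times\cG\mathbb C\cC(n))/\mathrm{Pin}(m,n-m)$, which is precisely the principal $\cG\mathbb C\cC(n)$-bundle whose reduced $\mathrm{Pin}(m,n-m)$-subbundle is $\wt L^hX$, namely the bundle $P$ used to build $\lC^h$ in (\ref{sp240}). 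Applying the commutation of $h^{*}$ with forming $P$-associated bundles once more, now to the typical fibres $\mathrm{Mat}(2^{n/2},\mathbb C)$, $\cC(m,n-m)$ and $\Psi(n)$, gives $h^{*}\lC_\Si=\lC^h$ (\ref{sp240}), $h^{*}\lC_\Si(m,n-m)=\lC^h(m,n-m)$ (\ref{sp241}) and $h^{*}S_\Si=S^h$; the fibrewise representation morphisms (\ref{sp111}) are transported along as well, being defined over the common structure group.

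Finally, for the subbundle clause I would invoke the general property of composite bundles used around (\ref{b3226}): given a composite bundle $Z\to\Si\to X$ and a section $h$ of $\Si\to X$, the pull-back $h^{*}Z$ is canonically the restriction $\pi_{Z\Si}^{-1}(h(X))$ of $Z$ to $h(X)\subset\Si$, hence a subbundle of the composite bundle $Z\to X$; this applies verbatim to $Z=\lC_\Si$, $\lC_\Si(m,n-m)$, $S_\Si$. The argument is mostly bookkeeping; the step that most needs care is the claim that pulling back along $h$ commutes with passing to the $\mathrm{Pin}$-cover, which rests on the functoriality of $h^{*}$ together with the standing hypothesis (orientability of $X$ and triviality of $w_2(X)$, as in Section 6.1) guaranteeing that $\wt P_\Si$, and hence $\wt LX_\Si$, exist globally over $\Si(m,n-m)$.
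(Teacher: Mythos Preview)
Your proposal is correct and follows the same line of reasoning as the paper. The paper itself states Theorem~\ref{sp511} without a separate proof environment: the argument is the construction carried out in the paragraphs immediately preceding it, culminating in the observation that $h^*\wt P_\Si$ is a subbundle of $\wt P_\Si\to X$ and a $\mathrm{Pin}(m,n-m)$-cover $h^*\wt P_\Si\to L^hX$, from which the associated-bundle identifications follow. Your write-up spells out exactly these steps---functoriality of $h^*$ on principal bundles, its commutation with extension of structure group and with the associated-bundle construction, and the restriction interpretation (\ref{b3226}) for the subbundle clause---so it is a fleshed-out version of the paper's sketch rather than a different approach.
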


As was mentioned above, this is just the case of gravitation
theory that we study in forthcoming Part II of our work.

\end{document}